\theoremstyle{plain}
\newtheorem{theorem}{Theorem}
\newtheorem{claim}{Claim}
\newtheorem{prop}{Proposition}
\newtheorem{lemma}{Lemma}
\newenvironment{lemnb}[1]
  {%
   \addtocounter{lemma}{-1}%
   \begin{lemma}}
  {\end{lemma}}
\newenvironment{smallblock}{\scriptsize}{\par}
\newtheorem{coro}{Corollary}
\newtheorem{remark}{Remark}
\newtheorem{example}{Example}
\theoremstyle{definition}
\newtheorem{definition}{Definition}
\newcommand{\sag}{i}
\newcommand{\ag}{n}
\newcommand{\ut}{u}
\newcommand{\vt}{v}
\newcommand{\stg}{\sigma}
\newcommand{\stgp}{\Sigma}
\newcommand{\sigi}{s}
\newcommand{\Sigrv}{\Psi}
\newcommand{\Sigset}{\mathcal{S}} 
\newcommand{\ps}{PS} 
\newcommand{\eps}{\tilde{PS}} 
\newcommand{\prQ}{\vpr} 
\newcommand{\pr}{q} 
\newcommand{\vpr}{\mathbf{q}} 
\newcommand{\Ex}{\mathbb{E}}
\newcommand{\rp}{a} 
\newcommand{\rpset}{\mathcal{A}} 
\newcommand{\sigp}{S}
\newcommand{\rpp}{A}
\newcommand{\rwd}{R} 
\newcommand{\bpl}{\beta_{\ell}}
\newcommand{\bph}{\beta_{h}}
\newcommand{\astg}{\bar{\stg}} 
\newcommand{\abpl}{\bar{\bpl}}
\newcommand{\abph}{\bar{\bph}}
\newcommand{\aut}{\bar{\ut}}
\newcommand{\astgp}{\bar{\stgp}}
\newcommand{\nbph}{\hat{\bph}}
\newcommand{\nbpl}{\hat{\bpl}}
\newcommand{\kd}{k} 
\newcommand{\qi}{interim}
\newcommand{\ds}{\delta} 
\newcommand{\jtruthful}{\text{truthful}}
\newcommand{\jdeviate}{\text{deviator}}
\newcommand{\dut}{\Delta u}
\newcommand{\func}{f}
\newcommand{\gunc}{g}
\newcommand{\munc}{m}
\newcommand{\kde}{k_E}
\newcommand{\kdq}{k_B}
\newcommand{\maxdps}{\overline{\Delta \ps}}
\newcommand{\dpsh}{\Delta h}
\newcommand{\dpsl}{\Delta \ell}
\newcommand{\bphth}{\bar{b}_h}
\newcommand{\bphtl}{\bar{b}_l}
\newcommand{\inst}{\mathcal{I}}
\begin{document}

\title{Strong Equilibria in Bayesian Games with Bounded Group Size}

\author{Qishen Han}
\affiliation{%
  \institution{Rutgers University}
  \city{Piscataway}
  \state{NJ}
  \country{United States}
}
\email{hnickc2017@gmail.com}
\orcid{0000-0003-0268-6918}

\author{Grant Schoenebeck}
\affiliation{%
  \institution{University of Michigan}
  \city{Ann Arbor}
  \state{MI}
  \country{United States}
}
\email{schoeneb@umich.edu}
\orcid{0000-0001-6878-0670}

\author{Biaoshuai Tao}
\affiliation{%
  \institution{Shanghai Jiao Tong University}
  \city{Shanghai}
  \country{China}
}
\email{bstao@sjtu.edu.cn}
\orcid{0000-0003-4098-844X}

\author{Lirong Xia}
\affiliation{%
  \institution{Rutgers University and DIMACS} 
  \city{Piscataway}
  \state{NJ}
  \country{United States}
}
\email{xialirong@gmail.com}
\orcid{0000-0002-9800-6691}

\renewcommand{\shortauthors}{Han et al.}

\begin{abstract}
   We study the group strategic behaviors in Bayesian games. Equilibria in previous work do not consider group strategic behaviors with bounded sizes and are too ``strong'' to exist in many scenarios. 
We propose the ex-ante Bayesian $\kd$-strong equilibrium and the Bayesian $\kd$-strong equilibrium, where no group of at most $\kd$ agents can benefit from deviation. The two solution concepts differ in how agents calculate their utilities when contemplating whether a deviation is beneficial. Intuitively, agents are more conservative in the Bayesian $\kd$-strong equilibrium than in the ex-ante Bayesian $\kd$-strong equilibrium. With our solution concepts, we study collusion in the peer prediction mechanisms, as a representative of the Bayesian games with group strategic behaviors. We characterize the thresholds of the group size $\kd$ so that truthful reporting in the peer prediction mechanism is an equilibrium for each solution concept, respectively. Our solution concepts can serve as criteria to evaluate the robustness of a peer prediction mechanism against collusion. Besides the peer prediction problem, we also discuss two other potential applications of our new solution concepts, voting and Blotto games, where introducing bounded group sizes provides more fine-grained insights into the behavior of strategic agents. 
\end{abstract}

\begin{CCSXML}
<ccs2012>
   <concept>
       <concept_id>10003752.10010070.10010099.10010102</concept_id>
       <concept_desc>Theory of computation~Solution concepts in game theory</concept_desc>
       <concept_significance>500</concept_significance>
       </concept>
   <concept>
       <concept_id>10003752.10010070.10010099.10010101</concept_id>
       <concept_desc>Theory of computation~Algorithmic mechanism design</concept_desc>
       <concept_significance>500</concept_significance>
       </concept>
   <concept>
       <concept_id>10003752.10010070.10010099.10010100</concept_id>
       <concept_desc>Theory of computation~Algorithmic game theory</concept_desc>
       <concept_significance>500</concept_significance>
       </concept>
 </ccs2012>
\end{CCSXML}

\ccsdesc[500]{Theory of computation~Solution concepts in game theory}
\ccsdesc[500]{Theory of computation~Algorithmic mechanism design}
\ccsdesc[500]{Theory of computation~Algorithmic game theory}

\keywords{Algorithmic Game Theory, Peer Prediction}


\maketitle

\section{Introduction}
{The Bayesian game model~\citep{Harsanyi67} is a powerful theoretical tool for analyzing agents' strategic behavior with incomplete information. It has been applied to a wide range of real-world scenarios, including auctions, eliciting information, marketing, and collective decision-making, just to name a few. 
In a Bayesian game, rational agents receive private information (named {\em types}) and act strategically to optimize the outcome in expectations conditioned on their types. 

In Bayesian games, rational and self-interested agents may behave strategically and deviate from the intentions of the mechanisms. Moreover, in many real-world scenarios, agents may coordinate their strategic behavior to collectively benefit. For example, bidders may coordinate to place low bids in the auction and drive down prices, voters may collaborate to cast votes strategically, and agents may conspire to gain a higher payment in peer prediction systems. 

\begin{example}[{\bf Group strategic behavior in peer prediction}{}]
\label{ex:motive}
Consider an online crowdsourcing (for example, image labeling) group. 
A peer prediction mechanism is applied to evaluate the quality of worker reports and calculate their rewards. The reward of a worker is calculated by comparing his/her report with another worker's (called a peer) report. 

In a peer prediction mechanism, a worker usually gets a higher reward when his/her report has higher agreement with the peer's report. Therefore, a group of workers can benefit by colluding in advance and reporting the same answer to receive higher payments.  However, the information collector desires to design a mechanism to prevent collusion and collect truthful reports from agents. 
\end{example}

Therefore, the importance lies in answering the following research question:
\begin{center}
   {\bf How can we predict the outcome of Bayesian games with group strategic behaviors?} 
\end{center}

Previous literature~\citep{ichiishi1996bayesian,schoenebeck21wisdom,guo2022robust} have developed ``strong'' or ``coalitional'' equilibria, in an analogy of strong Nash equilibrium~\citep{Aumann59:Acceptable}, to predict group strategic behaviors in the Bayesian game model, in which no group of agents shall benefit from strategic deviation.
However, most solution concepts allow an arbitrary size of the strategic group and fall into the same criticism of ``being too strong'' as the strong Nash equilibrium. For example, \citet{Gao2019incentivizing} show that truth-telling cannot be a strong equilibrium in many peer prediction mechanisms, and \citet{han2023wisdom} show that a strong equilibrium may not exist in majority voting with incomplete information.

On the other hand, group strategic behaviors usually happen with a bounded size in real life. For example, a player will only collude with his/her friends or trusted players, and a mechanism that prevents deviations from a bounded size of groups is sufficient to collect truthful reports in most cases~\citep{Shnayder2016practcal}. Moreover, instead of the ``all-or-nothing'' characterizations, studying equilibria with coalitions of a bounded size provides richer structures that enable more constructive solutions to many economic problems.
However, whether a bounded size of group deviation exists is not characterized by the ``strong'' equilibria in the previous work. 



\subsection{Our Contribution}

We propose two solution concepts in which group strategic behaviors with a bounded number of agents are considered.
In an ex-ante Bayesian $\kd$-strong equilibrium, no group with at most $\kd$ agents can deviate from the equilibrium strategy so that every group member gets a higher ex-ante expected utility, i.e., the expected utility before agents know their types. In a Bayesian $\kd$-strong equilibrium, no group with at most $\kd$ agents can deviate from the equilibrium strategy so that every group member gets a higher expected utility conditioned on every type. 

The difference between the two solution concepts is how agents calculated their expected utilities when contemplating whether a deviation is beneficial. 
We interpret this difference as different attitudes of agents towards deviations.
In the ex-ante Bayesian $\kd$-strong equilibrium, a group of agents deviates once the deviation is profitable in the ex-ante expectation. In the Bayesian $\kd$-strong equilibrium, an agent is assumed to be more conservative towards deviation and will deviate only when the deviation is profitable conditioned on every possible type. 
Proposition~\ref{prop:etoq} shows that an ex-ante Bayesian $\kd$-strong equilibrium implies a Bayesian $\kd$-strong equilibrium.

Our technical contributions lie in the study of the collusion problem in peer prediction mechanisms, as a representative of group strategic behavior in Bayesian games, with our solution concepts. We exactly characterize the group sizes $\kd$ where the truthful reporting in the peer prediction mechanism by~\citep{Miller05:Eliciting} is an ex-ante Bayesian $\kd$-strong equilibrium (Theorem~\ref{thm:pp_exante}) and a Bayesian $\kd$-strong equilibrium (Theorem~\ref{thm:pp_qi}), respectively.  In each case, we show a threshold so that group sizes below this threshold cannot benefit by deviating while group sizes above this threshold can.  In general, these thresholds are different for the two types of equilibria we consider.   
Our thresholds are characterized by the parameters of the game, including the number of agents, common prior, and the scoring adopted by the mechanism. Our result implies that our equilibria parameterized by $\kd$ are natural criteria to evaluate the robustness against collusion for a peer prediction mechanism. If truth-telling is an equilibrium with a larger $\kd$, the mechanism is more robust against collusion. In the application of the peer prediction mechanism, the scoring rule and the mechanism that maximizes the threshold could be chosen to prevent a wider range of collusion. 

We also discuss two other possible scenarios where our solution concept may apply. In the voting scenario where voters only have partial information about the alternatives, it is known that strong equilibria with unlimited coalition sizes may fail to exist when there is a sufficiently large group of voters whose preferences are not aligned with the rest of the voters~\cite{deng2024aggregation}. However, the sizes of the deviating groups are typically large in those non-equilibria. Given that it is unlikely for large numbers of voters to collaborate in large elections, it is therefore appealing to study equilibria with bounded deviating groups and obtain more informative results. In the private Blotto game~\citep{donahue2023private}, social media users with noisy information choose to annotate for/against one of the multiple posts. Agents aim to maximize the overall influence of their type on the posts. Our notion interpolates the centralized Colonel Blotto game and the decentralized private Blotto game. The parameter $\kd$ becomes an evaluation to characterize scenarios where agents have different centralization levels, where a higher $\kd$ represents a higher ability for agents to coordinate and for their type.

\subsection{Related Work}
Previous work studies group strategic behavior in Bayesian games under different scenarios. \citet{hahn2001coalitional} and \citet{safronov2018coalition} study coalitional implementation problems under an exchange economy with a strong equilibrium. \citet{ichiishi1996bayesian} and \citet{ichiishicooperative} propose the Bayesian strong equilibrium and study its relationship with cooperative game theory.  \citet{schoenebeck21wisdom}, \citet{han2023wisdom}, and \citet{deng2024aggregation} adopt an approximated version of strong equilibrium to study information aggregation and voting with incomplete information. Nevertheless, none of these works characterizes group strategic behaviors with a bounded size of the group. 
\citet{guo2022robust} proposed a coalitional interim equilibrium in which the set of admissible coalitions can be arbitrarily exogenously given. Their solution concept covers a wider range of admissible coalitions than our paper. However, truthful reporting is such an equilibrium only when agents are also coalitionally truthful when they know the report of all other agents, which does not hold for most peer prediction mechanisms. In our setting, truthful reporting fails to be such an equilibrium even with a constant coalition size under mild assumptions (Appendix~\ref{apx:guo}).
\citet{Abraham2008:lower} proposes a $k$-coalitional equilibrium where the deviators are allowed to arbitrarily share private information, which may not be applied to many real-world scenarios. For example, the organizer can randomly assign tasks or set limited response periods to prevent agents from arbitrary communication. Moreover, truthful reporting also fails to be such an equilibrium even with a constant coalition size, as signal sharing updates the deviators' beliefs and drives them to different strategies. 
Our ex-ante Bayesian $\kd$-strong equilibrium is related to the equilibrium in~\citep{schoenebeck21wisdom, han2023wisdom,deng2024aggregation}, and our Bayesian $\kd$-strong equilibrium is an extension of the Bayesian strong equilibrium in~\citep{ichiishi1996bayesian}. In the game with complete information, \citet{Aumann59:Acceptable} propose the strong Nash equilibrium in which no group of agents has an incentive to deviate. The strong Nash equilibrium (and its variants) has been applied to study group strategic behavior in many scenarios such as congesting game~\citep{holzman1997strong,yin2011nash,harks2012existence}, voting~\citep{desmedt2010equilibria, barbera2001voting,rabinovich2015analysis}, and Markov game~\citep{clempner2015computing,clempner2020finding}. \citet{Abraham2006:distributed} studies $\kd$-coalitional strategic behavior under games with complete information. However, a strong Nash equilibrium does not apply to Bayesian games where the information is incomplete.

Our paper is also related to studying the collusion problem in peer prediction mechanisms.  Because the appropriate theoretical definitions have not been available, collusion has not been studied explicitly in theoretical peer prediction work. However, many works touch on related concepts.  Intuitively, equilibrium selection is related to collusion because agents can coordinate to choose an equilibrium that is bad for the mechanism.  \citet{Gao2014trick} empirically showed this to be a problem, while \citet{Gao2019incentivizing} shows that agents may also coordinate on a low-effort signal.   The problem of equilibrium selection is exacerbated by the inevitable existence of uninformative equilibria~\citep{Jurca07:Collusion,Jurca2009mechanism,Waggoner2014output}.  Many papers address the problem by developing mechanisms where truthful reporting is more profitable than uninformative collusions~\citep{Jurca07:Collusion,Jurca2009mechanism,Dasgupta2013crowd, witkowski2013learning,kong18selection,radanovic2015incentive,prelec2004bayesian}.  More powerfully, works have shown that the truthful equilibrium has the highest possible payments either among all equilibrium \citep{Kong16put} or even among all strategies profiles~\citep{shnayder2016informed,KongS19,ZhangS2023multitask}. However, all the latter results consider multi-task peer prediction, while no single-task peer prediction mechanisms have been discovered to have the same merit. Moreover, none of these works study the collusion problem from the perspective of strong equilibrium. \citet{SchoenebeckYZ2021WWW} studied a more extreme case where the goal was to design peer prediction mechanisms that are robust against an adversary that controls a constant fraction of the nodes.  The present work is different because the deviating groups are required to be strategic and not purely malicious. 

Several works study collusion using simulations and measuring how many agents must deviate before truth-telling fails to be the best response for the remaining agents~\citep{Shnayder2016practcal,BurrellS2021measurement} or so that certain dynamics fail to converge back to truth-telling~\citep{Shnayder2016measuring}.  This shows that while the problem is interesting, the theoretical tools available for prior work were insufficient.   

}

\section{Preliminaries}
{For an integer $n$, let $[n]$ denote the set $\{1,2, \cdots, n\}$. For a finite set $A$, let $|A|$ be the number of elements in $A$, and $\Delta_{A}$ denote the set of all distributions on $A$. 

\paragraph{Proper Scoring Rule} Given a finite set $\Sigset$, a scoring rule $\ps: \Sigset\times \Delta_{\Sigset} \to \mathbb{R}$ maps an element $\sigi\in\Sigset$ and a distribution $\vpr$ on $\Sigset$ to a score. A scoring rule $PS$ is {\em proper} if for any distributions $\vpr_1$ and $\vpr_2$, $\Ex_{\sigi \sim \vpr_1}[\ps(\sigi, \vpr_1)] \ge \Ex_{\sigi \sim \vpr_1}[\ps(\sigi, \vpr_2)]$ and {\em strictly proper} if the equality holds only at $\vpr_1 = \vpr_2$. 
\begin{example}
    Given a distribution $\prQ$ on a finite set $\Sigset$, let $\pr(s)$ be the probability of $s\in \Sigset$ in $\prQ$. The log score rule $\ps_L(s, \prQ) = \log (\pr(s))$. The Brier/quadratic scoring rule $\ps_B(s, \prQ) = 2\cdot \pr(s) - \prQ\cdot \prQ$. Both the log scoring rule and the Brier scoring rule are strictly proper. 
\end{example}

\subsection{Bayesian Game Model}
A Bayesian game $\inst = ([n], (\rpset_i)_{i \in [n]}, (\Sigset_i)_{i\in[n]}, (\vt_i)_{i\in [n]}, \prQ)$ is defined by the following components. 
\begin{itemize}
    \item The set of agents $[n]$. 
    \item For each agent $i$, $\rpset_i$ is the set of available actions of $i$. The action profile $\rpp = (\rp_1, \rp_2, \cdots, \rp_\ag)$ is the vector of actions of all the agents. 
    \item For each agent $i$, $\Sigset_i$ is the set of possible types of agent $i$. The type characterizes the private information agent $i$ holds, and the agent can only observe his/her type in the game. The type vector $\sigp = (\sigi_1, \sigi_2, \cdots, \sigi_\ag)$ is the vector of types of all agents. 
    \item For each agent $i$, $\vt_i: \Sigset_i \times \rpset_1\times \cdots \times \rpset_n \to \mathbb{R}$ is $i$'s utility function that maps $i$'s type and the action of all the agents to $i$'s utility. 
    \item A {\em common prior} that the types of the agents follow is a joint distribution $\prQ$. For a signal $\sigi_i$ of agent $i$, we use $\pr(\sigi_i)$ to denote the marginal prior probability that $i$'s signal is $\sigi_i$. We assume that $\pr(\sigi_i) > 0$ for any $i$ and any $\sigi_i \in \Sigset_i$. 
\end{itemize}

For each agent $i$, a (mixed) strategy $\stg_i: \Sigset_i \to \Delta_{\rpset_i}$ maps $i$ private signal to a distribution on his/her actions. A strategy profile $\stgp = (\stg_i)_{i \in [n]}$ is a vector of the strategies of all the agents. 

Given a strategy profile $\stgp$, the {\em ex-ante} expected utility of agent $i$ is 
\begin{equation*}
    \ut_i(\stgp) = \Ex_{S \sim \prQ}\ \Ex_{A}[\vt_i(\sigi_i, \rp_1, \cdots, \rp_n)\mid \stgp].
\end{equation*}

Similarly, given a strategy profile $\stgp$ and a type $\sigi_i$, the {\em \qi{}} expected utility of agent $i$ conditioned on his/her type being $\sigi_i$ is 
\begin{equation*}
    \ut_i(\stgp \mid \sigi_i) = \Ex_{S_{-i} \sim \prQ_{-i\mid \sigi_i}}\ \Ex_{A}[\vt_i(\sigi_i, \rp_1, \cdots, \rp_n)\mid \stgp],
\end{equation*}
where $\sigp_{-i}$ is the type vector of all agents except for agent $i$, and $\prQ_{-i\mid \sigi_i}$ is the joint distribution on $\sigp_{-i}$ conditioned on agent $i$'s signal being $\sigi_{i}$. 

\subsection{(Ex-ante) Bayesian \textit{k}-Strong Equilibrium}
In this paper, we focus on agents that coordinate for strategic behaviors before they know their types. This assumption relates to various constraints in real-world scenarios that prevent agents from discussions after knowing their types. 
\begin{example}
    Consider the online crowdsourcing group in Example~\ref{ex:motive}. The website requires workers to make an immediate report after seeing the task so that workers cannot communicate with each other after they know their types. (For example, workers have to submit the report in 30 seconds to reflect their intuition.) However, workers may collude on the same report before seeing the task.
\end{example}
Both equilibria share the same high-level form: there does not exist a group of $\kd$ agents and a deviating strategy such that all the deviators' expected utility in the deviation is as good as the equilibrium strategy profile and at least one deviator's expected utility strictly increases. The difference lies in the expected utility. Ex-ante Bayesian $\kd$-strong equilibrium adopts ex-ante expected utility, while Bayesian $\kd$-strong equilibrium adopts interim expected utility on every type. 

\begin{definition}[ex-ante Bayesian $\kd$-strong equilibrium]

\label{def:ex_ante}
    Given an integer $\kd \ge 1$, a strategy profile $\stgp$ is an ex-ante Bayesian $k$-strong equilibrium ($\kd$-EBSE) if there does not exist a group of agent $D$ with $|D| \le \kd$ and a different strategy profile $\stgp' = (\stg'_{\sag})$ such that 
    \begin{enumerate}
    \item for all agent $i \not \in D$, $\stg'_{\sag} = \stg_{\sag}$; 
    \item for all $\sag\in D$, $ \ut_i(\stgp') \ge  \ut_i(\stgp)$;
    \item there exists an $\sag\in D$ such that $\ut_i(\stgp') > \ut_i(\stgp)$. 
\end{enumerate}
\end{definition}

\begin{definition}[Bayesian $\kd$-strong equilibrium]
\label{def:qi}
    Given an integer $\kd \ge 1$, a strategy profile $\stgp$ is a Bayesian $\kd$-strong equilibrium ($\kd$-BSE) if there does not exist a group of agent $D$ with $|D| \le k$ and a different strategy profile $\stgp' = (\stg'_{\sag})$ such that 
    \begin{enumerate}
    \item for all agent $i \not \in D$, $\stg'_i = \stg_i$; 
    \item for every $\sag\in D$ and every $\sigi_i \in \Sigset_i$, $ \ut_i(\stgp'\mid \sigi_i) \ge  \ut_i(\stgp\mid \sigi_i)$;
    \item there exist an $i\in D$ and an $\sigi_i \in \Sigset_i$ such that $\ut_i(\stgp'\mid \sigi_i) > \ut_i(\stgp\mid \sigi_i)$. 
\end{enumerate}
\end{definition}


In both solution concepts, if such a deviating group $D$ and a strategy profile $\stgp'$ exist, we say that the deviation succeeds.

When $\kd = 1$, both ex-ante Bayesian $1$-strong equilibrium and Bayesian $1$-strong equilibrium are equivalent to the Bayesian Nash equilibrium~\citep{Harsanyi67}. (See Appendix~\ref{apx:equiv}.) However, the two solution concepts are not equivalent for larger $\kd$. Example~\ref{ex:difference} illustrates a scenario in the peer prediction mechanism where the same deviation succeeds under the ex-ante Bayesian $\kd$-strong equilibrium but fails under the Bayesian $\kd$-strong equilibrium. 

We interpret the difference between the two solution concepts as different attitudes of agents towards deviations. Agents are assumed to be more conservative, i.e., unwilling to suffer loss, towards deviations under Bayesian $k$-strong equilibrium, as they will deviate only when the deviation brings them higher interim expected utility conditioned on every type. On the other hand, agents under the ex-ante Bayesian $k$-strong equilibrium will deviate once their ex-ante expected utility increases.  Proposition~\ref{prop:etoq} supports our interpretation by revealing that an ex-ante Bayesian $\kd$-strong equilibrium implies a Bayesian $\kd$-strong equilibrium. 

\begin{prop}
\label{prop:etoq}
    For every strategy profile $\stgp$ and every $1\le \kd \le \ag$, if $\stgp$ is an ex-ante Bayesian $\kd$-strong equilibrium, then $\stgp$ is a Bayesian $\kd$-strong equilibrium. 
\end{prop}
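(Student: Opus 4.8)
The plan is to argue by contraposition: assume $\stgp$ is \emph{not} a Bayesian $\kd$-strong equilibrium, and show it is not an ex-ante Bayesian $\kd$-strong equilibrium by re-using the very same witnessing group and deviation. Concretely, let $D$ with $|D|\le\kd$ and $\stgp'=(\stg'_\sag)$ be a group and strategy profile whose existence is guaranteed by the failure of the $\kd$-BSE condition, so that (i) $\stg'_\sag=\stg_\sag$ for all $\sag\notin D$, (ii) $\ut_\sag(\stgp'\mid\sigi_\sag)\ge\ut_\sag(\stgp\mid\sigi_\sag)$ for every $\sag\in D$ and every $\sigi_\sag\in\Sigset_\sag$, and (iii) $\ut_{\sag^\ast}(\stgp'\mid\sigi_{\sag^\ast}^\ast)>\ut_{\sag^\ast}(\stgp\mid\sigi_{\sag^\ast}^\ast)$ for some $\sag^\ast\in D$ and some $\sigi_{\sag^\ast}^\ast\in\Sigset_{\sag^\ast}$.

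The only ingredient linking the two utility notions is the law of total expectation: for any strategy profile $\stgp''$ and any agent $\sag$, conditioning the ex-ante expectation on $\sag$'s own type yields
\begin{equation*}
\ut_\sag(\stgp'')\;=\;\sum_{\sigi_\sag\in\Sigset_\sag}\pr(\sigi_\sag)\,\ut_\sag(\stgp''\mid\sigi_\sag).
\end{equation*}
I would first check this identity directly from the definitions of $\ut_\sag(\cdot)$ and $\ut_\sag(\cdot\mid\cdot)$, using that the prior on $\sigp$ decomposes as the marginal on $\sigi_\sag$ followed by the conditional $\prQ_{-\sag\mid\sigi_\sag}$ on $\sigp_{-\sag}$. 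Applying this identity to $\stgp'$ and to $\stgp$ and summing the per-type inequalities (ii), weighted by the nonnegative marginals $\pr(\sigi_\sag)$, gives $\ut_\sag(\stgp')\ge\ut_\sag(\stgp)$ for every $\sag\in D$, which is condition~2 of Definition~\ref{def:ex_ante}; condition~1 of Definition~\ref{def:ex_ante} is literally (i).

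For condition~3 of Definition~\ref{def:ex_ante} I would invoke the standing model assumption that $\pr(\sigi_\sag)>0$ for every agent and every type. Then in the weighted sum for agent $\sag^\ast$ the term indexed by $\sigi_{\sag^\ast}^\ast$ is strictly larger under $\stgp'$ than under $\stgp$ while every other term is weakly larger, so $\ut_{\sag^\ast}(\stgp')>\ut_{\sag^\ast}(\stgp)$. Hence $D$ and $\stgp'$ witness that $\stgp$ is not a $\kd$-EBSE, completing the contrapositive. I do not foresee a real obstacle; the only step that needs care is this last one — strict positivity of the marginal prior is exactly what allows a \emph{single} strict interim improvement to propagate to a strict ex-ante improvement, and the statement would fail without the assumption $\pr(\sigi_\sag)>0$.
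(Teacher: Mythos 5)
Your proof is correct and uses essentially the same idea as the paper: the law of total expectation $\ut_i(\stgp'') = \sum_{\sigi_i}\pr(\sigi_i)\,\ut_i(\stgp''\mid\sigi_i)$ together with the standing assumption $\pr(\sigi_i)>0$. You phrase it contrapositively (a successful interim deviation yields a successful ex-ante deviation with the same witness $D$ and $\stgp'$) while the paper argues directly, but the two are logically equivalent and your version is, if anything, slightly more careful about where strict positivity of the marginals is needed.
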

\begin{proof}
    Suppose $\stgp'$ is an arbitrary deviating profile from $\stgp$ with no more than $\kd$ deviators, and $i$ is an arbitrary deviator in $\stgp'$. 
    Since $\stgp$ is an ex-ante Bayesian $\kd$-strong equilibrium, then $\ut_i (\stgp') \le \ut_i (\stgp) $. By the law of total probability,  
    $\ut_i(\stgp) = \sum_{\sigi_i \in \Sigset_i} \prQ(\sigi_i)\cdot \ut_i(\stgp \mid \sigi_i)$. 
    Therefore, one of the following must hold: (1) for all $\sigi\in \Sigset_i$, $\ut_i (\stgp' \mid \sigi_i) = \ut_i (\stgp \mid \sigi_i)$, or (2) there exists a $\sigi\in\Sigset_i$, $\ut_i (\stgp' \mid \sigi_i) < \ut_i (\stgp \mid \sigi_i)$. In either case, the deviation fails. Therefore, $\stgp$ is a Bayesian $\kd$-strong equilibrium. 
\end{proof}

\subsection{Peer Prediction Mechanism}
In a peer prediction mechanism, each agent receives a private signal in $\Sigset = \{\ell, h\}$ and reports it to the mechanism. All the agents share the same type set $\Sigset_i = \Sigset$ and action set $\rpset_i = \Sigset$. 

$\prQ$ is the common prior joint distribution of the signals. Let $\Sigrv_{\sag}$ denote the random variable of agent $i$'s private signal.
We assume that the common prior $\prQ$ is symmetric --- for any permutation $\pi$ on $[\ag]$, $\prQ(\Sigrv_1=\sigi_1, \Sigrv_2 = \sigi_2,\cdots, \Sigrv_\ag = \sigi_\ag)=\pr(\Sigrv_1=\sigi_{\pi(1)}, \Sigrv_2 = \sigi_{\pi(2)},\cdots, \Sigrv_\ag = \sigi_{\pi(\ag)})$. 

$\pr(\sigi)$ is the prior marginal belief that an agent has signal $\sigi$, and $\pr(\sigi \mid \sigi')$ be the posterior belief of an agent with private signal $\sigi'$ on another agent having signal $\sigi$. We also define $\vpr_{\sigi} = \pr(\cdot \mid \sigi)$ be the marginal distribution on $\Sigset$ conditioned on $\sigi$. We assume that an agent with $h$ signal has a higher estimation than an agent with $\ell$ signal on the probability that another agent has $h$ signal, i.e., $\pr(h\mid h) > \pr(h \mid \ell)$. We also assume that any pair of signals is not fully correlated, which is $\pr(h\mid \ell) > 0$ and $\pr(\ell \mid h) > 0$. 

We adopt a modified version of the peer prediction mechanism~\citep{Miller05:Eliciting} characterized by a (strictly) proper scoring rule $\ps$. The mechanism compares the report of agent $i$, denoted by $\rp_i$, with the reports of all other agents. For each agent $j$ with report $\rp_j$, the reward $i$ gains from comparison with $j$'s report is $\rwd_i(\rp_j) = \ps(\rp_j, \vpr_{\rp_i}).$
The utility of agent $i$ is the average reward from each $j$.
\begin{equation*}
    \vt_i(\sigi_i, \rpp) = \frac{1}{\ag-1}\sum_{j\in[n], j\neq i} \rwd_i(\rp_j). 
\end{equation*}
\begin{remark}
    In the original mechanism in~\citep{Miller05:Eliciting}, the reward of an agent $i$ is $\rwd_i(\rp_j)$, where $j$ is chosen uniformly at random from all other agents. We derandomize the mechanism so that it fits better into the Bayesian game framework while the expected utility of an agent is unchanged. 
\end{remark}

\begin{example}
    \label{ex:setting}
    Suppose $n = 100$. For the common prior, the prior belief $\pr(h) = 2/3$, and $\pr(\ell) = 1/3$. The posterior belief $\pr(h \mid h) = 0.8$ and $\pr(\ell \mid \ell) = 0.6$. Suppose the Brier scoring rule is applied to the peer prediction mechanism. Consider an agent $i$ with report $\rp_i = h$. Then, $i$'s reward from a peer $j$ with report $\rp_j = h$ is $\rwd_i(\rp_j) = \ps_B(h, \prQ_h) = 2\cdot \pr(h \mid h) - \pr(h\mid h)^2 - \pr(\ell \mid h)^2 = 0.92$. Similarly, $i$' reward from another peer $j'$ with report $\rp_{j'} = \ell$ is $\ps_B(\ell, \prQ_h) = -0.28$. 
\end{example}

A (mixed) strategy $\stg: \Sigset_i \to \Delta_{\rpset_i}$ maps an agent's type to a distribution on his/her action. A strategy profile $\stgp = (\stg_i)_{i \in [n]}$ is a vector of the strategies of all the agents. An agent is {\em truthful} if he/she always truthfully reports his/her private signal. Let $\stg^*$ be the truthful strategy and $\stgp^*$ be the strategy profile where all agents are truthful. 
We also represent a strategy in the form $\stg = (\bpl, \bph) \in [0, 1]^2$, where $\bpl$ and $\bph$ are the probability that an agent playing $\stg$ reports $h$ conditioned on his/her signal begin $\ell$  and $h$, respectively. The truthful strategy $\stg^* = (0, 1)$.

Given the strategy profile $\stgp$, the ex-ante expected utility of an agent $i$ is
\begin{equation*}
    \ut_i(\stgp) = \frac{1}{n-1}\sum_{j\in [n], j\neq i} \Ex_{\sigi_i \sim \prQ
    , \rp_i \sim \stg_i(\sigi_i)} \Ex_{\sigi_j \sim \vpr_{\sigi_i}, \rp_j \sim \stg_j(\sigi_j)} \rwd_i(\rp_j). 
\end{equation*}

Given a strategy profile $\stgp$ and a type $\sigi_i$, the \qi{} expected utility of an agent $i$ conditioned on his/her type being $\sigi_i$ is 
\begin{equation*}
    \ut_i(\stgp\mid \sigi_i) = \frac{1}{n-1}\sum_{j\in [n], j\neq i} \Ex_{\rp_i \sim \stg_i(\sigi_i)} \Ex_{\sigi_j \sim \vpr_{\sigi_i}, \rp_j \sim \stg_j(\sigi_j)} \rwd_i(\rp_j). 
\end{equation*}

\begin{example} 
\label{ex:difference}
    We follow the setting in example~\ref{ex:setting}. Let $\stgp^*$ be the profile where all agents report truthfully. Let $D$ be a group containing $\kd = 40$ agents and $\stgp'$ be the profile where all deviators report $h$. 

    For truthful reporting, consider an agent $i$ and his/her peer $j$. The probability that both $i$ and $j$ receive (and report) signal $h$ is $\pr(h)\cdot \pr(h \mid h) = 2/3 * 0.8 = 0.533$, and $i$ will be rewarded $\ps(h, \vpr_h) = 0.92$. Other probabilities can be calculated similarly. Adding on the expectation of different pairs of signals, we can calculate the ex-ante expected utility of $i$ in truthful reporting: $\ut_i(\stgp^*) = \sum_{\sigi_i, \sigi_j \in \{\ell, h\}} \pr(\sigi_i) \cdot \pr(\sigi_j\mid \sigi_i)\cdot \ps(\sigi_j, \vpr_{\sigi_i}) = 0.627$. 

    Now we consider the expected utility of a deviator $i$ deviating profile $\stgp'$. Since all the deviators always report $h$, the expected reward $i$ gets from a deviator is $\ps(h, \vpr_h) = 0.92$. For the rewards from a truthful reporter, $i$'s expected reward is $\pr(h)\cdot \ps(h, \vpr_h) + \pr(\ell)\cdot \ps(\ell, \vpr_h) = 0.52$. Among all the other agents, $\kd - 1 = 39$ agents are deviators, and $\ag - \kd = 60$ agents are truthful reporters. Therefore, $i$'s expected utility on $\stgp'$ is $\ut_i(\stgp') = 0.682 > \ut_i(\stgp^*)$. Therefore, the deviation succeeds under the ex-ante Bayesian $\kd$-strong equilibrium. 

    However, the deviation fails under the Bayesian $\kd$-strong equilibrium. The truthful expected utility conditioned on $i$'s signal is $\ell$ is $\ut_i(\stgp^* \mid \ell) = \sum_{\sigi_j \in \{\ell, h\}} \pr(\sigi_j\mid \ell)\cdot \ps(\sigi_j, \vpr_{\ell}) = 0.52$. On the other hand, when agents deviate to $\stgp'$, $i$'s reward from a truthful agents becomes $\sum_{\sigi_j \in \{\ell, h\}} \pr(\sigi_j\mid \ell)\cdot \ps(\sigi_j, \vpr_{h}) = 0.2$. Therefore, $i$'s interim expected utility $\ut_i(\stgp' \mid \ell) = 0.484 < \ut_i(\stgp^* \mid \ell).$ 
\end{example}}

\section{Dichotomies on Equilibria}
{
Our theoretical results focus on the collusive behavior in the peer prediction mechanisms. While the mechanism is known to be prone to collusions, \citet{Shnayder2016practcal} empirically shows that there is a lower bound for collusion to be profitable. With our new solution concepts, our theoretical results specify the exact threshold. 
For both equilibria, we find the largest group size $\kde$ ($\kdq$, respectively) such that truthful reporting is an equilibrium. Moreover, for any $\kd$ larger than $\kde$ ($\kdq$, respectively), truthful reporting fails to be an equilibrium. We first present the result of the ex-ante Bayesian $\kd$-strong equilibrium. 

\begin{theorem}
\label{thm:pp_exante}
    In the peer prediction mechanism, for any $n \ge 2$ and any strictly proper scoring rule $\ps$, truthful reporting $\stgp^*$ is an ex-ante Bayesian $\kde$-strong equilibrium, where 
    \begin{small}
        \begin{align*}
        \kde^h =& \begin{cases}
           \left\lfloor\frac{(\ag - 1)\cdot \Ex_{\sigi\sim \vpr_{\ell}}[\ps(\sigi, \vpr_{\ell}) - \ps(\sigi, \vpr_{h})]}{\ps(h, \vpr_h) - \ps(\ell, \vpr_h)}  \right\rfloor + 1 & \text{if}\ \ps(h, \vpr_h) > \ps(\ell, \vpr_h) \\
           n & \text{otherwise}
        \end{cases}\\
        \kde^\ell =& \begin{cases}
           \left\lfloor\frac{(\ag - 1)\cdot \Ex_{\sigi\sim \vpr_{h}}[\ps(\sigi, \vpr_{h}) - \ps(\sigi, \vpr_{\ell})]}{\ps(\ell, \vpr_\ell) - \ps(h, \vpr_\ell)}\right\rfloor + 1 & \text{if}\ \ps(\ell, \vpr_\ell) > \ps(h, \vpr_\ell)\\
           n & \text{otherwise}
        \end{cases}\\
        \kde =&  \min(\kde^h, \kde^{\ell}, n). 
    \end{align*}
    \end{small}
    
    For all $n\ge k > \kde$, truthful reporting is NOT an ex-ante Bayesian $\kd$-strong equilibrium. 
\end{theorem}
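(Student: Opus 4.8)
The plan is to prove both halves of the statement by showing that two canonical coalition deviations --- ``every deviator always reports $h$'' and ``every deviator always reports $\ell$'' --- are, in aggregate, the most profitable deviations available, with $\kde^h$ and $\kde^\ell$ the precise group sizes at which they start to pay off. Throughout I parametrize a deviator $j$'s strategy by $(x_j,y_j)\in[0,1]^2$ with $x_j=\Pr[\text{report }\ell\mid\text{signal }h]$ and $y_j=\Pr[\text{report }h\mid\text{signal }\ell]$ (so $\stg^*$ is $(0,0)$), and I write $R^*:=\ut_i(\stgp^*)=\pr(h)\,\Ex_{\sigi\sim\vpr_h}[\ps(\sigi,\vpr_h)]+\pr(\ell)\,\Ex_{\sigi\sim\vpr_\ell}[\ps(\sigi,\vpr_\ell)]$ for the per-peer reward under truthful play. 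The pivotal observation is elementary: a deviation $\stgp'$ by a group $D$ cannot succeed once $\Phi:=\sum_{i\in D}\bigl(\ut_i(\stgp')-\ut_i(\stgp^*)\bigr)\le 0$, because if in addition every deviator weakly benefits, then every summand is zero and no deviator strictly benefits. So it suffices to (i) compute $\Phi$ for the two canonical deviations and pin down when it is positive, and (ii) show $\Phi\le 0$ for every deviation whenever $k\le\kde$.

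\emph{Step (i), and the ``NOT an equilibrium'' direction.} Suppose all $k$ deviators always report $h$. Then each deviator's report is $h$, so his reward from each of the $k-1$ other deviators is $\ps(h,\vpr_h)$, and from each of the $n-k$ truthful agents it is $\Ex_{\sigi\sim\prQ}[\ps(\sigi,\vpr_h)]$ --- using that the joint law of two agents' signals is exchangeable, so a random peer's signal, conditioned on a deviator's own (prior-distributed) signal, is again prior-distributed. Subtracting $R^*$ and simplifying --- where the identity $\pr(h\mid\ell)\pr(\ell)=\pr(\ell\mid h)\pr(h)$ (symmetry of $\prQ$) lets the mixed conditioning terms combine cleanly --- every deviator's gain works out to
\[
\pr(\ell)\left(\frac{k-1}{n-1}\bigl(\ps(h,\vpr_h)-\ps(\ell,\vpr_h)\bigr)-\Ex_{\sigi\sim\vpr_\ell}\bigl[\ps(\sigi,\vpr_\ell)-\ps(\sigi,\vpr_h)\bigr]\right)
\]
when $\ps(h,\vpr_h)>\ps(\ell,\vpr_h)$, and is $\le 0$ otherwise; this is strictly positive exactly when $k>\kde^h$. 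The symmetric computation ($h\leftrightarrow\ell$) gives the ``always report $\ell$'' deviation and $\kde^\ell$. Now if $n\ge k>\kde=\min(\kde^h,\kde^\ell,n)$, then $k$ exceeds $\kde^h$ or $\kde^\ell$ with the matching scoring inequality in force (the ``otherwise'' branches force that threshold to be $n$), so one of the two canonical deviations strictly benefits every deviator, and $\stgp^*$ is not a $\kd$-EBSE.

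\emph{Step (ii), and the ``IS an equilibrium'' direction.} Fix any deviation $\stgp'$ by a group $D$ with $|D|=k\le\kde$. Split each $\ut_i(\stgp')-R^*$ into the contribution of $i$'s $k-1$ fellow deviators and its $n-k$ truthful peers. Strict properness gives that $i$'s reward from a truthful peer equals $R^*-\bigl(\pr(h)\,x_iL_h+\pr(\ell)\,y_iL_\ell\bigr)\le R^*$, where $L_h:=\Ex_{\sigi\sim\vpr_h}[\ps(\sigi,\vpr_h)-\ps(\sigi,\vpr_\ell)]>0$ and $L_\ell:=\Ex_{\sigi\sim\vpr_\ell}[\ps(\sigi,\vpr_\ell)-\ps(\sigi,\vpr_h)]>0$; the deviator-to-deviator rewards are bilinear in the two strategies involved. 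Summing over $i\in D$, $\Phi$ becomes an explicit function of $\{(x_i,y_i)\}_{i\in D}\in[0,1]^{2k}$ that is \emph{affine in each deviator's strategy when the others are held fixed} (every bilinear term is linear in any single strategy block, and the truthful-peer deficits are linear), so $\Phi$ is maximized over $[0,1]^{2k}$ at a pure profile: every deviator plays one of truthful, always-$h$, always-$\ell$, or always-misreport, and by symmetry $\Phi$ there depends only on how many deviators use each. The remaining, and main, task is the finite optimization of showing this maximum is $\le 0$ when $k\le\kde$: the homogeneous profiles are covered by Step (i) (always-$h$ and always-$\ell$ give the closed forms, $\le 0$ for $k\le\kde^h$ and $k\le\kde^\ell$; truthful gives $0$; always-misreport is checked directly, again via the $\prQ$-symmetry identity and properness, to give a nonpositive value), and writing $\Phi$ as a quadratic in the per-strategy counts shows no heterogeneous mixture beats the best homogeneous profile. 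Combining, $\Phi\le 0$ whenever $k\le\kde^h$ and $k\le\kde^\ell$, i.e.\ whenever $k\le\kde$, so every such deviation fails and $\stgp^*$ is a $\kde$-EBSE.

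The hard part is that finite optimization in Step (ii) --- bounding $\Phi$ above by $0$, for $k\le\kde$, over all pure profiles. Affinity in each strategy block hands us the reduction to pure profiles for free, but controlling heterogeneous mixtures really does require writing $\Phi$ explicitly as a quadratic in the numbers of deviators playing each pure strategy and analyzing its curvature; whether a split between $h$-reporters and $\ell$-reporters can beat a homogeneous profile turns on the sign of a combination of scoring gaps --- precisely the dichotomy already visible in the two branches of $\kde^h$ and $\kde^\ell$ --- and the always-misreport corner needs its own boundary check. That bookkeeping carries essentially all the weight of the proof; the exchangeability simplification, the $\prQ$-symmetry identity, the strict-properness bound on truthful-peer rewards, and the closed-form computations of Step (i) are all short.
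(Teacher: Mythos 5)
Your skeleton is sound and in fact mirrors the paper's own argument: your Step (i) is exactly the paper's Step 1 (the per-deviator gain formula and the resulting thresholds match), and your Step (ii) reduction --- sum the gains over the coalition, note the sum is multilinear in the individual strategy blocks, reduce to pure profiles, and compare count mixtures --- is a discrete repackaging of the paper's route via the average strategy $\astg=\frac1k\sum_i\stg_i$, Jensen's inequality, and corner checks on $[0,1]^2$. The problem is that the two claims you defer as ``bookkeeping'' are precisely where the paper's proof does its real work, and neither is routine. First, ``no heterogeneous mixture beats the best homogeneous profile'' is not settled by checking the curvature of the single $h$-versus-$\ell$ swap direction (the gap $\ps(h,\vpr_h)+\ps(\ell,\vpr_\ell)-\ps(\ell,\vpr_h)-\ps(h,\vpr_\ell)$): a quadratic on a simplex can be convex along every edge direction and still attain its maximum at a non-vertex, so you need convexity of the full deviator-to-deviator self-play reward as a function of the (average) strategy. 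That is the paper's Claim~\ref{claim:fconvex}, a positive-semidefiniteness computation of a $2\times2$ Hessian which uses both the scoring-gap inequality above and $\pr(h\mid h)\pr(\ell\mid\ell)>\pr(h\mid\ell)\pr(\ell\mid h)$; once you have it, your count-quadratic is convex on the simplex (the one affinely degenerate direction, truthful$+$misreport versus always-$h$$+$always-$\ell$, only contributes linearly), and the vertex reduction goes through.

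Second, your treatment of the always-misreport corner is the genuine gap. You assert it is ``checked directly, via the $\prQ$-symmetry identity and properness, to give a nonpositive value,'' but its value is $\kd$-dependent and \emph{can} be positive for large coalitions; what must actually be shown is that the misreport deviation's own profitability threshold is at least $\kde=\min(\kde^h,\kde^\ell)$, so that it cannot fire before one of the two canonical deviations does. In the paper this is the three-case comparison at the corner $(1,0)$, and it relies on the inequalities $\ps(h,\vpr_h)>\ps(h,\vpr_\ell)$ and $\ps(\ell,\vpr_\ell)>\ps(\ell,\vpr_h)$ (Lemma~\ref{lem:pr_psr}), which do not follow from properness by a one-line expectation argument --- the paper proves them through the subgradient/convex-function characterization of proper scoring rules. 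So: keep your structure, but to make it a proof you must (a) prove the convexity statement behind the vertex reduction, and (b) carry out the misreport-corner threshold comparison against $\kde$ rather than asserting nonpositivity; your Step (i), the exchangeability simplification, and the multilinearity reduction are fine as written.
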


While a proof sketch is presented below, here we give a brief explanation of the thresholds. $\kde^h$ and $\kde^\ell$ are characterized by comparing the ex-ante expected utility of a deviator between truthful reporting and all the deviators always report $h$ ($\ell$, respectively). Take $\kde^h$ as an example. The numerator  $\Ex_{\sigi\sim \vpr_{\ell}}[\ps(\sigi, \vpr_{\ell}) - \ps(\sigi, \vpr_{h})]$ is proportional to the loss that the deviator suffers in his expected rewards from the truthful reporters in switching from truthful reporting to always reporting $h$. The denominator $\ps(h, \vpr_h) - \ps(\ell, \vpr_h)$ is proportional to the amount that, for a deviator, the expected reward gain from other deviators exceeds the expected reward loss from truthful reporting. If $\ps(h, \vpr_h) - \ps(\ell, \vpr_h) < 0$, the extra gain never compensates for the loss, so the deviation cannot succeed for any $\kd \le n$. Otherwise, a group size of $\kd > \kde^h$ is required for the deviation to succeed.

\begin{example}
\label{ex:ex-ante}
    We calculate the threshold for ex-ante Bayesian $\kd$-strong equilibrium for the instance in Example~\ref{ex:setting}. For $\kde^h$, the numerator equals to $\pr(h \mid \ell) \cdot (0.28 - 0.92) + \pr(\ell \mid \ell)\cdot (0.68 + 0.28) = 0.32$. The denominator, according to the Brier scoring rule, equals to $\ps(h, \vpr_h) - \ps(\ell, \vpr_h) = 2\cdot (\pr(h \mid h) - \pr(\ell \mid h)) = 1.2$. Therefore, $\kde^h = \lfloor \frac{4}{15}\cdot (\ag-1)\rfloor + 1$. Similarly, we calculate that $\kde^\ell = \lfloor \frac{4}{5}\cdot (\ag-1)\rfloor + 1$. Therefore, when $\ag = 100$, a deviation group needs at least $\lfloor \frac{4}{15}\times 99\rfloor + 1 = 27$ deviators to succeed. This aligns with Example~\ref{ex:difference}, where a 40-agent group succeeds. 
\end{example}

Similarly, Theorem~\ref{thm:pp_qi} characterizes the threshold under Bayesian $\kd$-strong equilibrium. 

\begin{theorem}
\label{thm:pp_qi}
    In the peer prediction mechanism, there exists an $\ag_0$ such that for every $\ag \ge \ag_0$ and any strictly proper scoring rule $\ps$, truthful reporting $\stgp^*$ is a Bayesian $\kd$-strong equilibrium in peer prediction, where 
    \begin{align*}
        \kdq^h =& \begin{cases}
         \left\lceil \frac{(\ag -1)\cdot \Ex_{\sigi\sim \vpr_{\ell}}[\ps(\sigi, \vpr_{\ell}) - \ps(\sigi, \vpr_{h})]}{\pr(\ell \mid \ell)\cdot (\ps(h, \vpr_h) - \ps(\ell, \vpr_h))} \right\rceil  & \text{if}\ \ps(h, \vpr_h) > \ps(\ell, \vpr_h)\\
           n & \text{otherwise}
        \end{cases}\\
        \kdq^\ell =& \begin{cases}
          \left\lceil \frac{(\ag - 1)\cdot \Ex_{\sigi\sim \vpr_{h}}[\ps(\sigi, \vpr_{h}) - \ps(\sigi, \vpr_{\ell})]}{\pr(h\mid h)\cdot (\ps(\ell, \vpr_\ell) - \ps(h, \vpr_\ell))} \right\rceil & \text{if}\ \ps(\ell, \vpr_\ell) > \ps(h, \vpr_\ell)\\
           n & \text{otherwise}
        \end{cases}\\
        \kdq =&\  \min(\kdq^h, \kdq^{\ell}, n). 
        \end{align*}
    For all $n\ge \kd > \kdq $, truthful reporting is NOT a Bayesian $\kd$-strong equilibrium. 
    
\end{theorem}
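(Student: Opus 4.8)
The plan is to prove the two halves of the dichotomy separately: (i) for every $\kd>\kdq$ exhibit an explicit coalition of size $\kd$ that deviates profitably, so truthful reporting is not a $\kd$-BSE; and (ii) when $\ag\ge\ag_0$ and $\kd\le\kdq$, show that no coalition of size at most $\kd$ has a deviation in which every member weakly gains at every type and some member strictly gains at some type.

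For (i) I would use the two extremal coalitions behind the thresholds. Assume without loss of generality $\kdq^h\le\kdq^\ell$; if $\kdq^h\ge\ag$ then $\kdq=\ag$ and there is nothing to prove, so take $\kdq=\kdq^h<\ag$, which forces $\ps(h,\vpr_h)>\ps(\ell,\vpr_h)$. Fix $\kd$ with $\kdq^h<\kd\le\ag$ and let all $\kd$ deviators report $h$ regardless of signal. Using symmetry of $\prQ$ --- conditioned on a deviator's own signal $\sigi$, every other agent's signal has marginal $\vpr_\sigi$, which is all that matters since a pairwise reward depends only on the two reports --- one computes the deviator's interim utility at each signal. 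At signal $h$ its own report is unchanged, so its reward from each truthful agent is unchanged while its reward from each of the other $\kd-1$ deviators rises from $\Ex_{\sigi\sim\vpr_h}[\ps(\sigi,\vpr_h)]$ to $\ps(h,\vpr_h)$, a \emph{strict} gain since $\ps(h,\vpr_h)>\ps(\ell,\vpr_h)$ and $\pr(\ell\mid h)>0$. At signal $\ell$ it misreports: it loses $\Ex_{\sigi\sim\vpr_\ell}[\ps(\sigi,\vpr_\ell)-\ps(\sigi,\vpr_h)]$ against each of the $\ag-\kd$ truthful agents but gains $\ps(h,\vpr_h)-\Ex_{\sigi\sim\vpr_\ell}[\ps(\sigi,\vpr_\ell)]$ against each of the $\kd-1$ deviators, so the net change is nonnegative exactly when $(\kd-1)\bigl(\ps(h,\vpr_h)-\Ex_{\sigi\sim\vpr_\ell}[\ps(\sigi,\vpr_\ell)]\bigr)\ge(\ag-\kd)\,\Ex_{\sigi\sim\vpr_\ell}[\ps(\sigi,\vpr_\ell)-\ps(\sigi,\vpr_h)]$; via the identity $\ps(h,\vpr_h)-\Ex_{\sigi\sim\vpr_\ell}[\ps(\sigi,\vpr_\ell)]=\pr(\ell\mid\ell)\bigl(\ps(h,\vpr_h)-\ps(\ell,\vpr_h)\bigr)-\Ex_{\sigi\sim\vpr_\ell}[\ps(\sigi,\vpr_\ell)-\ps(\sigi,\vpr_h)]$ this rearranges precisely to $\kd>\kdq^h$. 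Since the numerator $\Ex_{\sigi\sim\vpr_\ell}[\ps(\sigi,\vpr_\ell)-\ps(\sigi,\vpr_h)]$ is positive by strict properness, this coalition strictly helps every member at signal $h$ and hurts none, so truthful reporting is not a $\kd$-BSE. The strict gain is supplied ``for free'' at signal $h$, which is why the signal-$\ell$ inequality may be non-strict and the threshold carries a \emph{ceiling}, whereas $\kde^h$ --- lacking a free strict improvement --- carries a floor-plus-one; part (i) needs no lower bound on $\ag$.

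For (ii) suppose toward a contradiction that a coalition $D$ with $|D|\le\kdq$ deviates profitably. For each deviator $i$ and each signal, multiply the interim no-loss inequality through by $\ag-1$, turning it into a sum of $\ag-1$ pairwise comparisons. The reward $i$ forfeits against the $\ag-|D|$ truthful agents is, by strict properness, $(\ag-|D|)$ times ``how much $i$ misreports at this signal'' times the corresponding positive constant ($\Ex_{\sigi\sim\vpr_\ell}[\ps(\sigi,\vpr_\ell)-\ps(\sigi,\vpr_h)]$ at signal $\ell$, and its $\vpr_\ell\leftrightarrow\vpr_h$ counterpart at signal $h$); this must be covered by the excess reward $i$ collects from the other $|D|-1$ deviators, and each pairwise excess is a bilinear function of $i$'s and $j$'s probabilities of reporting $h$ (conditioned on $i$'s signal), hence controlled by the deviators' mixing probabilities and the four constants $\ps(x,\vpr_y)$. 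I would bound these excesses, sum over $j$, then over $i\in D$, and finally combine the signal-$\ell$ and signal-$h$ families of inequalities so that the total gain is at most $(|D|-1)$ times a per-pair constant while the total loss is at least $(\ag-|D|)$ times a matching per-agent constant; the constants line up so that the configurations that are last to become feasible as $|D|$ grows are exactly ``all deviators report $h$'' (governed by $\kdq^h$) and ``all deviators report $\ell$'' (governed by $\kdq^\ell$), so $|D|\le\kdq=\min(\kdq^h,\kdq^\ell,\ag)$ renders the system infeasible the moment one inequality is required to be strict.

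The main obstacle is (ii): one must rule out \emph{all} coalitions, not only the two extremal ones --- in particular heterogeneous mixed strategies in which signal-$\ell$ and signal-$h$ deviators randomize differently --- and because the inter-deviator rewards are bilinear rather than linear one cannot symmetrize a coalition to reduce to a one-parameter family, so the aggregation of the $2|D|$ no-loss inequalities must be done by hand. This is also where $\ag\ge\ag_0$ enters: for small $\ag$ the factor $\ag-|D|$ multiplying the loss terms is not large enough for the extremal-coalition constants to dominate the remaining terms, and a strictly larger family of intermediate coalitions would have to be excluded separately; identifying the explicit $\ag_0$ beyond which the extremal coalitions provably dominate is the delicate step. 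Part (i), by contrast, is the self-contained computation above with the explicit ``all report $h$'' (respectively ``all report $\ell$'') coalition.
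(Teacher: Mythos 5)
Your part (i) is correct and is essentially the paper's Step 1: the explicit ``all report $h$'' (resp.\ ``all report $\ell$'') coalition, the observation that the strict improvement is delivered at signal $h$ from the other deviators while the signal-$\ell$ condition only needs to hold weakly, and the algebraic rearrangement to $\kd>\kdq^h$ all match the paper, including your (correct) explanation of why this threshold carries a ceiling where $\kde^h$ carries a floor-plus-one.

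Part (ii), however, is a plan rather than a proof, and the plan as stated is missing the two ideas the argument actually turns on. First, even for \emph{symmetric} deviations (all deviators playing the same mixed strategy $\astg=(\abpl,\abph)$), the interim utilities $\ut(\abpl,\abph\mid h)$ and $\ut(\abpl,\abph\mid \ell)$ are \emph{not} convex on $[0,1]^2$, so one cannot reduce to the four corner strategies as in the ex-ante theorem; your claim that ``the configurations that are last to become feasible as $|D|$ grows are exactly all-$h$ and all-$\ell$'' is precisely the statement that needs proof, not a consequence of lining up constants. The paper proves it via Lemmas~\ref{lem:subspace_h} and~\ref{lem:subspace_l}: $\ut(\cdot\mid h)$ is convex in $\abph$, linear in $\abpl$, and monotone along the line $\abpl=\frac{\pr(h\mid h)}{\pr(\ell\mid h)}(1-\abph)$, which pins it below the truthful value on one triangle; the $\ell$-side gives a second triangle; and the two triangles cover $[0,1]^2$ because $\pr(h\mid h)>\pr(h\mid \ell)$. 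Nothing in your aggregation sketch substitutes for this covering argument --- in particular, summing the two per-type no-loss inequalities with prior weights only recovers the ex-ante inequality and hence only the smaller threshold $\kde$. Second, for \emph{asymmetric} coalitions the bilinearity you point to is handled in the paper by comparing a worst deviator against the profile where all deviators play the average strategy: that gap is $O(\maxdps/(\ag-1))$, the average-strategy profile is shown to sit at least $\maxdps/(\ag-1)$ below truthful everywhere except in $O(1/(\ag-1))$-neighborhoods of $(0,1)$ and $(0,0)$, and those neighborhoods receive a separate direct argument; this is exactly where the bound $\ag\ge\ag_0$ enters. You correctly locate both difficulties (``must be done by hand'', ``the delicate step'') but resolve neither, so the positive half of the theorem is not established by the proposal.
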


The lower bound $\ag_0$ on $\ag$ is characterized by the common prior $\pr$ and the scoring rule $\ps$ and is independent of $\ag$. The explicit expression on $\ag_0$ is in Appendix~\ref{apx:qi}.

The thresholds for the Bayesian $\kd$-strong equilibrium $\kdq^h$ and $\kdq^\ell$ are larger than those for the ex-ante Bayesian $\kd$-strong equilibrium $\kde^h$ and $\kde^\ell$ respectively. This is because, for example, $\kdq^h$ is characterized by comparing the interim utility of a deviator conditioned on signal $\ell$ between truthful reporting and all deviators reporting $h$. In ex-ante, the deviator $i$ has a probability of $\pr(\ell)$ to report untruthfully and suffer a loss on expected reward from truthful reporters. When $i$ has a private signal $\ell$, such probability becomes 1. Therefore, the deviator suffers more loss in the interim expected utility than in the ex-ante expected utility in the expected reward from truthful reporters. On the other hand, $i$ gets the same extra gain in the reward from other deviators as in the ex-ante expected utility. Therefore, a larger group is needed to make the deviation succeed. 

\begin{example}
    \label{ex:qi}
    We calculate the threshold for Bayesian $\kd$-strong equilibrium for the instance in Example~\ref{ex:setting}. For $\kdq^h$, the numerator equals to $0.32$.
    The denominator is multiplied by $\pr(\ell \mid \ell) = 0.6$ compared with $\kde^h$ and equals to $1.2 \times 0.6 = 0.72$. Therefore, $\kde^h = \lceil \frac{4}{9}\cdot (\ag-1)\rceil$. Similarly, we calculate that $\kde^\ell = \lceil \ag - 1 \rceil$. Therefore, when $\ag = 100$, a deviation group needs at least $45$ deviators to succeed. This also aligns with Example~\ref{ex:difference}, where a 40-agent group fails in deviation.  
\end{example}

Theorem~\ref{thm:pp_exante} and \ref{thm:pp_qi} imply that the ex-ante Bayesian $\kd$-strong equilibrium and the Bayesian $\kd$-strong equilibrium are natural criteria to evaluate the robustness against collusion for a peer prediction mechanism. If truth-telling is an equilibrium with a larger $\kd$, the mechanism is more robust against collusion. If an information collector aims to prevent collusion in a peer prediction task, he/she could carefully select the mechanism and the scoring rule to maximize the threshold $\kd$ under which truth-telling becomes an equilibrium.  

\begin{example}
    \label{ex:log_vs_brier}
    If we change the scoring rule from the Brier scoring rule Example~\ref{ex:setting} to the log scoring rule with base $e$ in and follow the calculation in Example~\ref{ex:ex-ante}, we have $\kde = \lfloor 0.275 (\ag - 1)\rfloor + 1$. When $\ag = 100$, a group of at least 28 agents is needed to perform a successful deviation. Therefore, the log scoring rule is more robust than the Brier scoring rule in this instance. 
\end{example}

\subsection{Proof Sketch of Theorem~\ref{thm:pp_exante}}
The proof consists of two steps.
In Step 1, $\kde$ is characterized by comparing the ex-ante expected utility of a deviator when every agent reports truthfully and when all $\kd$ deviators always report $h$ (and always report $\ell$, respectively). The two deviations bring a deviator higher expected utility if and only if $\kd > \kde$. In Step 2, we show that for any $\kd \le \kde$ and any deviating strategy profile $\stgp'$, the average expected utility among all the deviators when $\stgp'$ is played will not exceed the expected utility when every agent reports truthfully. Therefore, either no deviators have strictly increasing expected utility or some deviators have strictly decreasing utility after deviation, and the deviation cannot succeed.
The full proof is in Appendix~\ref{apx:exante}. 

\noindent\textbf{Step 1: determine $\kde$.} We show how $\kde^h$ is determined by comparing truthful reporting strategy profile $\stgp^*$ and the deviating strategy profile $\stgp$ where all $\kd$ deviators always report $h$, i.e., $\stg = (1, 1)$. The reasoning for $\kde^\ell$ is similar. The condition that a deviator $i$ is willing to deviate is $\ut_i(\stgp) > \ut_i(\stgp^*)$. The inequality should be strict because all deviators have equal expected utility in $\stgp$.

$\ut_i(\stgp)$ can be viewed as a linear combination of the expected utility $i$ gets from the truthful agents, denoted by $\ut_i(\stgp \mid \jtruthful)$, and the expected utility $i$ gets from other deviators, denoted by $\ut_i(\stgp \mid \jdeviate)$. In $\stgp$, there are $\ag - \kd$ truthful reporters and $\kd - 1$ deviators other than $i$. Therefore, $\ut_i(\stgp) = \frac{\ag - \kd}{\ag - 1}\cdot \ut_i(\stgp \mid \jtruthful) + \frac{\kd - 1}{\ag - 1}\cdot \ut_i(\stgp \mid \jdeviate).$
Let $\dut_d = \ut_i(\stgp^*) -\ut_i(\stgp\mid \jdeviate)$, and $\dut_t = \ut_i(\stgp^*) -\ut_i(\stgp\mid \jtruthful)$. Then $\ut_i(\stgp) > \ut_i(\stgp^*)$ is equivalent to
$\frac{\kd-1}{\ag -1} \cdot\dut_d + \frac{\ag -\kd}{\ag -1} \cdot \dut_t< 0.$

The ex-ante expected reward of deviator $i$ from truthful reporters can be divided into two parts, one conditioned on $i$'s private signal being $h$, the other on $i$'s signal being $\ell$. When $i$'s signal is $h$, $i$ reports $h$ both in $\stgp^*$ and in $\stgp$, and the expected rewards from truthful reporters in this part are the same. When $i$'s signal is $\ell$, $i$ reports $\ell$ in $\stgp^*$ and $h$ in $\stgp$, and the expected rewards make a difference. Therefore, $\dut_t = \pr(\ell)\cdot  \Ex_{\sigi\sim \vpr_{\ell}}[\ps(\sigi, \vpr_{\ell}) - \ps(\sigi, \vpr_{h})]. $ According to the properness of $\ps$, $\dut_t > 0$.  Therefore, when $\dut_t > \dut_d$, $\ut_i(\stgp) > \ut_i(\stgp^*)$ is equivalent to
\begin{equation*}
    \kd > \frac{\dut_t}{\dut_t - \dut_d}\cdot (n-1) + 1. 
\end{equation*}
When $\dut_t \le \dut_d$, the condition does not hold for any $k$, and the deviation will never succeed. 

From the calculation, $\dut_t - \dut_d = \pr(\ell)\cdot   (\ps(h, \vpr_h) - \ps(\ell, \vpr_h))$. Therefore, when $\ps(h, \vpr_h) > \ps(\ell, \vpr_h)$, $\dut_t > \dut_d$, and $\ut_i(\stgp) > \ut_i(\stgp^*)$ is equivalent to 
\begin{equation*}
   \kd >\frac{\Ex_{\sigi\sim \vpr_{\ell}}[\ps(\sigi, \vpr_{\ell}) - \ps(\sigi, \vpr_{h})]}{\ps(h, \vpr_h) > \ps(\ell, \vpr_h)} \cdot (n-1) + 1. 
\end{equation*}

And when $\ps(h, \vpr_h) \le \ps(\ell, \vpr_h)$, $\dut_t \le \dut_d$, and $\ut_i(\stgp) > \ut_i(\stgp^*)$ does not hold for any $\kd$. This is how $\kde^h$ is determined. $\kde^\ell$ is determined in a similar reasoning. 

\noindent\textbf{Step 2: Deviations cannot succeed for $\kd \le \kde$.} For $k = 1$, the statement holds from the truthfulness of the mechanism. Suppose $2\le \kd \le \kde$, and $\stgp$ be an arbitrary deviating strategy. Let $\ut(\stgp^*)$ be the expected utility of truthful reporting, which is equal for all agents. For each deviator $i$, let $\stg_i = (\bpl^i, \bph^i)$ denote $i$'s strategy in $\stgp$.  We show that $\frac{1}{\kd} \sum_{i \in D} \ut_i(\stgp) \le \ut(\stgp^*)$. Therefore, either there exists some deviator $i$ such that $\ut_i(\stgp) < \ut(\stgp^*)$, or for all the deviator $i$ there is $\ut_i(\stgp) = \ut(\stgp^*)$. In either case, the deviation fails. 

Now let $\astg = (\abpl, \abph) = \frac{1}{\kd} \sum_{i \in D} \stg_i$ be the average of the deviator's strategies, and $\astgp$ be the strategy profile where all agents in $D$ plays $\astg$ and all other agents report truthfully. $\ut_i(\astgp)$ is equal among all the deviators $i$ due to symmetricity (and denoted by $\ut(\astgp)$). We first show that $\frac{1}{\kd} \sum_{i \in D} \ut_i(\stgp) \le \ut(\astgp)$ (the average expected utility of deviators playing $\stgp$ will not exceed the expected utility when each deviator plays $\astg$) and then that $\ut(\astgp) \le \ut(\stgp^*)$ (the expected utility that each deviator play $\astg$ will not exceed the truthful expected utility). 

To show $\frac{1}{\kd} \sum_{i \in D} \ut_i(\stgp) \le \ut(\astgp)$, we compare the expected reward from truthful agents and deviators separately. For a deviator $i$, $\ut_i(\stgp \mid \jtruthful)$ is independent of the strategy of other deviators and is linear on $\bpl$ and $\bph$. Therefore, $\frac{1}{\kd} \sum_{i \in D} \ut_i(\stgp \mid  \jtruthful) = \ut(\astgp \mid \jtruthful)$. 

For the deviator's part, $\ut_i(\stgp \mid \jdeviate)$ is the average of $i$'s expected reward from comparing the report with all other deviators $j \in D$. Such expected reward is linear on $j$'s strategy given a fixed $i$'s strategy and linear on $i$'s strategy given a fixed $j$'s strategy. Therefore, the average expected reward from agents with different strategies equals to the reward from a peer playing the average strategy, and $\ut_i(\stgp \mid \jdeviate)$ equals to $i$'s expected reward from an agent playing the average strategy $\astg$ minus a share of $i$'s expected reward from an agent playing $\stg_i$. Given a strategy $\stg = (\bpl, \bph)$, let $\func(\bpl, \bph)$ be the expected reward of an agent playing $\stg$ from another agents also playing $\stg$. Then
\begin{equation*}
    \frac{1}{\kd}\sum_{i \in D} \ut_i(\stgp\mid \jdeviate) = \frac{\kd}{\kd-1} \func(\abpl, \abph) - \frac{1}{(\kd-1)\kd} \sum_{i\in D} \func(\bpl^i, \bph^i).
\end{equation*}

It turns out that $\func$ is a convex function. Therefore, $ \frac{1}{\kd}\sum_{i \in D} \ut_i(\stgp\mid \jdeviate) \le \func(\abpl, \abph) = \ut(\astgp \mid \jdeviate)$. Combining the truthful part and the deviator part, we show that $\frac{1}{\kd} \sum_{i \in D} \ut_i(\stgp) \le \ut(\astgp)$. 

Finally, we show that $\ut(\astgp) \le \ut(\stgp^*)$. Note that $\ut(\astgp)$ can be viewed as a convex function on $\abpl$ and $\abph$. This is because $\ut(\astgp \mid \jtruthful)$ is linear on $\astg$, and $\ut(\astgp \mid \jdeviate) = \func(\abpl, \abph)$ is convex on $\abpl$ and $\abph$. Therefore, it is sufficient to show that $\ut(\astgp) \le \ut(\stgp^*)$ on the four corner cases of $\astg$: truthful reporting: $\astg = (0, 1)$, always reporting $h$: $ \astg(1,1)$, always reporting $\ell$: $\astg = (0, 0)$, and always tell a lie $\astg = (1, 0)$. 
When $\astg = (0, 1)$, all the deviator also report truthfully, and $\astgp = \stgp^*$. For $\astg = (0, 0)$ and $\astg = (1, 1)$, $\kd \le \kde$ guarantees that $\ut(\astgp) \le \ut(\stgp^*)$. Finally, when $\astg = (1, 0)$, similar reasoning to Step 1 shows that such deviation cannot succeed. \qed

\subsection{Proof Sketch of Theorem~\ref{thm:pp_qi}.}
The steps of the proof resemble the steps of the proof of Theorem~\ref{thm:pp_exante}, yet the techniques are different. In Step 1, we determine $\kdq$ by comparing the \qi{} expected utilities of a deviator when every agent reports truthfully and when all $\kd$ deviators always report $h$ ($\ell$, respectively).
In Step 2, we show that for any $\kd \le \kde$ and any deviating strategy profile $\astgp$ where all the deviators play the same strategy $\astg$, the expected utility of a deviator on $\astgp$ will not exceed the expected utility when every agent reports truthfully. In Step 3, we show that for sufficiently large $\ag$, any $\kd \le \kdq$, and any deviating strategy profile $\stgp$, there exists a deviator whose expected utility is strictly smaller than the expected utility when every agent reports truthfully. The full proof is in Appendix~\ref{apx:qi}. 

The main technical difficulty lies in Step 2 and Step 3. Let $\ut(\astgp \mid h)$ and $\ut(\astgp \mid \ell)$ be the interim expected utility of a deviator conditioned on his/her signal being $h$ and $\ell$, respectively,  when $\astgp$ is played. $\ut(\astgp \mid h)$ and $\ut(\astgp \mid \ell)$ can still be viewed as functions on $\abpl$ and $\abph$. However, unlike the ex-ante $\ut(\astgp)$, they are not convex. Therefore, we cannot get $\frac{1}{\kd} \sum_{i \in D} \ut_i(\stgp \mid h) \le \ut(\astgp \mid h)$ or $\ut(\astgp \mid h) \le \ut(\stgp^* \mid h)$ (or the $\ell$ side) directly from similar reasoning with those in Theorem~\ref{thm:pp_exante}. 

\textbf{In Step 2}, we instead show that for any $\astgp$, either $\ut(\astgp \mid h) \le \ut(\stgp^* \mid h)$ or $\ut(\astgp \mid \ell) \le \ut(\stgp^* \mid \ell)$ holds. Although $\ut(\astgp \mid h)$ is not convex, the convexity (or linearity) still holds in certain directions. Here we slightly abuse the notation to write $\ut(\astgp \mid h)$ as $\ut(\abpl, \abph \mid h)$. The following properties hold. (1) When $\abpl$ is fixed, $\ut(\abpl, \abph \mid h)$ is convex on $\abph$. (2) When $\abph$ is fixed, $\ut(\abpl, \abph \mid h)$ is linear on $\abpl$. (3) When $\abpl = \frac{\pr(h \mid h)}{\pr(\ell \mid h)}\cdot ( 1- \abph)$, $\ut(\abpl, \abph \mid h)$ is linear on $\abph$ and increases when $\abph$ increases. With these properties, we show that $\ut(\abpl, \abph \mid h) \le \ut(\stgp^* \mid h)$ holds in a triangle area as the following lemma indicates.

\begin{lemma}
\label{lem:subspace_h}
    For any $(\abpl, \abph) \in \mathbb{R}^2$ satisfying (1) $\abpl \ge 0$, (2) $\abph \ge 0$, and (3) $\abph + \frac{\pr(\ell \mid h)}{\pr (h \mid h)}\cdot \abpl \le 1$, it always holds that $ \ut(\abpl, \abph \mid h) \le \ut(\stgp^* \mid h)$, and the equality holds only when $\abpl = 0$ and $\abph = 1$. 
\end{lemma}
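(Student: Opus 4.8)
The plan is to exploit the partial linearity/convexity structure of $\ut(\abpl,\abph\mid h)$ noted in the sketch, reducing the two-variable claim on the triangle to a one-dimensional statement pinned at the corner $(0,1)$. Abbreviate $p:=\pr(h\mid h)$ (so $\pr(\ell\mid h)=1-p$) and $\ps_{xy}:=\ps(x,\vpr_y)$ for $x,y\in\{\ell,h\}$, and set $A_h:=\Ex_{\sigi\sim\vpr_h}[\ps(\sigi,\vpr_h)]=p\ps_{hh}+(1-p)\ps_{\ell h}$, $B_h:=\Ex_{\sigi\sim\vpr_h}[\ps(\sigi,\vpr_\ell)]=p\ps_{h\ell}+(1-p)\ps_{\ell\ell}$, so that $\ut(\stgp^*\mid h)=A_h$ and, by strict properness applied to the distinct distributions $\vpr_h\neq\vpr_\ell$ (distinct since $p=\pr(h\mid h)>\pr(h\mid\ell)$), $A_h>B_h$. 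First I would write $\ut(\abpl,\abph\mid h)=\frac{n-\kd}{n-1}T(\abph)+\frac{\kd-1}{n-1}D(\abpl,\abph)$, where $T(\abph)=\abph A_h+(1-\abph)B_h$ is the expected reward from a truthful peer and $D(\abpl,\abph)=\abph\big(r\ps_{hh}+(1-r)\ps_{\ell h}\big)+(1-\abph)\big(r\ps_{h\ell}+(1-r)\ps_{\ell\ell}\big)$ is the expected reward from a peer playing $(\abpl,\abph)$, whose report equals $h$ with probability $r:=p\abph+(1-p)\abpl$. From this formula I would verify the three facts the sketch uses: $\abpl\mapsto\ut$ is affine for fixed $\abph$ (only $r$ depends on $\abpl$, linearly); $\abph\mapsto\ut$ is convex for fixed $\abpl$, since the coefficient of $\abph^2$ in $D$ equals $p\big((\ps_{hh}-\ps_{\ell h})-(\ps_{h\ell}-\ps_{\ell\ell})\big)$, which is nonnegative because combining the properness inequalities $\Ex_{\sigi\sim\vpr_h}[\ps(\sigi,\vpr_h)-\ps(\sigi,\vpr_\ell)]\ge0$ and $\Ex_{\sigi\sim\vpr_\ell}[\ps(\sigi,\vpr_\ell)-\ps(\sigi,\vpr_h)]\ge0$ with $p>\pr(h\mid\ell)$ yields exactly $\ps_{hh}-\ps_{\ell h}\ge\ps_{h\ell}-\ps_{\ell\ell}$; and along the line $(1-p)\abpl=p(1-\abph)$ one has $r=p$, so a deviating peer's report is distributed exactly as a truthful peer's, whence $\ut=T(\abph)$, affine and increasing in $\abph$.

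Given these facts, the reduction is routine. The feasible region is the closed triangle with vertices $(0,0)$, $(0,1)$, $(\frac{p}{1-p},0)$, whose horizontal slice at height $\abph$ is $\abpl\in[0,\frac{p}{1-p}(1-\abph)]$. On each slice $\ut$ is affine in $\abpl$, hence maximized at one of the two endpoints, which lie either on the left edge $\{\abpl=0\}$ or on the hypotenuse. On the hypotenuse, $\ut=\abph A_h+(1-\abph)B_h\le A_h$ with equality iff $\abph=1$, i.e.\ only at $(0,1)$. On the left edge, $\abph\mapsto\ut(0,\abph\mid h)$ is convex on $[0,1]$ with $\ut(0,1\mid h)=A_h$ (there the deviators report truthfully); so once $\ut(0,0\mid h)<A_h$ is established, the left-edge values are $\le A_h$ with equality only at $\abph=1$, since a convex function on $[0,1]$ whose value at $0$ is strictly below its value at $1$ attains its maximum only at $1$. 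Feeding these back into the slicewise bound yields $\ut(\abpl,\abph\mid h)\le A_h=\ut(\stgp^*\mid h)$ on the whole triangle, with equality only at $(0,1)$.

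The one genuinely quantitative step is $\ut(0,0\mid h)<A_h$, and it is here that the standing hypothesis $\kd\le\kdq$ (operative throughout Steps~2--3) enters. Substituting $(0,0)$ gives $\ut(0,0\mid h)=\frac{n-\kd}{n-1}B_h+\frac{\kd-1}{n-1}\ps_{\ell\ell}$, a convex combination of $B_h$ and $\ps_{\ell\ell}$. If $\ps_{\ell\ell}\le\ps_{h\ell}$ then $\ps_{\ell\ell}\le B_h$ as well (equivalently $p\ps_{\ell\ell}\le p\ps_{h\ell}$), so $\ut(0,0\mid h)\le B_h<A_h$. Otherwise $\ps_{\ell\ell}>\ps_{h\ell}$, and using $\ps_{\ell\ell}-B_h=p(\ps_{\ell\ell}-\ps_{h\ell})$ together with $A_h-B_h=\Ex_{\sigi\sim\vpr_h}[\ps(\sigi,\vpr_h)-\ps(\sigi,\vpr_\ell)]$, one checks that $\ut(0,0\mid h)<A_h$ is equivalent to $(\kd-1)\,p\,(\ps_{\ell\ell}-\ps_{h\ell})<(n-1)\,\Ex_{\sigi\sim\vpr_h}[\ps(\sigi,\vpr_h)-\ps(\sigi,\vpr_\ell)]$, which is exactly the inequality defining the threshold $\kdq^\ell$ in Theorem~\ref{thm:pp_qi}; hence it holds for every $\kd\le\kdq\le\kdq^\ell$.

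I expect the corner estimate to be the crux and essentially the only real obstacle. Unlike the ex-ante utility analyzed in Theorem~\ref{thm:pp_exante}, $\ut(\cdot\mid h)$ is not jointly convex, so there is no global ``evaluate the corners of the unit square'' shortcut; the profile $(0,0)$---every deviator always reporting $\ell$ while the conditioning signal is $h$---is genuinely the binding configuration, and it is precisely where the Step~1 threshold characterization is needed. Everything else is bookkeeping along the affine-in-$\abpl$ and convex-in-$\abph$ directions, with attention to where strict inequality comes from: strict properness for $A_h>B_h$, and $\kd\le\kdq^\ell$ for the corner.
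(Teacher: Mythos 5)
Your proposal is correct and follows essentially the same three-part decomposition as the paper's proof: bound the left edge $\abpl=0$ by convexity in $\abph$ together with the corner estimate $\ut(0,0\mid h)<\ut(\stgp^*\mid h)$ coming from $\kd\le\kdq^\ell$, bound the hypotenuse $\abph+\frac{\pr(\ell\mid h)}{\pr(h\mid h)}\abpl=1$ by showing the utility there is affine and increasing in $\abph$ with value $\ut(\stgp^*\mid h)$ only at $(0,1)$, and interpolate across each horizontal slice by linearity in $\abpl$. The only cosmetic difference is that you obtain the hypotenuse bound by observing that the peer's report distribution matches the truthful one ($r=p$) rather than by computing the directional derivative as the paper does, and you carry out the corner computation explicitly rather than citing Step 1 — both yield identical inequalities.
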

A similar triangle characterization also applies to the $|\ell$ side. 
\begin{lemma}
\label{lem:subspace_l}
    For any $(\abpl, \abph) \in \mathbb{R}^2$ satisfying (1) $\abpl \le 1$, (2) $\abph \le 1$, and (3) $\abph + \frac{\pr(\ell \mid \ell)}{\pr (h \mid \ell)}\cdot \abpl \ge 1$, it always holds that $ \ut(\abpl, \abph \mid \ell) \le \ut(\stgp^*\mid \ell )$, and the equality holds only when $\abpl = 0$ and $\abph = 1$. 
\end{lemma}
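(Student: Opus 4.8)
The cleanest route is to deduce Lemma~\ref{lem:subspace_l} from Lemma~\ref{lem:subspace_h} using the symmetry of the peer prediction model under interchanging the two signals $h$ and $\ell$. Relabeling $h\leftrightarrow\ell$ carries the instance to another valid peer prediction instance: the relabeled prior is still symmetric, the relabeled scoring rule (precompose $\ps$ with the swap on both arguments) is still strictly proper, and the relabeled posteriors satisfy $\pr'(h\mid h)=\pr(\ell\mid\ell)>\pr(\ell\mid h)=\pr'(h\mid\ell)$ because $\pr(h\mid h)>\pr(h\mid\ell)$, with $\pr'(h\mid\ell)=\pr(\ell\mid h)>0$ and $\pr'(\ell\mid h)=\pr(h\mid\ell)>0$. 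Under this relabeling a strategy $(\bpl,\bph)$ becomes $(1-\bph,\,1-\bpl)$, truthful reporting $(0,1)$ is a fixed point, the interim utility of a deviator conditioned on signal $\ell$ in the original equals the interim utility conditioned on signal $h$ in the relabeled instance, and $\ut(\stgp^*\mid\ell)$ maps to $\ut(\stgp^*\mid h)$. The side condition on $\ag,\kd$ in force when these lemmas are invoked (Step~2, i.e.\ $\kd\le\kdq$) is preserved, since relabeling merely swaps $\kdq^h$ and $\kdq^\ell$ and leaves $\kdq=\min(\kdq^h,\kdq^\ell,\ag)$ unchanged.

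Concretely, I would set $\abpl'=1-\abph$ and $\abph'=1-\abpl$ and check that the triangle in Lemma~\ref{lem:subspace_l} --- namely $\abpl\le 1$, $\abph\le 1$, $\abph+\frac{\pr(\ell\mid\ell)}{\pr(h\mid\ell)}\abpl\ge 1$ --- is exactly the preimage under this map of the triangle in Lemma~\ref{lem:subspace_h} applied to the relabeled instance --- namely $\abpl'\ge 0$, $\abph'\ge 0$, $\abph'+\frac{\pr'(\ell\mid h)}{\pr'(h\mid h)}\abpl'\le 1$ --- after substituting $\frac{\pr'(\ell\mid h)}{\pr'(h\mid h)}=\frac{\pr(h\mid\ell)}{\pr(\ell\mid\ell)}$ and clearing denominators. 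Lemma~\ref{lem:subspace_h} then gives $\ut(\abpl',\abph'\mid h)\le\ut(\stgp^*\mid h)$ with equality only at $(\abpl',\abph')=(0,1)$; translating back yields $\ut(\abpl,\abph\mid\ell)\le\ut(\stgp^*\mid\ell)$ with equality only at $(\abpl,\abph)=(0,1)$, which is the claim. Alternatively one can give a self-contained argument mirroring the proof of Lemma~\ref{lem:subspace_h}: establish that (1)~for fixed $\abph$, $\ut(\abpl,\abph\mid\ell)$ is convex in $\abpl$ --- the reward from truthful peers is linear in $\abpl$, and the reward from deviating peers has $\abpl^2$-coefficient a positive multiple of $\ps(h,\vpr_h)-\ps(\ell,\vpr_h)-\ps(h,\vpr_\ell)+\ps(\ell,\vpr_\ell)$, which is positive by properness of $\ps$ and $\pr(h\mid h)>\pr(h\mid\ell)$; (2)~for fixed $\abpl$, $\ut(\abpl,\abph\mid\ell)$ is linear in $\abph$; and (3)~on the line $\abph+\frac{\pr(\ell\mid\ell)}{\pr(h\mid\ell)}\abpl=1$ the probability $\pr(h\mid\ell)\abph+\pr(\ell\mid\ell)\abpl$ that a deviating peer reports $h$ collapses to the constant $\pr(h\mid\ell)$, so $\ut(\abpl,\abph\mid\ell)$ is linear in $\abpl$ there with slope $\Ex_{\sigi\sim\vpr_\ell}[\ps(\sigi,\vpr_h)-\ps(\sigi,\vpr_\ell)]<0$. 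By (2) the maximum over the triangle is attained on the edge $\abph=1$ or on the hypotenuse $\abph+\frac{\pr(\ell\mid\ell)}{\pr(h\mid\ell)}\abpl=1$; on the hypotenuse, (3) places it at the vertex $(0,1)$, where it equals $\ut(\stgp^*\mid\ell)$; on the edge $\abph=1$, convexity from (1) places it at a vertex, i.e.\ $(0,1)$ (equality) or $(1,1)$.

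The main obstacle, in either route, is the single corner $(1,1)$ --- all $\kd$ deviators always report $h$, evaluated at a deviator of type $\ell$. There the deviator's reward from fellow deviators jumps up to $\ps(h,\vpr_h)$, which may exceed $\ut(\stgp^*\mid\ell)$, so the inequality is not automatic: it needs the fraction of deviators $\frac{\kd-1}{\ag-1}$ to be small. Expanding the interim utilities,
\begin{equation*}
\ut(\stgp^*\mid\ell)-\ut(1,1\mid\ell)=\frac{\ag-\kd}{\ag-1}\,\Ex_{\sigi\sim\vpr_\ell}[\ps(\sigi,\vpr_\ell)-\ps(\sigi,\vpr_h)]-\frac{\kd-1}{\ag-1}\big(\ps(h,\vpr_h)-\Ex_{\sigi\sim\vpr_\ell}[\ps(\sigi,\vpr_\ell)]\big),
\end{equation*}
and using that the first coefficient is nonnegative (properness) together with the identity $\Ex_{\sigi\sim\vpr_\ell}[\ps(\sigi,\vpr_\ell)-\ps(\sigi,\vpr_h)]+\big(\ps(h,\vpr_h)-\Ex_{\sigi\sim\vpr_\ell}[\ps(\sigi,\vpr_\ell)]\big)=\pr(\ell\mid\ell)\big(\ps(h,\vpr_h)-\ps(\ell,\vpr_h)\big)$, a short computation shows this difference is strictly positive exactly when $\kd\le\kdq^h$, hence in particular whenever $\kd\le\kdq$, which is the regime in which the lemma is used. (Under the symmetry route this corner is precisely the corner $(0,0)$ of Lemma~\ref{lem:subspace_h}, so the identical side condition reappears there.) Establishing this corner bound, and verifying that equality over the whole triangle propagates back to the single point $(0,1)$, are the only steps requiring care; the rest is routine expansion of $\ps(\sigi_j,\vpr_{\rp_i})$ over the possible reports of $i$ and its peers.
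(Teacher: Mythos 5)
Your proposal is correct, and your second, self-contained route is exactly the paper's proof: the paper establishes Lemma~\ref{lem:subspace_l} by mirroring Lemma~\ref{lem:subspace_h} --- convexity of $\ut(\abpl,1\mid\ell)$ in $\abpl$ on the top edge, the constant negative derivative $\Ex_{\sigi\sim\vpr_\ell}[\ps(\sigi,\vpr_h)-\ps(\sigi,\vpr_\ell)]$ along the line $\abph+\frac{\pr(\ell\mid\ell)}{\pr(h\mid\ell)}\abpl=1$ (your observation that the deviating peer's report distribution collapses to the truthful one on that line is precisely why the derivative is this constant), and linearity in $\abph$ to fill in the rest of the triangle. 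Your preferred route --- the formal $h\leftrightarrow\ell$ relabeling that transports Lemma~\ref{lem:subspace_h} to the $\ell$ side --- is a different packaging of the same symmetry: you correctly check that the relabeled instance satisfies all model assumptions, that $(\abpl,\abph)\mapsto(1-\abph,1-\abpl)$ carries the two triangles onto each other, and that $\kdq$ is invariant, which buys a one-line deduction instead of redoing the computation. Where you are in fact more careful than the paper: its proof of the $\ell$-side lemma dismisses the top edge with ``convexity'' and never names the corner $(1,1)$; your explicit expansion of $\ut(\stgp^*\mid\ell)-\ut(1,1\mid\ell)$, together with the identity reducing the sum of its two coefficients to $\pr(\ell\mid\ell)\bigl(\ps(h,\vpr_h)-\ps(\ell,\vpr_h)\bigr)$, recovers exactly the Step-1 all-report-$h$ computation that the paper implicitly invokes through $\kd\le\kdq$ --- the hypothesis under which both lemmas are applied but which neither lemma statement records --- so your treatment makes the hidden dependence on $\kd\le\kdq^h$ (and the propagation of strictness away from $(0,1)$) explicit rather than implicit.
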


The union of the two triangles covers $[0, 1]^2$, as $\frac{\pr(\ell \mid h)}{\pr(h \mid h)} < \frac{\pr(\ell \mid \ell)}{\pr(h \mid \ell)}$. Therefore, for any $\astg \neq (0, 1)$, the \qi{} expected utility of the deviators will be strictly lower than that of truthful reporting conditioned on at least one of the signals. 

\begin{figure}[htbp]
    \centering
    \includegraphics[width=0.9\linewidth]{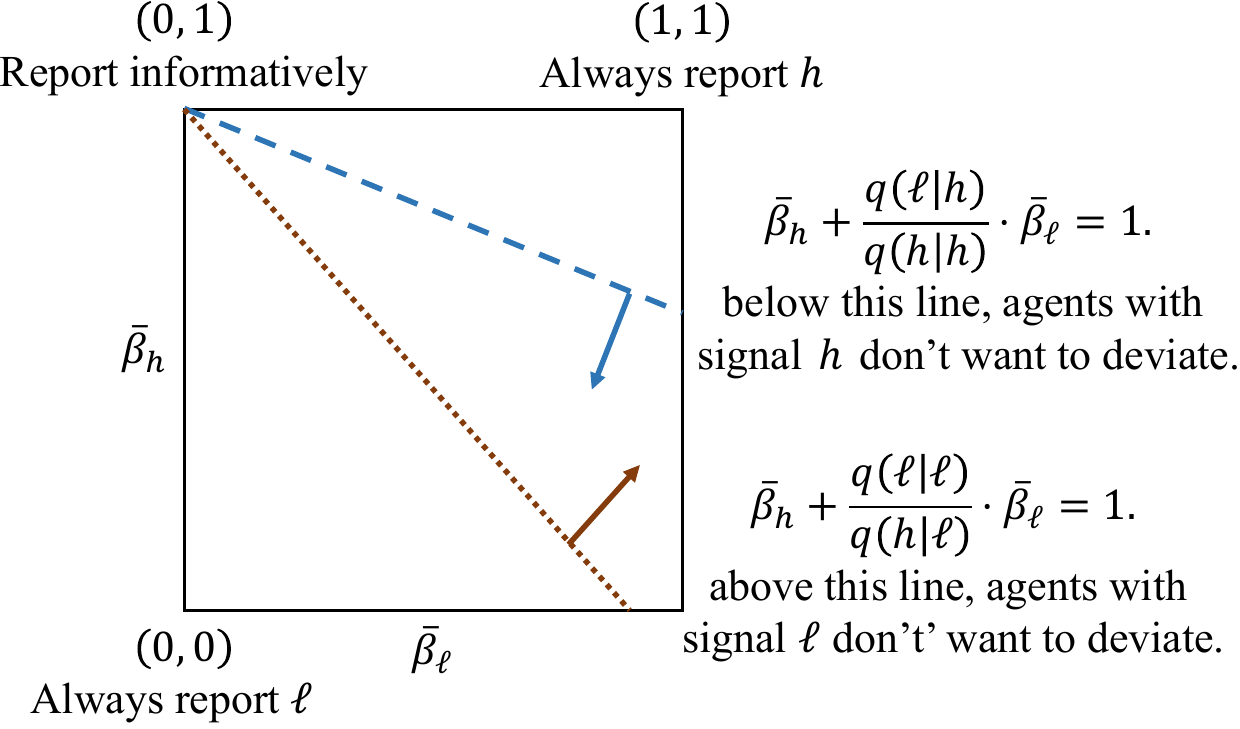}
    \caption{The illustration of Lemma~\ref{lem:subspace_h} and~\ref{lem:subspace_l}. The X-axis and Y-axis denote $\abpl$ and $\abph$ respectively. The two half-planes characterized by two lines cover the $[0,1]^2$ area, so there always exists agents with a certain signal that do not wish to deviate. Two lines are not necessarily located above/below point (1, 0). }
    \label{fig:qi_lemma}
    \Description{The union of the two triangles covers $[0, 1]^2$.}
\end{figure}

\textbf{Step 3} consists of three parts. Here we present the reasoning conditioned on private signal being $h$. The reasoning of the $\ell$ side is similar. Given a deviating strategy $\stgp$, let $\astgp$ be the strategy where all the deviators play the average strategy $\astg = \frac1k \sum_{i \in D} \stg_i$ in $\stgp$. $\astg$ will be located in the area characterized by Lemma~\ref{lem:subspace_h}. In the first part, we show that when the deviators switch from $\astgp$ to $\stgp$, there exists some agent $i$ whose expected utility will not increase by $\frac{C}{\ag - 1}$, where $C$ is a constant related to the scoring rule. Formally, $\ut_i(\stgp \mid h) \le \ut(\astgp \mid h) + \frac{C}{\ag - 1}$. This comes from the fact that $\ut_i(\stgp \mid h) - \ut(\astgp \mid h)$ can be written in the form of $\ut_i(\stgp \mid h) - \ut(\astgp \mid h) = M(\astgp) \cdot (\bph^i - \abph) + O(\frac{1}{\ag - 1})$, where $M$ is a function of $\astgp$. Therefore, for any fixed $\astgp$, either an agent $i$ with $\bph^i \le \abph$ or with $bph^i \ge \abph$ satisfies $\ut_i(\stgp \mid h) \le \ut(\astgp \mid h) + \frac{C}{\ag - 1}$.
In the second part, with similar reasoning for Lemma~\ref{lem:subspace_h}, we show that for sufficiently large $\ag$ and any $\astg$ in the triangle range, $\ut(\astgp \mid h) \ge \ut(\stgp^* \mid h) - \frac{C}{\ag - 1}$ only if $\astg$ is close to $(0, 1)$ or $(0, 0)$. In all other cases, the upper bound of the gain from switching from $\astgp$ to $\stgp$ is insufficient to fill the gap between $\astgp$ and truth-telling $\stgp^*$ ($\ut_i(\stgp \mid h) \le \ut(\astgp \mid h) + \frac{C}{\ag - 1} < \ut(\stgp^* \mid h)$). Finally, we show that in the corner cases, there is a deviator $i$ such that $\ut_i(\stgp \mid h) < \ut(\stgp^* \mid h)$. \qed

}

\section{Other Applications and Future Directions}
{Although our theoretical results focus on peer prediction, we believe that our solution concept of (ex-ante) Bayesian $\kd$-strong equilibrium is a powerful tool to characterize coalitional strategic behaviors (with bounded group size) and predict the outcome in a wide range of real-world scenarios. Here we give two more scenarios in which our solution can be applied: voting with partially informed voters and the Private Blotto game. 

\subsection{Voting with Partially Informed Voters}
Starting from the Condorcet Jury Theorem~\cite{Condorcet1785:Essai}, voting with partially informed voters has been extensively studied in the past literature~\cite{austen1996information,feddersen1997voting,wit1998rational, mclennan1998consequences, myerson1998extended, duggan2001bayesian, pesendorfer1996swing,feddersen1998convicting,martinelli2002convergence,gerardi2000jury,meirowitz2002informative,coughlan2000defense, han2023wisdom,acharya2016information,kim2007swing,bhattacharya2013preference, bhattacharya2023condorcet,ali2018adverse}.
In the voting setting where voters are partially informed, each voter only has partial information about the alternatives and his/her preference over the alternatives is not immediately clear.
This happens in myriad scenarios such as presidential elections (where the performance of each president candidate is not fully known) and voting for or against a certain policy or a certain decision (where the effect of the policy/decision is unclear at the moment of the voting).
The goal of a voting scheme is to aggregate the information of the voters and uncover the alternative favored by the majority  \emph{ex-post}.
This is already a non-trivial task in the information aggregation aspect~\cite{prelec2017solution}, and the situation is even more complex with strategic agents.
In fact, this problem is highly non-trivial even with two alternatives.

Consider the following typical model.
A set of $n$ voters are voting between two alternatives $\{\mathcal{A},\mathcal{B}\}$.
There are two world states $\{X,Y\}$.
One of them is \emph{the actual world}, but this is unknown to the voters.
A voter's preference over $\{\mathcal{A},\mathcal{B}\}$ may or may not depend on the world state.
For example, a voter may prefer $\mathcal{A}$ over $\mathcal{B}$ if the actual world is $X$ and $\mathcal{B}$ over $\mathcal{A}$ if the actual world is $Y$, while another voter may always prefer $\mathcal{A}$ over $\mathcal{B}$ regardless of the world state.
The goal of a voting mechanism is to identify the alternative that is favored by the majority if the actual world state were revealed.
Although voters do not know the actual world state, each voter receives a signal that is correlated to the world states.
A voter, after receiving the signal, forms beliefs over the likelihood of each world state, infers the signals received by other voters, and casts a vote according to these pieces of information.
This naturally formulates the problem as a Bayesian game.
In addition, when the number of voters $n$ is large, an individual voter's change of strategy is unlikely to affect the outcome of the election, and thus his/her expected utility is almost unrelated to his/her strategy.
On the other hand, voters form coalitions and jointly decide their votes in many practical scenarios.
This motivates the study of strong Bayes-Nash equilibria.


The celebrated result from~\citet{han2023wisdom} shows that, when voters' preferences are \emph{aligned}, under the \emph{majority vote mechanism} (each voter votes for either $\mathcal{A}$ or $\mathcal{B}$; the alternative voted by more than half of the voters wins), the alternative favored by the majority (in \emph{ex-post}) is almost surely identified \emph{if and only if} the strategy profile is a strong Bayes Nash equilibrium.
Here, by saying aligned preferences, we mean that all voters' utilities for alternative $\mathcal{A}$ are higher in world state $X$ than in world state $Y$ and their utilities for alternative $\mathcal{B}$ are higher in world state $Y$ than in world state $X$.
That is, all voters' preferences are aligned in that they agree $X$ ``corresponds to'' $\mathcal{A}$ and $Y$ ``corresponds to'' $\mathcal{B}$, although the extent the voters' preferences are aligned with this correspondence can be different and due to which voters can be classified into three types:
\begin{itemize}
    \item the ``left-wing voters'' who always prefer $\mathcal{A}$: $v(\mathcal{A},X)>v(\mathcal{A},Y)>v(\mathcal{B},Y)>v(\mathcal{B},X)$
    \item the ``right-wing voters'' who always prefer $\mathcal{B}$: $v(\mathcal{B},Y)>v(\mathcal{B},X)>v(\mathcal{A},X)>v(\mathcal{A},Y)$
    \item  ``swing voters'' who prefer the alternative corresponding to the world state: $v(\mathcal{A},X)>v(\mathcal{B},X)$ and $v(\mathcal{B},Y)>v(\mathcal{A},Y)$
\end{itemize}
where $v(a,s)$ denotes the utility for alternative $a\in\{\mathcal{A},\mathcal{B}\}$ given the actual world state $s\in\{X,Y\}$.

The story is much more complicated with general utilities $v(\cdot,\cdot)$ that are not necessarily aligned.
In~\citet{deng2024aggregation}, it is proved that strong Bayes Nash equilibria may not exist even with only two types of voters with antagonistic preferences.
In particular, ``good'' equilibria that identify the majority-favored alternative only exist when the voters from one type significantly outnumber the voters from the other type.
When the population sizes of the two types of voters are close, \citet{deng2024aggregation} show that no strong Bayes Nash equilibrium exists.

However, only strong equilibria with unrestrictive group sizes are considered in the above-mentioned work.
A typical deviation group in a strategy profile consists of all voters of the same type, and the existence of this kind of large deviation group prevents many strategy profiles from being equilibria.
When considering more practical scenarios with bounded coalition sizes, more ``good'' equilibria are attainable.
Given the large size of deviation groups in the non-equilibria found in~\citet{deng2024aggregation}, it is likely there is an interpolation between the deviation group size $k$ and the distribution of voters from different types where ``good'' equilibria exist.
This provides a fine-grained structure to the problem compared with the ``all-or-nothing'' result in~\citet{deng2024aggregation}.

Strong equilibria are even less likely to exist for more general utilities.
It is appealing to apply our new equilibrium concepts with bounded deviating group sizes to characterize voters' strategic behaviors and obtain more positive and fine-grained results.
We believe this is a challenging yet exciting future research direction.

\subsection{Private Blotto Game}
Private Blotto game~\citep{donahue2023private} is a decentralized variation of the classic Colonel Blotto game~\citep{Borel1953:Blotto}.
It is proposed in order to model the conflict in the crowdsourcing social media annotation. For example, the Community Notes on X.com~\citep{wojcik2022birdwatch} allows users to vote for/against posts to identify misinformation and toxic speech with the wisdom of the crowd. 
\begin{example}
    \label{ex:anno} Suppose there are $\ag$ platform users and $m$ posts on a topic (for example, whether restrictions should be made for the COVID pandemic). Users obtain different private information from different sources, which can be generally categorized into two types, pros and cons. Each user simultaneously chooses and labels one post based on their type. The labels on each post will eventually determine the influence on the readers. A post with more supporters spreads more widely, and a post with more opponents will be announced as misinformation. Each user aims to maximize the influence of their type and plays the game strategically. What will be a stable status in such a scenario?
\end{example}

The traditional Colonel Blotto game models this scenario as a centralized game, where two opposite ``colonels'' (for example, campaign groups) control all the users. In the Private Blotto game, on the other hand, users make their own decisions on where to deploy. This better simulates the modern social media environment where a central coordinator is generally lacking. 



\begin{definition}[Private Blotto game.] 
    $\ag$ agents are competing over $m$ items. Each agent has a type ($pro$ or $con$). Every agent (simultaneously) chooses exactly one item to label. The outcome of each item is determined by some outcome function. The disutility of each agent is the distance from the agent's type to each item's outcome. 
\end{definition}

The results in the Private Blotto game~\citep{donahue2023private} appear to heavily rely on the complete lack of coordination, which is also not entirely realistic. While a central coordinator is lacking, an agent can still locally coordinate with a few others. This allows (small) strategic groups and local campaigns to emerge in real-world scenarios. Moreover, these settings nearly always lack complete information and might be more faithfully modeled by agents receiving different information about various topics.   

In this setting, our new solution concept of (ex-ante) Bayesian $\kd$-strong equilibrium seamlessly interpolates between these two extremes of complete centralization and complete decentralization.  
The bound $\kd$ can characterize how well-organized the agents are. When $\kd = 1$, agents are fully decentralized. A larger $\kd$ characterizes scenarios where agents coordinate with friends, neighborhoods, or campaigns on relevant issues. Finally, when $\kd$ is large enough, agents can be viewed as commanded by two opposite centralized ``colonels'', and the game becomes closer to the traditional Colonel Blotto game. 
Moreover, our definition will also naturally extend to the setting where agents have more than two sides (for example, different political factions that are more or less aligned) and the scenario where the agent's utilities are related to an underlying ground truth rather than peer partisanship.

}

\bibliographystyle{ACM-Reference-Format}
\balance
\bibliography{references,newref}

\clearpage
\appendix
\onecolumn
{\section{Properties of proper scoring rule.}

For a scoring rule $PS$, we define $\eps: \Delta_{\Sigset}\times \Delta_{\Sigset} \to \mathbb{R}$ such that for two distributions $\vpr_1$ and $\vpr_2$, $\eps(\vpr_1; \vpr_2) = \Ex_{\sigi \sim \vpr_1}[\ps(\sigi, \vpr_2)]$.

Proper scoring rules have the following properties. 
\begin{theorem}
    \label{thm:psr}
    [\citep{hendrickson1971score}]
    A scoring rule $\ps$ is (strictly) proper if and only if there exists a (strictly) convex function $G: \Delta_{\Sigset} \to \mathbb{R}$, such that for any $\vpr$, $G(\vpr) = \eps(\vpr; \vpr)$, and $PS(\sigi, \vpr) =G(\vpr) + dG (\vpr) (\delta_{\sigi} - \vpr),$ where $dG$ is a subgradient of $G$, and $\delta_{\sigi}$ is the distribution that putting probability 1 on $\sigi$.
\end{theorem}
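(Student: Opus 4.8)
The plan is to prove the two directions of the equivalence by taking the candidate convex function to be the expected score itself and exploiting that, for a fixed report, the expected score is \emph{affine} in the true distribution.

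\emph{Properness implies the representation.} I would fix a proper scoring rule $\ps$ and set $G(\vpr) := \eps(\vpr;\vpr)$. For each fixed $\vpr'$, the map $\vpr\mapsto \eps(\vpr;\vpr') = \sum_{\sigi\in\Sigset}\vpr(\sigi)\,\ps(\sigi,\vpr')$ is affine in $\vpr$; call it $\ell_{\vpr'}$. Properness is precisely the statement $G(\vpr)=\ell_{\vpr}(\vpr)\ge \ell_{\vpr'}(\vpr)$ for every $\vpr'$, so $G$ is a pointwise supremum of affine functions, hence convex, and $\ell_{\vpr}$ is a supporting affine function of $G$ at $\vpr$. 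Taking the linear part of $\ell_{\vpr}$ as a subgradient $dG(\vpr)$ gives $\ell_{\vpr}(\vpr'') = G(\vpr)+dG(\vpr)(\vpr''-\vpr)$ for all $\vpr''$; evaluating at $\vpr''=\delta_{\sigi}$ and using $\ell_{\vpr}(\delta_{\sigi})=\ps(\sigi,\vpr)$ yields the claimed identity $\ps(\sigi,\vpr)=G(\vpr)+dG(\vpr)(\delta_{\sigi}-\vpr)$. For the strict version, strict properness says $\ell_{\vpr}(\vpr')<\ell_{\vpr'}(\vpr')=G(\vpr')$ whenever $\vpr'\ne\vpr$, so every supporting affine function touches the graph of $G$ only at its base point; a short argument (if $G$ were affine along a segment, a common subtangent at an interior point of that segment would touch at two distinct points) then upgrades convexity to strict convexity.

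\emph{The representation implies properness.} Conversely, assume $G$ is convex with $G(\vpr)=\eps(\vpr;\vpr)$ and $\ps(\sigi,\vpr)=G(\vpr)+dG(\vpr)(\delta_{\sigi}-\vpr)$. Taking the expectation over $\sigi\sim\vpr_1$, using linearity of the functional $dG(\vpr_2)$ together with $\Ex_{\sigi\sim\vpr_1}[\delta_{\sigi}]=\vpr_1$, gives $\eps(\vpr_1;\vpr_2)=G(\vpr_2)+dG(\vpr_2)(\vpr_1-\vpr_2)$. The subgradient inequality for $G$ then yields $\eps(\vpr_1;\vpr_1)=G(\vpr_1)\ge G(\vpr_2)+dG(\vpr_2)(\vpr_1-\vpr_2)=\eps(\vpr_1;\vpr_2)$, which is properness; if $G$ is strictly convex this inequality is strict for $\vpr_1\ne\vpr_2$, giving strict properness.

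The main obstacle is the bookkeeping around subgradients on the simplex $\Delta_{\Sigset}$, which is not full-dimensional: both ``subgradient'' and ``strictly convex'' must be interpreted relative to the affine hull of $\Delta_{\Sigset}$, and $dG(\vpr)$ is pinned down only up to a linear functional that vanishes on differences of distributions. One must check that this ambiguity is harmless, i.e.\ that $dG(\vpr)(\delta_{\sigi}-\vpr)$ is well defined because $\delta_{\sigi}-\vpr$ lies in the tangent space of the simplex, and state the strict-properness $\Leftrightarrow$ strict-convexity correspondence cleanly via the ``unique touch point'' characterization. Everything else is a direct computation.
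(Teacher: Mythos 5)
Your proposal is correct. Note that the paper does not prove this statement at all — it is quoted as a known result from the cited reference (Hendrickson and Buehler), and only used later to derive Lemma~\ref{lem:pr_psr} — so there is no in-paper proof to compare against. Your argument is the standard Savage/Gneiting--Raftery representation proof: take $G(\vpr)=\eps(\vpr;\vpr)$ as a pointwise supremum of the affine maps $\vpr\mapsto\eps(\vpr;\vpr')$, read properness as the statement that $\eps(\cdot\,;\vpr)$ supports $G$ at $\vpr$, and evaluate the supporting affine function at the point masses $\delta_{\sigi}$; the converse is the subgradient inequality applied to $\eps(\vpr_1;\vpr_2)=G(\vpr_2)+dG(\vpr_2)(\vpr_1-\vpr_2)$. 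Both strictness directions go through as you sketch: a supporting affine function at an interior point of any segment on which $G$ is affine must touch along the whole segment (contradicting the unique-touch-point consequence of strict properness), and conversely equality in the subgradient inequality at distinct points forces $G$ to be affine on the connecting segment. Your caveat about interpreting subgradients and strict convexity relative to the affine hull of $\Delta_{\Sigset}$, and checking that $dG(\vpr)(\delta_{\sigi}-\vpr)$ is unaffected by the resulting ambiguity, is exactly the right bookkeeping point and is consistent with how the paper later uses the subgradient (only ever paired with differences of distributions).
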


We use this property to prove the following Lemma.

\begin{lemma}
    \label{lem:pr_psr}
    Given the assumptions, the following inequalities hold. 
    (1) $\ps(h, \vpr_h) > \ps(h,  \vpr_{\ell})$. (2) $\ps(\ell, \vpr_{\ell}) > \ps(\ell, \vpr_h)$.  
\end{lemma}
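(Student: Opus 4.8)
\textbf{Proof proposal for Lemma~\ref{lem:pr_psr}.}

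The plan is to derive both inequalities as consequences of the strict properness of $\ps$, using the variational characterization in Theorem~\ref{thm:psr}. First I would record what strict properness gives directly: for any two distributions $\vpr_1 \ne \vpr_2$ on $\Sigset = \{\ell, h\}$ we have $\eps(\vpr_1; \vpr_1) > \eps(\vpr_1; \vpr_2)$, i.e. $\Ex_{\sigi \sim \vpr_1}[\ps(\sigi, \vpr_1)] > \Ex_{\sigi \sim \vpr_1}[\ps(\sigi, \vpr_2)]$. The assumptions $\pr(h \mid h) > \pr(h \mid \ell)$ together with $\pr(h\mid\ell) > 0$ and $\pr(\ell \mid h) > 0$ guarantee $\vpr_h \ne \vpr_\ell$ and that both are interior distributions, so Theorem~\ref{thm:psr} applies with a strictly convex $G$ and $\ps(\sigi,\vpr) = G(\vpr) + dG(\vpr)(\delta_\sigi - \vpr)$.

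The key step is to expand the difference $\ps(h,\vpr_h) - \ps(h,\vpr_\ell)$ using this formula. Since $\Sigset$ has only two elements, write $\vpr_h = (\pr(\ell\mid h), \pr(h\mid h))$ and $\vpr_\ell = (\pr(\ell\mid\ell), \pr(h\mid\ell))$ as points on the simplex. Plugging in,
\begin{align*}
\ps(h,\vpr_h) - \ps(h,\vpr_\ell) &= \big[G(\vpr_h) + dG(\vpr_h)(\delta_h - \vpr_h)\big] - \big[G(\vpr_\ell) + dG(\vpr_\ell)(\delta_h - \vpr_\ell)\big].
\end{align*}
I would then add and subtract $dG(\vpr_\ell)(\vpr_h - \vpr_\ell)$ (or equivalently combine the two Bregman-type expressions) to rewrite this as a sum of two nonnegative Bregman divergence terms plus a term proportional to $(dG(\vpr_h) - dG(\vpr_\ell))$ evaluated against $(\delta_h - \vpr_h)$. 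The cleanest route is: the strict convexity of $G$ gives the strict subgradient inequality $G(\vpr_h) - G(\vpr_\ell) > dG(\vpr_\ell)(\vpr_h - \vpr_\ell)$, which handles one piece; and because $\delta_h - \vpr_h$ points in the direction of increasing the $h$-coordinate while $\vpr_h - \vpr_\ell$ also points in that direction (as $\pr(h\mid h) > \pr(h\mid\ell)$), monotonicity of the subgradient of the one-dimensional convex restriction of $G$ gives $(dG(\vpr_h) - dG(\vpr_\ell))(\delta_h - \vpr_h) \ge 0$. Summing yields strict positivity. Statement (2) follows by the symmetric argument with the roles of $\ell$ and $h$ swapped, using $\pr(\ell\mid\ell) > \pr(\ell\mid h)$ (which is equivalent to the assumption $\pr(h\mid h) > \pr(h\mid\ell)$).

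The main obstacle I anticipate is bookkeeping around the subgradient: $G$ need only be differentiable a.e., so $dG$ is a selection of subgradients rather than a gradient, and one must be careful that the monotonicity inequality $(dG(\vpr_h) - dG(\vpr_\ell))(\vpr_h - \vpr_\ell) \ge 0$ holds for \emph{any} valid choice of subgradients (it does, by the standard monotonicity of the subdifferential of a convex function), and then relate $(\delta_h - \vpr_h)$ to a positive multiple of $(\vpr_h - \vpr_\ell)$ — this works precisely because in two dimensions all these vectors are collinear along the simplex, with $\delta_h - \vpr_h = (1 - \pr(h\mid h))\cdot e$ and $\vpr_h - \vpr_\ell = (\pr(h\mid h) - \pr(h\mid\ell))\cdot e$ for the same direction vector $e$, both coefficients positive. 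An alternative, more elementary route that sidesteps subgradients entirely: use strict properness twice, $\eps(\vpr_h;\vpr_h) > \eps(\vpr_h;\vpr_\ell)$ and $\eps(\vpr_\ell;\vpr_\ell) > \eps(\vpr_\ell;\vpr_h)$, expand each $\eps(\vpr_i;\vpr_j)$ as $\pr(h\mid i)\ps(h,\vpr_j) + \pr(\ell\mid i)\ps(\ell,\vpr_j)$, and do linear algebra on the resulting two inequalities together with $\pr(h\mid h) > \pr(h\mid\ell)$ to isolate $\ps(h,\vpr_h) - \ps(h,\vpr_\ell) > 0$ and $\ps(\ell,\vpr_\ell) - \ps(\ell,\vpr_h) > 0$; I would likely present this version since it is self-contained given only the definition of strict properness.
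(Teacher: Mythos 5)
Your proposal is correct, and it contains two routes worth distinguishing. Your primary sketch is essentially the paper's own proof: invoke Theorem~\ref{thm:psr}, expand $\ps(h,\vpr_h)-\ps(h,\vpr_\ell)$ via the representation $G(\vpr)+dG(\vpr)(\delta_\sigi-\vpr)$, use monotonicity of the subdifferential, and exploit that on the two-point simplex $\delta_h-\vpr_h$ is a positive multiple of $\vpr_h-\vpr_\ell$ (coefficient $\pr(\ell\mid h)/(\pr(h\mid h)-\pr(h\mid\ell))$). The one refinement over the paper is how strictness is obtained: the paper's representation argument only yields the weak inequalities and it then invokes strict properness, $\eps(\vpr_h;\vpr_h)>\eps(\vpr_h;\vpr_\ell)$, combined with the weak bound on the $\ell$-term, to force $\ps(h,\vpr_h)>\ps(h,\vpr_\ell)$ (and symmetrically for (2)); you instead get strictness directly from the strict Bregman term $G(\vpr_h)-G(\vpr_\ell)-dG(\vpr_\ell)(\vpr_h-\vpr_\ell)>0$, which is valid since strict convexity of $G$ gives the strict subgradient inequality at distinct points, so your decomposition into a strictly positive divergence term plus a nonnegative monotonicity term closes the argument in one pass. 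The alternative route you say you would actually present is genuinely different from the paper: writing $x=\ps(h,\vpr_h)-\ps(h,\vpr_\ell)$ and $y=\ps(\ell,\vpr_\ell)-\ps(\ell,\vpr_h)$, the two strict-properness inequalities read $\pr(h\mid h)x>\pr(\ell\mid h)y$ and $\pr(\ell\mid\ell)y>\pr(h\mid\ell)x$; if either $x\le 0$ or $y\le 0$ then both are forced strictly negative, and multiplying the two inequalities (in absolute-value form) yields $\pr(h\mid h)\pr(\ell\mid\ell)<\pr(h\mid\ell)\pr(\ell\mid h)$, contradicting $\pr(h\mid h)>\pr(h\mid\ell)$ and $\pr(\ell\mid\ell)>\pr(\ell\mid h)$, so $x>0$ and $y>0$. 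This version is self-contained (only the definition of strict properness, no representation theorem or subgradient bookkeeping), which is a genuine simplification for the binary-signal setting; what the paper's geometric route buys in exchange is transparency about where the inequalities come from (divergence plus subgradient monotonicity) and an argument that extends more naturally beyond two signals, where the collinearity trick and your $2\times 2$ sign analysis no longer suffice as stated. If you present the elementary version, spell out the sign case analysis explicitly, since that determinant-type inequality is exactly where the assumption $\pr(h\mid h)>\pr(h\mid\ell)$ enters.
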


\begin{proof}[Proof of Lemma~\ref{lem:pr_psr}]
Suppose $G$ and $dG$ are the convex function and the subgradient satisfying Theorem~\ref{thm:psr} regarding $\ps$. Then we have 
    \begin{align*}
        \ps(h, \vpr_h) = &\ G(\vpr_h) + dG(\vpr_h)\cdot(\ds_h - \vpr_h). \\
        \ps(h, \vpr_{\ell}) = &\ G(\vpr_{\ell}) + dG(\vpr_{\ell})\cdot(\ds_h - \vpr_{\ell}). 
    \end{align*}
    $\ds_h$ is the distribution on $\Sigset$ that putting probability 1 on $h$. 

    Since $dG$ is a subgradient of $G$, it satisfies $G(y) \ge G(x) + dG(x) \cdot (y - x)$ for any $x, y \in \Delta_{\Sigset}$. Consequently, $(dG(y) - dG(x))\cdot (y - x) \ge 0$ for any $x, y \in \Delta_{\Sigset}$. 
    Then we have 
    \begin{align*}
        \ps(h, \vpr_h) - \ps(h, \vpr_{\ell}) =&\ G(\vpr_h) - G(\vpr_{\ell}) + dG(\vpr_h)\cdot(\ds_h - \vpr_h) - dG(\vpr_{\ell})\cdot(\ds_h - \vpr_{\ell})\\
        \ge&\ dG(\vpr_{\ell})\cdot (\vpr_h - \vpr_{\ell})+ dG(\vpr_h)\cdot(\ds_h - \vpr_h) - dG(\vpr_{\ell})\cdot(\ds_h - \vpr_{\ell})\\
        =&\ (dG(\vpr_h) - dG(\vpr_{\ell}))\cdot (\ds_h - \vpr_{h}). 
    \end{align*}
    Note that $(\ds_h - \vpr_{h})(h) = 1 - \pr(h\mid h) = \pr(\ell \mid h)$, and $(\ds_h - \vpr_{h})(\ell) =-\pr(\ell \mid h)$. On the other hand, $(\vpr_h - \vpr_{\ell})(h) = -(\vpr_h - \vpr_{\ell})(\ell) = \pr(h \mid h) - \pr(h \mid \ell) > 0$. Therefore, 
    \begin{align*}
        \ps(h, \vpr_h) - \ps(h, \vpr_{\ell}) =&\ (dG(\vpr_h) - dG(\vpr_{\ell}))\cdot (\ds_h - \vpr_{h})\\
        =&\ \frac{\pr(\ell \mid h)}{\pr(h\mid h) - \pr(h \mid \ell)} \cdot (dG(\vpr_h) - dG(\vpr_{\ell}))\cdot (\vpr_h - \vpr_{\ell})\\
        \ge &\ 0,
    \end{align*}
    With similar reasoning we have $\ps(\ell, \vpr_{\ell}) - \ps(\ell, \vpr_h) \ge 0$.
    Then note that since $\ps$ is a strictly proper scoring rule, 
    \begin{align*}
        &\ \eps(\vpr_h; \vpr_h) - \eps(\vpr_h, \vpr_\ell)\\ = &\ \pr(h\mid h)\cdot (\ps(h, \vpr_h) - \ps(h, \vpr_\ell)) + \pr(\ell \mid h) \cdot(\ps(\ell, \vpr_h) - \ps(\ell, \vpr_\ell))\\
        >&\  0. 
    \end{align*}
    Therefore, $\ps(h, \vpr_h) - \ps(h, \vpr_\ell) > 0$.
    The proof of (2) follows a similar reasoning as (1). 
\end{proof}

\section{Equivalence with Bayesian Nash Equilibrium}
\label{apx:equiv}

In this section, we show that when $k = 1$, both ex-ante Bayesian $\kd$-strong equilibrium and Bayesian $\kd$-strong equilibrium are equivalent to the classical Bayesian Nash Equilibrium~\cite{Harsanyi67}. 

\begin{definition}[Bayesian Nash equilibrium]
    A strategy profile $\stgp = (\stg_i)_{i\in [\ag]}$ is a Bayesian Nash equilibrium (BNE) if for every agent $i$, every $i$'s strategy $\stg_i'$, and every type $\sigi_i \in \Sigset_i$, $\ut_i (\stgp \mid \sigi_i) \ge \ut_i((\stg_i', \stgp_{-i}) \mid \sigi_i)$, where $\stgp_{-i}$ is the strategies all other agents play in $\stgp$. 
\end{definition}

The following propositions shows the equivalence among solution concepts. 

\begin{prop}
\label{prop:BNE_to_exante}
    If a strategy profile $\stgp$ is a Bayesian Nash equilibrium, then $\stgp$ is an ex-ante Bayesian $1$-strong equilibrium. 
\end{prop}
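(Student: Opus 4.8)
The plan is to reduce the ex-ante $1$-strong condition to the interim optimality guaranteed by the Bayesian Nash equilibrium, via the law of total probability. A candidate deviation in Definition~\ref{def:ex_ante} with $\kd = 1$ consists of a single agent $D = \{i\}$ together with a strategy profile $\stgp' = (\stg'_i, \stgp_{-i})$ that differs from $\stgp$ only in agent $i$'s strategy, and for such a singleton coalition conditions (2) and (3) of the definition together say exactly $\ut_i(\stgp') > \ut_i(\stgp)$. Hence it suffices to show that for every agent $i$ and every alternative strategy $\stg'_i$ we have $\ut_i(\stgp') \le \ut_i(\stgp)$, which rules out the existence of any successful single-agent deviation.

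First I would fix such an $i$ and $\stg'_i$ and set $\stgp' = (\stg'_i, \stgp_{-i})$. Since $\stgp$ is a Bayesian Nash equilibrium, applying the definition to agent $i$ gives, for every type $\sigi_i \in \Sigset_i$, the inequality $\ut_i(\stgp \mid \sigi_i) \ge \ut_i(\stgp' \mid \sigi_i)$. Next I would observe that the marginal prior $\pr(\sigi_i)$ on agent $i$'s own type is determined by the common prior $\prQ$ and is unaffected by which strategy $i$ plays, so by the law of total probability $\ut_i(\stgp) = \sum_{\sigi_i \in \Sigset_i} \pr(\sigi_i)\cdot \ut_i(\stgp \mid \sigi_i)$, and likewise $\ut_i(\stgp') = \sum_{\sigi_i \in \Sigset_i} \pr(\sigi_i)\cdot \ut_i(\stgp' \mid \sigi_i)$. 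Summing the per-type inequalities with the nonnegative weights $\pr(\sigi_i)$ then yields $\ut_i(\stgp) \ge \ut_i(\stgp')$, which contradicts the strict inequality required for the deviation to succeed. Therefore no single-agent deviation can succeed, so $\stgp$ is an ex-ante Bayesian $1$-strong equilibrium.

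There is essentially no difficult step here: the only point requiring care is the decomposition of a singleton coalition's ex-ante utility over that agent's own types with weights fixed by the common prior, which is exactly the computation already carried out in the proof of Proposition~\ref{prop:etoq}. (The genuinely nontrivial direction is the converse — that an ex-ante Bayesian $1$-strong equilibrium is a Bayesian Nash equilibrium — since that requires going from the ex-ante comparison back to a comparison at each individual type, and would be treated as a separate proposition.)
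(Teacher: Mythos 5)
Your proof is correct and follows exactly the paper's argument: apply the Bayesian Nash condition type-by-type, then aggregate via the law of total probability with the prior weights $\pr(\sigi_i)$ to obtain the ex-ante inequality $\ut_i(\stgp) \ge \ut_i(\stgp')$, ruling out any successful singleton deviation. No gaps; your additional remark correctly identifies the converse direction as the one requiring a separate argument (Propositions~\ref{prop:qi_to_BNE} and~\ref{prop:etoq} close the cycle in the paper).
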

\begin{proof}
    Suppose $\stgp$ is a Bayesian Nash equilibrium, then for every agent $i$, every $i$'s strategy $\stg_i'$, and every type $\sigi_i \in \Sigset_i$, $\ut_i (\stgp \mid \sigi_i) \ge \ut_i((\stg_i', \stgp_{-i}) \mid \sigi_i)$. Then from the law of total probability, adding up all the types in $\Sigset$, $\ut_i(\stgp) \ge \ut_i((\stg_i', \stgp_{-i}))$. This implies that $\stgp$ is an ex-ante Bayesian $1$-strong equilibrium. 
\end{proof}

\begin{prop}
\label{prop:qi_to_BNE}
    If a strategy profile $\stgp$ is a Bayesian 1-strong equilibrium, then $\stgp$ is a Bayesian Nash equilibrium. 
\end{prop}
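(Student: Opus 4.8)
The plan is to argue by contraposition: assume $\stgp$ is not a Bayesian Nash equilibrium, and from this construct a successful single-agent ($|D|=1$) deviation, contradicting that $\stgp$ is a Bayesian $1$-strong equilibrium. First I would unpack the failure of the BNE condition: there exist an agent $\sag$, an alternative strategy $\stg_\sag'$, and a type $\sigi_\sag^\star \in \Sigset_\sag$ with $\ut_\sag((\stg_\sag', \stgp_{-\sag}) \mid \sigi_\sag^\star) > \ut_\sag(\stgp \mid \sigi_\sag^\star)$. The only subtlety is that $\stg_\sag'$ might \emph{hurt} agent $\sag$ on other types, so it is not immediately a witness for a $1$-BSE deviation, which (by condition (2) of Definition~\ref{def:qi}) must be non-harmful conditioned on \emph{every} type.

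The key observation that resolves this is a ``decoupling across types'': the \qi{} expected utility $\ut_\sag(\cdot \mid \sigi_\sag)$ depends on agent $\sag$'s own strategy only through the action distribution $\stg_\sag(\sigi_\sag)$ that $\sag$ plays at type $\sigi_\sag$ (the other agents' strategies being held fixed). This is immediate from the definition of $\ut_\sag(\stgp \mid \sigi_\sag)$, since the only place $\sag$'s strategy enters the expression is via $\rp_\sag \sim \stg_\sag(\sigi_\sag)$. Hence I would define the spliced strategy $\stg_\sag''$ by $\stg_\sag''(\sigi_\sag^\star) = \stg_\sag'(\sigi_\sag^\star)$ and $\stg_\sag''(\sigi_\sag) = \stg_\sag(\sigi_\sag)$ for all $\sigi_\sag \neq \sigi_\sag^\star$, and set $\stgp'' = (\stg_\sag'', \stgp_{-\sag})$. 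By the decoupling observation, $\ut_\sag(\stgp'' \mid \sigi_\sag^\star) = \ut_\sag((\stg_\sag',\stgp_{-\sag}) \mid \sigi_\sag^\star) > \ut_\sag(\stgp \mid \sigi_\sag^\star)$, while $\ut_\sag(\stgp'' \mid \sigi_\sag) = \ut_\sag(\stgp \mid \sigi_\sag)$ for every other type $\sigi_\sag$.

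Then I would verify, with $D = \{\sag\}$ (so $|D| = 1 \le \kd$ for $\kd = 1$), that $\stgp''$ satisfies all three conditions of Definition~\ref{def:qi}: condition (1) holds because only $\sag$ deviates; condition (2) holds because for every type the $\stgp''$-utility is $\ge$ the $\stgp$-utility; condition (3) holds because at $\sigi_\sag^\star$ it is strictly greater (and in particular $\stgp'' \neq \stgp$, since the utilities differ there). This makes the deviation succeed, contradicting that $\stgp$ is a $1$-BSE, and completes the proof. I do not expect a real obstacle here: the whole argument hinges on the decoupling claim, which follows directly from the definition of the \qi{} utility, and on the elementary splicing construction. (It is worth noting in passing that this splicing trick does \emph{not} transfer to the ex-ante notion, which is why the equivalence $1$-EBSE $\Leftrightarrow$ BNE is instead routed through Propositions~\ref{prop:BNE_to_exante} and~\ref{prop:etoq}.)
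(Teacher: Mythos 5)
Your proposal is correct and matches the paper's own argument: both proceed by contraposition and use the same splicing construction, replacing the agent's strategy only at the improving type so that interim utilities at all other types are unchanged, which yields a successful singleton deviation under Definition~\ref{def:qi}. (Your write-up is in fact a bit more careful than the paper's, which contains minor typos in stating the conclusion.)
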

\begin{proof}
    Suppose $\stgp$ is NOT a Bayesian Nash equilibrium, and for agent $i$, strategy $\stg'_i$, and type $\sigi_i$.  $\ut_i (\stgp \mid \sigi_i) < \ut_i((\stg_i', \stgp_{-i}) \mid \sigi_i)$. Now consider the strategy $\stg_i''$ such that for all $\sigi_i' \in \Sigset_i$ and $\sigi_i' \neq \sigi_i$, $\stg_i''(\sigi') = \stg_i(\sigi')$, and $\stg_i''(\sigi_i) = \stg_i'(\sigi)$. Then we have $\ut_i (\stgp \mid \sigi_i) < \ut_i((\stg_i'', \stgp_{-i}) \mid \sigi_i)$ and $\ut_i (\stgp \mid \sigi_i)  = \ut_i((\stg_i', \stgp_{-i}) \mid \sigi_i)$ for all $\sigi_i' \in \Sigset_i$ and $\sigi_i' \neq \sigi_i$.  This implies that $\stgp$ is a Bayesian $1$-strong equilibrium. 
\end{proof}

Proposition~\ref{prop:BNE_to_exante}, Proposition~\ref{prop:qi_to_BNE}, and Proposition~\ref{prop:etoq} when $\kd = 1$ form a cycle of equivalence.

\section{Proof of Theorem~\ref{thm:pp_exante}}
\label{apx:exante}
    The proof consists of two steps. In Step 1, we characterize $\kde$ by comparing the ex-ante expected utility of a deviator when every agent reports truthfully and when all $\kd$ deviators always report $h$ (and always report $\ell$, respectively). At least one of the two deviations brings a deviator higher expected utility if and only if $\kd > \kde$. In Step 2, we show that for any $\kd \le \kde$ and any deviating strategy profile $\stgp'$, the average expected utility among all the deviators when $\stgp'$ is played will not exceed the expected utility when every agent reports truthfully. Therefore, there exists a deviator whose expected utility will decrease after the deviation, and the deviation cannot succeed.

    \noindent\textbf{Step 1: characterizing $\kde$. } 

    Consider a deviating group $D$ of $k$ agents. In the deviating strategy profile $\stgp$, all the deviators always report $h$, i.e. $\stg = (1, 1)$. (The reasoning for all deviators reporting $\ell$ will be similar.) We fix an arbitrary deviator $i \in D$ and characterize the condition of $k$ such that $\ut_i(\stgp) > \ut_i(\stgp^*). $ Due to the symmetricity of the strategy profile, this implies that the expected utility of every deviator is higher in deviation than in truth-telling, and the deviation is successful. 

    To compare the expected utilities, we divide $\ut_i(\stgp)$. One part is the average expected utility from all other deviators $j \in D\setminus{i}$, denoted by $\ut_i(\stgp\mid \jdeviate)$. The other part is the average expected utility from all truthful agents $j \in [n]\setminus D$, denoted by $\ut_i(\stgp\mid \jtruthful)$. 
    \begin{align*}
    \ut_i(\stgp) =&\
    \frac{1}{\ag -1}\left(\sum_{j \in D\setminus{i}} \Ex[\rwd_i(\rp_j)] + \sum_{j \in [n]\setminus D} \Ex[\rwd_i(\rp_j)]\right)\\ 
     =&\  \frac{\kd-1}{\ag -1}\cdot\ut_i(\stgp\mid \jdeviate) + \frac{\ag - \kd}{\ag -1}\cdot \ut_i(\stgp\mid \jtruthful).  
\end{align*}

With the truthfulness of the peer prediction mechanism, $\ut_i(\stgp\mid \jtruthful)$ is maximized when $i$ reports truthfully, and $i$ cannot increase his/her expected utility by deviation in this part. Therefore, agent $i$ should gain a higher expected utility in the deviation part. 

Let $\dut_d = \ut_i(\stgp^*) -\ut_i(\stgp\mid \jdeviate)$, and $\dut_t = \ut_i(\stgp^*) -\ut_i(\stgp\mid \jtruthful)$. Our goal is to find the condition on $\kd$ such that
\begin{equation*}
    \frac{\kd-1}{\ag -1} \cdot\dut_d + \frac{\ag -\kd}{\ag -1} \cdot \dut_t< 0. 
\end{equation*}
Note that when in $\stgp$ all agents $i$ have equal expected utility. Therefore, the inequality should be strict.
When $\dut_t > \dut_d$, this is equivalent to
\begin{equation*}
    \kd > \frac{\dut_t}{\dut_t - \dut_d}\cdot (n-1) + 1. 
\end{equation*}
And when $\dut_t \le \dut_d$, the condition does not hold for any $k$, and the deviation will never succeed. 

We first calculate the truthful expected utility. Note that when everyone plays the same strategy, the expected utility equals the expectation on $\rwd_i(\rp_j)$.
\begin{align*}
    \ut_i(\stgp^*) = &\ \pr(h) \cdot(\pr(h \mid h) \cdot \ps(h, \vpr_h) + \pr(\ell \mid h) \cdot \ps(\ell, \vpr_h))\\
    &\ + \pr(\ell) \cdot(\pr(h \mid \ell) \cdot \ps(h, \vpr_\ell) + \pr(\ell \mid \ell) \cdot \ps(\ell, \vpr_\ell)). 
\end{align*}

Then we calculate $\ut_i(\stgp\mid \jtruthful)$.
\begin{align*}
    \ut_i(\stgp\mid \jtruthful) = &\ \pr(h) \cdot(\pr(h \mid h) \cdot \ps(h, \vpr_h) + \pr(\ell \mid h) \cdot \ps(\ell, \vpr_h))\\
    &\ + \pr(\ell) \cdot(\pr(h \mid \ell) \cdot \ps(h, \vpr_h) + \pr(\ell \mid \ell) \cdot \ps(\ell, \vpr_h)). 
\end{align*}
Therefore, the first part of the difference is
\begin{align*}
   \dut_t = &\ \pr(\ell) \cdot(\pr(h \mid \ell) \cdot (\ps(h, \vpr_\ell) -\ps(h, \vpr_h)) + \pr(\ell \mid \ell) \cdot (\ps(\ell, \vpr_\ell) - \ps(\ell, \vpr_h)))\\
    =&\ \pr(\ell)\cdot \Ex_{\sigi\sim \vpr_{\ell}}[\ps(\sigi, \vpr_{\ell}) - \ps(\sigi, \vpr_{h})]
\end{align*}
From the property of the strict proper scoring rule, we know that $\dut_t > 0$. 

And when $j$ is a deviator always reporting $h$, the utility of $i$ will always be $\ps(h, \vpr_h)$. Therefore, the second part of the difference is
\begin{align*}
    \dut_d = &\pr(\ell) \cdot(\pr(h \mid \ell) \cdot (\ps(h, \vpr_\ell) -\ps(h, \vpr_h)) + \pr(\ell \mid \ell) \cdot (\ps(\ell, \vpr_\ell) - \ps(h, \vpr_h)))\\ 
    &\ + \pr(h)\cdot \pr(\ell \mid h) \cdot (\ps(\ell, \vpr_h) - \ps(h, \vpr_h)).
\end{align*}

And 
\begin{align*}
    \dut_t - \dut_d = &\ \pr(\ell) \cdot \pr(\ell \mid \ell) (\ps(h, \vpr_h) - \ps(\ell, \vpr_h)) - \pr(h)\cdot \pr(\ell \mid h) (\ps(\ell, \vpr_h) - \ps(h, \vpr_h))\\
    =& \pr(\ell)\cdot   (\ps(h, \vpr_h) - \ps(\ell, \vpr_h)). 
\end{align*}

Therefore, when $\ps(h, \vpr_h) - \ps(\ell, \vpr_h) > 0$, the condition for the deviation to succeed is $\kd > \frac{\Ex_{\sigi\sim \vpr_{\ell}}[\ps(\sigi, \vpr_{\ell}) - \ps(\sigi, \vpr_{h})]}{\ps(h, \vpr_h) - \ps(\ell, \vpr_h)} \cdot (n-1) + 1$. And when $\ps(h, \vpr_h) - \ps(\ell, \vpr_h) \le 0$, the deviation will never succeed. This is how $\kde^h$ is defined. 

Similarly, for deviation where all deviators always report $\ell$, the condition for the deviation to succeed is $\kd >\frac{\Ex_{\sigi\sim \vpr_{h}}[\ps(\sigi, \vpr_{h}) - \ps(\sigi, \vpr_{\ell})]}{\ps(\ell, \vpr_\ell) - \ps(h, \vpr_\ell)} \cdot (n-1) + 1$ when $\ps(\ell, \vpr_\ell) - \ps(h, \vpr_\ell) > 0$, and the deviation can never succeed when $\ps(\ell, \vpr_\ell) - \ps(h, \vpr_\ell) \le 0$. This is how $\kde^{\ell}$ is defined. 

Also, note that by Lemma~\ref{lem:pr_psr}, at least one of $\ps(h, \vpr_h) - \ps(\ell, \vpr_h) > 0$ and $\ps(\ell, \vpr_\ell) - \ps(h, \vpr_\ell) > 0$ holds. 

Therefore, for all $\kd\le \kde$, both deviations cannot succeed. 


\noindent\textbf{Step 2: Equlibrium holds for $\kd \le \kde.$}

We fix an arbitrary $2\le \kd \le \kde$. (For $\kd = 1$, the ex-ante Bayesian $1$-strong equilibrium is equivalent to BNE, which is guaranteed by the truthfulness of the peer prediction mechanism.) Let $\stgp$ be the deviating strategy and $\stg_i = (\bpl^i, \bph^i)$ be the strategy agent $i$ plays in $\stgp$. $\ut_i(\stgp \mid \jtruthful)$ and $\ut_i(\stgp \mid \jdeviate)$ still denote the expected utility $i$ gain from truthful agents and deviators, respectively. In this part, we consider the average on the expected utility of all agents $i \in D$. Let $\astg = (\abpl, \abph) = \frac{1}{\kd} \sum_{i \in D} \stg_i$ be the average strategy on all deviators. 

For the truthful side, we have 
\begin{align*}
    \ut_i(\stgp \mid \jtruthful) =&\  \frac{1}{n-\kd}\sum_{j \in [n]\setminus D} \Ex_{\sigi_i \sim \prQ
    , \rp_i \sim \stg_i(\sigi_i)} \Ex_{\sigi_j \sim \vpr_{\sigi_i}} \ps(\sigi_j, \vpr_{\rp_i})\\
    =&\ \Ex_{\sigi_i \sim \prQ
    , \rp_i \sim \stg_i(\sigi_i)} \Ex_{\sigi_j \sim \vpr_{\sigi_i}} \ps(\sigi_j, \vpr_{\rp_i})
\end{align*}
Let
\begin{align*}
    \aut(\stgp\mid\jtruthful) =&\ \frac{1}{\kd} \sum_{i \in D}  \ut_i(\stgp\mid\jtruthful)\\
    =&\ \Ex_{\sigi \sim \prQ
    , \rp \sim \astg(\sigi)} \Ex_{\sigi_j \sim \vpr_{\sigi}} \ps(\sigi_j, \vpr_{\rp}). 
\end{align*}
The equation comes from the fact that $\bph^j$ and $\bpl^j$ are linear in the expected utility. Note that when everyone reports truthfully, everyone has equal expected utility, i.e. $\aut(\stgp^*) = \ut_i(\stgp^*)$.  Then the difference between truthful reporting and deviation is a function of $\abpl$ and $\abph$.
\begin{align*}
    &\ \dut_t (\abpl, \abph)\\ =&\ \aut(\stgp^*) - \aut(\stgp \mid \jtruthful)\\
    =&\ \pr(h) \cdot (1 - \abph) \cdot (\pr(h\mid h) \cdot (\ps(h, \vpr_h) - \ps(h, \vpr_{\ell}))  + \pr(\ell \mid h) \cdot (\ps(\ell, \vpr_h) - \ps(\ell, \vpr_\ell)))\\
    &\ +  \pr(\ell) \cdot \abpl \cdot (\pr(h\mid \ell) \cdot (\ps(h, \vpr_\ell) - \ps(h, \vpr_{h})) + \pr(\ell \mid \ell) \cdot (\ps(\ell, \vpr_\ell) - \ps(\ell, \vpr_h))).
\end{align*}
$\dut_t (\abpl, \abph) \ge 0$ and the equation holds when $\abph = 1$ and $\abpl = 0$ is guaranteed by the property of the strict proper scoring rule. 

For the deviator side, we have 
\begin{align*}
    \ut_i(\stgp \mid \jdeviate) =&\  \frac{1}{\kd-1}\sum_{j \in D\setminus\{i\}} \Ex_{\sigi_i \sim \prQ
    , \rp_i \sim \stg_i(\sigi_i)} \Ex_{\sigi_j \sim \vpr_{\sigi_i}, \rp_j \sim \stg_j(\sigi_j)} \ps(\rp_j, \vpr_{\rp_i}).
\end{align*}

And $\aut(\stgp\mid \jdeviate) = \frac{1}{\kd} \sum_{i\in D} \ut_i(\stgp\mid \jdeviate)$. 

To better characterize $\aut(\stgp\mid \jdeviate)$, we find an upper bound parameterized only by $\astg$. Note that
\begin{align*}
    \ut_i(\stgp \mid \jdeviate) =&\  \frac{\kd}{\kd-1} \Ex_{\sigi_i \sim \prQ
    , \rp_i \sim \stg_i(\sigi_i)} \Ex_{\sigi \sim \vpr_{\sigi_i}, \rp \sim \astg(\sigi)} \ps(\rp, \vpr_{\rp_i}) \\
    &\ - \frac{1}{\kd - 1} \Ex_{\sigi_i \sim \prQ
    , \rp_i \sim \stg_i(\sigi_i)} \Ex_{\sigi \sim \vpr_{\sigi_i}, \rp \sim \stg_i(\sigi)} \ps(\rp, \vpr_{\rp_i}).
\end{align*}
The first term can be viewed as adding an independent deviator that plays the same strategy $\stg_i$ as $i$ into the deviator set $D$. Then the average can be represented as 
\begin{align*}
    \aut(\stgp\mid \jdeviate) = &\ \frac{\kd}{\kd - 1} \Ex_{\sigi' \sim \prQ
    , \rp' \sim \astg(\sigi')} \Ex_{\sigi \sim \vpr_{\sigi'}, \rp \sim \astg(\sigi)} \ps(\rp, \vpr_{\rp'})\\
    & - \frac{1}{(\kd - 1)k} \sum_{i \in D} \Ex_{\sigi_i \sim \prQ
    , \rp_i \sim \stg_i(\sigi_i)} \Ex_{\sigi \sim \vpr_{\sigi_i}, \rp \sim \stg_i(\sigi)} \ps(\rp, \vpr_{\rp_i}).
\end{align*}

Let $\func: [0, 1]^2 \to \mathbb{R}$. For a strategy $\stg = (\bpl, \bph)$, let
\begin{equation*}
    \func(\bpl, \bph) = \Ex_{\sigi' \sim \prQ
    , \rp' \sim \stg(\sigi')} \Ex_{\sigi \sim \vpr_{h}, \rp \sim \stg(\sigi)} \ps(\rp, \vpr_{\rp'}). 
\end{equation*}

Then we can represent $ \aut(\stgp\mid \jdeviate)$ in the form of $\func$. 
\begin{equation*}
     \aut(\stgp\mid \jdeviate) = \frac{\kd}{\kd-1} \func(\abpl, \abph) - \frac{1}{(\kd-1)\kd} \sum_{i\in D} \func(\bpl^i, \bph^i). 
\end{equation*}

\begin{claim}
\label{claim:fconvex}
   $\func(\bpl, \bph)$ is convex on $[0, 1]^2$.  
\end{claim}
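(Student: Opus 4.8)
The plan is to observe that $\func$ is a polynomial of total degree two in $(\bpl,\bph)$ and to show that the symmetric matrix of its homogeneous quadratic part is positive semidefinite; a quadratic is convex exactly when that matrix is, so this suffices. Recall that $\func(\bpl,\bph)$ is the expected reward an agent playing $\stg=(\bpl,\bph)$ collects from one peer who also plays $\stg$, where the two agents' signals are drawn jointly from the (symmetric) common prior. Write $\beta_h=\bph$ and $\beta_\ell=\bpl$ for the probability that an agent reports $h$ given signal $h$, resp.\ $\ell$, and let $p_{\sigi'\sigi}:=\pr(\sigi')\,\pr(\sigi\mid\sigi')$ be the probability that the two agents receive signals $\sigi'$ and $\sigi$; by symmetry of the prior, $p_{h\ell}=p_{\ell h}$. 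For $x,y\in[0,1]$ set
\begin{equation*}
r(x,y):=xy\,\ps(h,\vpr_h)+x(1-y)\,\ps(\ell,\vpr_h)+(1-x)y\,\ps(h,\vpr_\ell)+(1-x)(1-y)\,\ps(\ell,\vpr_\ell),
\end{equation*}
the expected reward obtained by an agent reporting $h$ with probability $x$ from a peer reporting $h$ with probability $y$ (the two reports independent). First I would record the identity $\func(\bpl,\bph)=\sum_{\sigi',\sigi\in\{\ell,h\}}p_{\sigi'\sigi}\,r(\beta_{\sigi'},\beta_{\sigi})$, which makes $\func$ manifestly a polynomial of total degree at most two in $(\bpl,\bph)$, with constant Hessian.

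Next I would isolate the quadratic part. Expanding $r$, the coefficient of the cross term $xy$ equals, for every pair $(\sigi',\sigi)$, the single constant $\kappa:=\ps(h,\vpr_h)-\ps(\ell,\vpr_h)-\ps(h,\vpr_\ell)+\ps(\ell,\vpr_\ell)$, while all remaining monomials of $r$ are affine in $(x,y)$. Hence the homogeneous quadratic part of $\func$ is $\kappa\sum_{\sigi',\sigi}p_{\sigi'\sigi}\,\beta_{\sigi'}\beta_{\sigi}=\kappa\bigl(p_{\ell\ell}\,\bpl^2+2p_{h\ell}\,\bpl\bph+p_{hh}\,\bph^2\bigr)$, i.e.\ the quadratic form with $2\times 2$ symmetric matrix $Q:=\kappa\bigl(\begin{smallmatrix}p_{\ell\ell}&p_{h\ell}\\ p_{h\ell}&p_{hh}\end{smallmatrix}\bigr)$ (first coordinate $\bpl$, second coordinate $\bph$), and it remains to prove that $Q$ is positive semidefinite.

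Finally I would verify the two ingredients that give this. First, $\kappa>0$: Lemma~\ref{lem:pr_psr} gives $\ps(h,\vpr_h)>\ps(h,\vpr_\ell)$ and $\ps(\ell,\vpr_\ell)>\ps(\ell,\vpr_h)$, and summing these yields $\kappa=\bigl(\ps(h,\vpr_h)-\ps(h,\vpr_\ell)\bigr)+\bigl(\ps(\ell,\vpr_\ell)-\ps(\ell,\vpr_h)\bigr)>0$; since $p_{hh},p_{\ell\ell}>0$ this makes the diagonal entries of $Q$ positive. Second, $\det Q=\kappa^2\bigl(p_{hh}p_{\ell\ell}-p_{h\ell}^2\bigr)\ge 0$: writing $p_{hh}=\pr(h)\pr(h\mid h)$, $p_{\ell\ell}=\pr(\ell)\pr(\ell\mid\ell)$, and $p_{h\ell}=\pr(h)\pr(\ell\mid h)=\pr(\ell)\pr(h\mid\ell)$ (the last equality again by symmetry of the prior), one gets $p_{hh}p_{\ell\ell}-p_{h\ell}^2=\pr(h)\pr(\ell)\bigl(\pr(h\mid h)\pr(\ell\mid\ell)-\pr(h\mid\ell)\pr(\ell\mid h)\bigr)$, and the model assumptions give $\pr(h\mid h)>\pr(h\mid\ell)>0$ together with $\pr(\ell\mid\ell)=1-\pr(h\mid\ell)>1-\pr(h\mid h)=\pr(\ell\mid h)>0$, so the chain $\pr(h\mid h)\pr(\ell\mid\ell)>\pr(h\mid\ell)\pr(\ell\mid\ell)>\pr(h\mid\ell)\pr(\ell\mid h)$ yields $p_{hh}p_{\ell\ell}-p_{h\ell}^2>0$. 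Thus $Q\succeq 0$ (indeed positive definite), so $\func$ is convex on $[0,1]^2$ — in fact strictly convex on all of $\mathbb{R}^2$. The only step requiring genuine care is this last determinant inequality, which is really a log-supermodularity property of the joint signal distribution and is exactly where one must combine both posterior-monotonicity assumptions with the non-full-correlation assumptions $\pr(h\mid\ell)>0$ and $\pr(\ell\mid h)>0$; everything else is routine bookkeeping with a $2\times 2$ quadratic form.
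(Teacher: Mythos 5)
Your proposal is correct and follows essentially the same route as the paper: the paper likewise computes the constant Hessian of $\func$, factors out $\ps(h,\vpr_h)+\ps(\ell,\vpr_\ell)-\ps(h,\vpr_\ell)-\ps(\ell,\vpr_h)>0$ via Lemma~\ref{lem:pr_psr}, and verifies positive semidefiniteness through the principal minors, where the determinant reduces to $\pr(h)\pr(\ell)\bigl(\pr(h\mid h)\pr(\ell\mid\ell)-\pr(h\mid\ell)\pr(\ell\mid h)\bigr)>0$ exactly as in your argument. Your only cosmetic differences are phrasing the computation through the quadratic part of the polynomial rather than explicit second partials, and spelling out the determinant inequality that the paper asserts directly.
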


\begin{proof}
    Note that 
    \begin{align*}
        \frac{\partial^2 \func}{\partial\bpl^2} =&\ 2\pr(\ell) \cdot \pr(\ell \mid \ell) \cdot (\ps(h, \vpr_h) + \ps(\ell, \vpr_{\ell}) - \ps(h, \vpr_{\ell}) - \ps(\ell, \vpr_h)), \\
        \frac{\partial^2 \func}{\partial\bph^2} =&\ 2\pr(h) \cdot \pr(h \mid h) \cdot (\ps(h, \vpr_h) + \ps(\ell, \vpr_{\ell}) - \ps(h, \vpr_{\ell}) - \ps(\ell, \vpr_h)),\\
        \frac{\partial^2 \func}{\partial\bpl \partial\bph} =&\ (\pr(\ell) \cdot \pr(h \mid \ell) + \pr(h)\cdot \pr(\ell \mid h)) (\ps(h, \vpr_h) + \ps(\ell, \vpr_{\ell}) - \ps(h, \vpr_{\ell}) - \ps(\ell, \vpr_h)),\\
        \frac{\partial^2 \func}{\partial\bph \partial\bpl} =&\ (\pr(\ell) \cdot \pr(h \mid \ell) + \pr(h)\cdot \pr(\ell \mid h)) (\ps(h, \vpr_h) + \ps(\ell, \vpr_{\ell}) - \ps(h, \vpr_{\ell}) - \ps(\ell, \vpr_h)).
    \end{align*}
    Let \begin{equation*}
    H = 
        \begin{bmatrix}
            2\pr(\ell) \cdot \pr(\ell \mid \ell) & \pr(\ell) \cdot \pr(h \mid \ell) + \pr(h)\cdot \pr(\ell \mid h)\\
            \pr(\ell) \cdot \pr(h \mid \ell) + \pr(h)\cdot \pr(\ell \mid h) & 2\pr(h) \cdot \pr(h \mid h) 
        \end{bmatrix}. 
    \end{equation*}
        Then the Hermitian matrix of $\func$ is 
    \begin{equation*}
        H(\func) =  (\ps(h, \vpr_h) + \ps(\ell, \vpr_{\ell}) - \ps(h, \vpr_{\ell}) - \ps(\ell, \vpr_h)) \cdot H. 
    \end{equation*}
To show the convexity of $\func$, it is sufficient to show that $H(\func)$ is positive semi-definite. From Lemma~\ref{lem:pr_psr} we know that $(\ps(h, \vpr_h) + \ps(\ell, \vpr_{\ell}) - \ps(h, \vpr_{\ell}) - \ps(\ell, \vpr_h)) > 0$. Therefore, it is sufficient to show that $H$ is positive semidefinite. We leverage the following lemma. 

\begin{lemma}\cite[(7.6.12)]{Meyer2000:Matrix}.
    \label{lem:sylvester}
    A real symmetric matrix $A$ is positive semidefinite if and only if all principal minors of $A$ are non-negative. 
\end{lemma}

Therefore, it is sufficient to show that all principal minors of $H$ are non-negative. 

First, $|H_{1\times 1}| = 2\pr(\ell) \cdot \pr(\ell \mid \ell) > 0$, and $|H_{2\times 2} |= 2\pr(h) \cdot \pr(h \mid h) > 0$. Note that $\pr(\ell)\cdot \pr(h \mid \ell) = \pr(h)\cdot \pr(\ell \mid h)$. Therefore, 
\begin{equation*}
    |H| = 4\pr(h)\cdot \pr(\ell) \cdot (\pr(\ell\mid\ell)\cdot \pr(h\mid h) - \pr(h\mid \ell)\cdot \pr(\ell \mid h)) > 0.
\end{equation*}
Therefore, the Hessian matrix of $\func$ is positive semidefinite. Consequently, $\func$ is convex. 
\end{proof}
By the convexity, $\frac{1}{\kd} \sum_{i\in D} \func(\bpl^i, \bph^i) \ge \func(\abpl, \abph)$. Therefore, $\aut(\stgp\mid \jdeviate) \le \func(\abpl, \abph)$, and the equality holds if all the agents $i\in D$ plays the same strategy. 

We use $\func(\abpl, \abph)$ as an upper bound of $\aut(\stgp\mid \jdeviate)$. Let $\dut_d (\abpl, \abph) = \aut(\stgp^*) - \aut(\stgp\mid \jdeviate)$, and $\dut_d' (\abpl, \abph) = \aut(\stgp^*) - \func(\abpl, \abph)$. Then $\dut_d (\abpl, \abph) \ge \dut_d' (\abpl, \abph)$ always holds. 

Now we are ready to show that truthful reporting $\stgp^*$ is more profitable than any deviation $\stgp$ for $k \le \kde$. 

Let $\dut(\abpl, \abph) = \frac{\kd-1}{\ag -1} \cdot\dut_d' (\abpl, \abph) + \frac{\ag -\kd}{\ag -1} \cdot \dut_t(\abpl, \abph)$. Then it's sufficient to show that $\dut(\abpl, \abph) \ge 0$ on any $(\abpl, \abph) \in [0, 1]^2$. 

First, notice that $\dut(\abpl, \abph)$ is a  concave function. This is because $\dut_t(\abpl, \abph)$ is linear on $\abpl$ and $\abph$, and $\dut_d' (\abpl, \abph)$ is a concave function according to Claim~\ref{claim:fconvex}. Therefore, it is sufficient to show that $\dut(\abpl, \abph) \ge 0$ on any $(\abpl, \abph) \in \{0, 1\}^2$, i.e. the corner points. Note that $\astg = (\abpl, \abph)$ is the average strategy of all the deviators. $\abpl$ ($\abph$, respectively) equals to $0$ (or $1$) means that for all $i\in D$, $\bpl^i$ ($\bph^i$, respectively) equals to $0$ (or $1$, respectively). Therefore, in the corner points, $\dut_d' = \dut_d$. 

When $\abpl = 0$ and $\abph = 1$, $\stgp = \stgp^*$, and all the deviators also report truthfully. In this case, $\dut_t = \dut_d' = 0$ since the two strategies are the same. Therefore, $\dut(0, 1) = 0$. 

When $\abpl = \abph = 1$, all the deviators always report $h$. In Step 1 we have shown that such deviation cannot succeed for any $k \le \kde$. Therefore, $\dut(1, 1) \ge 0$. 

When $\abpl = \abph = 0$, all the deviators always report $\ell$.  In Step 1 we have shown that such deviation cannot succeed for any $k \le \kde$. Therefore, $\dut(0, 0) \ge 0$. 

And when $\abpl = 1$ and $\abph = 0$, all the deviators always tell a lie. We follow a similar reasoning as in step 1. 
First, we have 
\begin{align*}
    \dut_t (1, 0)
    =&\ \pr(h) \cdot (\pr(h\mid h) \cdot (\ps(h, \vpr_h) - \ps(h, \vpr_{\ell}))  + \pr(\ell \mid h) \cdot (\ps(\ell, \vpr_h) - \ps(\ell, \vpr_\ell)))\\
    &\ +  \pr(\ell)  \cdot (\pr(h\mid \ell) \cdot (\ps(h, \vpr_\ell) - \ps(h, \vpr_{h})) + \pr(\ell \mid \ell) \cdot (\ps(\ell, \vpr_\ell) - \ps(\ell, \vpr_h)))\\
    =& \pr(h)\cdot \Ex_{\sigi\sim \vpr_{h}}[\ps(\sigi, \vpr_{h}) - \ps(\sigi, \vpr_{\ell})] + \pr(\ell) \cdot \Ex_{\sigi\sim \vpr_{\ell}}[\ps(\sigi, \vpr_{\ell}) - \ps(\sigi, \vpr_{h})]. 
\end{align*}
$\dut_t(1, 0) \ge 0$ is guaranteed by the property of the proper scoring rule.

And 
\begin{align*}
    &\ \dut_t(1, 0) - \dut_d'(1, 0)\\
    =&\ \pr(h) \cdot (\pr(h\mid h) \cdot (\ps(\ell, \vpr_\ell) - \ps(h, \vpr_{\ell}))  + \pr(\ell \mid h) \cdot (\ps(h, \vpr_\ell) - \ps(\ell, \vpr_\ell)))\\
    &\ +  \pr(\ell)  \cdot (\pr(h\mid \ell) \cdot (\ps(\ell, \vpr_h) - \ps(h, \vpr_{h})) + \pr(\ell \mid \ell) \cdot (\ps(h, \vpr_h) - \ps(\ell, \vpr_h)))\\
    =&\ \pr(h)\cdot (\pr(h\mid h) - \pr(\ell \mid h)) \cdot (\ps(\ell, \vpr_\ell) - \ps(h, \vpr_{\ell}))\\
    &\ + \pr(\ell) \cdot (\pr(\ell\mid \ell) - \pr(h\mid \ell))\cdot (\ps(h, \vpr_h) - \ps(\ell, \vpr_h))).
\end{align*}

If $\dut_t(1, 0) - \dut_d'(1, 0) \le 0$, $\dut(1, 0)\ge 0$ for every $k$. If $\dut_t(1, 0) - \dut_d'(1, 0) > 0$, $\dut(1, 0) < 0$ if and only if 
\begin{smallblock}
\begin{equation*}
      \kd > \frac{\pr(h)\cdot \Ex_{\sigi\sim \vpr_{h}}[\ps(\sigi, \vpr_{h}) - \ps(\sigi, \vpr_{\ell})] + \pr(\ell) \cdot \Ex_{\sigi\sim \vpr_{\ell}}[\ps(\sigi, \vpr_{\ell}) - \ps(\sigi, \vpr_{h})]}{\pr(h)\cdot (\pr(h\mid h) - \pr(\ell \mid h)) \cdot (\ps(\ell, \vpr_\ell) - \ps(h, \vpr_{\ell})) +  \pr(\ell) \cdot (\pr(\ell\mid \ell) - \pr(h\mid \ell))\cdot (\ps(h, \vpr_h) - \ps(\ell, \vpr_h))} + 1.   
\end{equation*}

\end{smallblock}
    
Denote this threshold as $\kd'$. We claim that $\kd' \ge \kde$, and therefore $\dut(1, 0) \ge 0$ for all $\kd \le \kde$. 
We consider the following three cases. 
\begin{enumerate}
    \item $\ps(\ell, \vpr_\ell) - \ps(h, \vpr_{\ell}) > 0$ and $\ps(h, \vpr_h) - \ps(\ell, \vpr_h) > 0$. In this case, $\kd'$ is between $\kde^h$ and $\kde^{\ell}$, and cannot be smaller than $\kde$. 
    \item  $\ps(\ell, \vpr_\ell) - \ps(h, \vpr_{\ell}) \le 0$ but $\ps(h, \vpr_h) - \ps(\ell, \vpr_h) > 0$. 
    If $\pr(h\mid h) \ge \pr(\ell \mid h)$, the first term in the denominator is non-positive, and it's not hard to verify that $\kd' > \kde$. If $\pr(h\mid h) < \pr(\ell \mid h)$, note that we have $\ps(h, \vpr_h) > \ps(h, \vpr_{\ell}) \ge \ps(\ell, \vpr_\ell) > \ps(\ell, \vpr_h)$ according to Lemma~\ref{lem:pr_psr}.
    Therefore, the denominator is no more than $(\pr(\ell)\cdot \pr(\ell \mid \ell) - \pr(h)\cdot \pr(h\mid h))\cdot \ps(h, \vpr_h) - \ps(\ell, \vpr_h))$, which is smaller than $\pr(\ell)$ times the denominator in the $\kde^h$. On the other hand, the nominator in $\kd'$ is larger than $\pr(\ell)$ times the nominator in $\kde$. Therefore, $\kd' \ge \kde$. 
    \item $\ps(\ell, \vpr_\ell) - \ps(h, \vpr_{\ell}) > 0$ but $\ps(h, \vpr_h) - \ps(\ell, \vpr_h) \le 0$. This case follows the second case due to symmetricity. 
\end{enumerate}
Therefore, $\dut(1, 0) \ge 0$ for all $\kd \le \kde$.


Therefore, we prove that for any $k\le \kde$ and any strategy profile, the average (ex-ante) expected utility of the deviators will not exceed the expected utility when all the agents report truthfully. Therefore, one of the following cases occurs. 
\begin{enumerate}
    \item There exists an agent $i$ such that $i$'s expected utility in deviation is strictly lower than in $\stgp^*$. Therefore, the second condition of the deviating group is violated. 
    \item All the agents have exactly the same expected utility in deviation and $\stgp^*$. The third condition of a deviating group to have an agent strictly better off is violated. 
\end{enumerate}
Therefore, any deviation cannot succeed, and truth-telling $\stgp^*$ is an ex-ante Bayesian $\kde$-strong equilibrium. 

\section{Proof of Theorem~\ref{thm:pp_qi}}
\label{apx:qi}
The steps of the proof resemble the steps of the proof of Theorem~\ref{thm:pp_exante}, yet the techniques are different. In Step 1, we determine $\kdq$ by comparing the \qi{} expected utility of a deviator conditioned on his/her signal being $h$ and $\ell$ respectively when every agent reports truthfully and when all $\kd$ deviators always report $h$ (and always report $\ell$, respectively).
In Step 2, we show that for any $\kd \le \kde$ and any deviating strategy profile $\astgp$ where all the deviators play the same strategy $\astg$, the average expected utility among all the deviators when $\astgp$ is played will not exceed the expected utility when every agent reports truthfully. In Step 3, we show that for sufficiently large $\ag$, any $\kd \le \kdq$, and any deviating strategy profile $\stgp$, there exists a deviator whose expected utility is strictly smaller than the expected utility when every agent reports truthfully.

Let $\maxdps$ be the largest difference in the positional scoring rule. Let $\dpsh = \ps(h, \vpr_h) - \ps(h, \vpr_\ell)$, and $\dpsl = \ps(\ell, \vpr_{\ell}) - \pr(\ell, \vpr_h)$. From Lemma~\ref{lem:pr_psr}, we have $\dpsh > 0$ and $\dpsl > 0$. We first explicitly give the lower bound of $\ag$:

\begin{enumerate}
    \item $\bphth = \frac{4\maxdps\cdot (\dpsh + \dpsl + \ps(\ell, \vpr_{\ell}) - \ps(h, \vpr_\ell))}{ (\ag - 1)\cdot (\dpsh +\dpsl)\cdot  \Ex_{\sigi \sim \vpr_h}[\ps(\sigi, \vpr_h) - \pr(\sigi, \vpr_\ell)]} < \frac14$, 
    \item If $\ps(h, \vpr_h) > \ps(\ell, \vpr_h)$, then $\bphth \le \frac{\ps(h, \vpr_h) - \ps(\ell, \vpr_h)}{\dpsh + \dpsl}$, 
    \item $\bphth \le \frac{\Ex_{\sigi \sim \vpr_h}[\ps(\sigi, \vpr_h) - \pr(\sigi, \vpr_\ell)]}{\pr(h\mid h)\cdot (\dpsh + \dpsl)}$,
    \item $\bphtl = \frac{4\maxdps\cdot (\dpsh + \dpsl + \ps(h, \vpr_{h}) - \ps(\ell, \vpr_h))}{ (\ag - 1)\cdot (\dpsh +\dpsl)\cdot  \Ex_{\sigi \sim \vpr_\ell}[\ps(\sigi, \vpr_\ell) - \pr(\sigi, \vpr_h)]} < \frac14$, 
    \item If $\ps(\ell, \vpr_{\ell}) > \ps(h, \vpr_\ell)$, then $\bphth \le \frac{\ps(\ell, \vpr_{\ell}) - \ps(h, \vpr_\ell)}{\dpsh + \dpsl}$, 
    \item $\bphth \le \frac{\Ex_{\sigi \sim \vpr_\ell}[\ps(\sigi, \vpr_\ell) - \pr(\sigi, \vpr_h)]}{\pr(\ell \mid \ell)\cdot (\dpsh + \dpsl)}$.
\end{enumerate}

    \noindent\textbf{Step 1: characterizing $\kdq$.}
    
    Consider a deviating group $D$ of $k$ agents. In the deviating strategy profile $\stgp$, all the deviators always report $h$, i.e. $\stg = (1, 1)$. We fix an arbitrary deviator $i \in D$. In the \qi{} setting, the condition for the deviation is successful is $\ut_i(\stgp\mid h) \ge \ut_i(\stgp^*\mid h)$ and $\ut_i(\stgp\mid \ell) \ge \ut_i(\stgp^*\mid \ell)$ hold, and at least one of the inequality is strict. 

    Similar to the ex-ante's proof, let $\ut_i(\stgp\mid \sigi_i, \jdeviate)$ $\ut_i(\stgp\mid \sigi_i, \jtruthful)$ be the average expected utility from all other deviators (truthful agents, respectively) conditioned on $i$'s signal being $\sigi_i$. And let $\dut_{d\mid \sigi_i} = \ut_i(\stgp^* \mid \sigi_i) - \ut_i(\stgp\mid \sigi_i, \jdeviate)$ and $\dut_{t\mid \sigi_i} = \ut_i(\stgp^* \mid \sigi_i) - \ut_i(\stgp\mid \sigi_i, \jtruthful)$. 

    For expected utility on $\stgp^*$, we have
\begin{align*}
    \ut_i(\stgp^* \mid h) = &\ \pr(h \mid h) \cdot \ps(h, \vpr_h) + \pr(\ell \mid h) \cdot \ps(\ell, \vpr_h),\\
    \ut_i(\stgp^* \mid \ell) = &\ \pr(h \mid \ell) \cdot \ps(h, \vpr_\ell) + \pr(\ell \mid \ell) \cdot \ps(\ell, \vpr_\ell). 
\end{align*}
    And for expected utility of $\stgp$, we have 
\begin{align*}
    \ut_i(\stgp\mid h, \jtruthful) = &\ \pr(h \mid h) \cdot \ps(h, \vpr_h) + \pr(\ell \mid h) \cdot \ps(\ell, \vpr_h)\\
     \ut_i(\stgp\mid \ell, \jtruthful) =&\ \pr(h \mid \ell) \cdot \ps(h, \vpr_h) + \pr(\ell \mid \ell) \cdot \ps(\ell, \vpr_h). 
\end{align*}
Therefore, $\ut_i(\stgp^* \mid h) = \ut_i(\stgp\mid h, \jtruthful)$, and $\dut_{t\mid h} = 0$. Also for the $\ell$ side, $\dut_{t\mid \ell} = \Ex_{\sigi\sim \vpr_{\ell}}[\ps(\sigi, \vpr_{\ell}) - \ps(\sigi, \vpr_{h})] > 0$. 

For the deviator's part, we have $\ut_i(\stgp\mid h, \jdeviate) = \ut_i(\stgp\mid \ell, \jdeviate) = \ps(h, \vpr_h)$. 

Then we have 
\begin{align*}
    \dut_{t\mid h} - \dut_{d\mid h} =&\ \pr(\ell \mid h) \cdot (\ps(h, \vpr_h) - \ps(\ell, \vpr_h)).\\
    \dut_{t\mid \ell} - \dut_{d\mid \ell} =&\ \pr(\ell \mid \ell) \cdot (\ps(h, \vpr_h) - \ps(\ell, \vpr_h)).
\end{align*}

When $\ps(h, \vpr_h) \le  \ps(\ell, \vpr_h)$, neither agents with private signal $h$ nor those with $\ell$ can get strictly positive expected utility via deviation, and the deviation cannot succeed. When $\ps(h, \vpr_h) >  \ps(\ell, \vpr_h)$, for the $h$ side, the condition for deviation to achieve non-negative expected utility (not strictly positive!) is
\begin{equation*}
    k \ge \frac{\dut_{t\mid h}}{\dut_{t\mid h} - \dut_{d\mid h}}\cdot (n-1) + 1 = 1,
\end{equation*}
and for the $\ell$ side, the condition is
\begin{equation*}
    k \ge \frac{\dut_{t\mid \ell}}{\dut_{t\mid \ell} - \dut_{d\mid \ell}}\cdot (n-1) + 1 = \frac{\Ex_{\sigi\sim \vpr_{\ell}}[\ps(\sigi, \vpr_{\ell}) - \ps(\sigi, \vpr_{h})]}{ \pr(\ell \mid \ell) \cdot (\ps(h, \vpr_h) - \ps(\ell, \vpr_h))}\cdot (n-1) + 1 > 1.
\end{equation*}

When $k \ge \frac{\Ex_{\sigi\sim \vpr_{\ell}}[\ps(\sigi, \vpr_{\ell}) - \ps(\sigi, \vpr_{h})]}{ \pr(\ell \mid \ell) \cdot (\ps(h, \vpr_h) - \ps(\ell, \vpr_h))}\cdot (n-1) + 1$, deviators with signal $h$ get strictly higher expected utility, and deviators with $\ell$ get non-decreasing expected utility. Therefore, the deviation will succeed. 
This is how $\kdq^h$ is defined.
$\kdq^\ell$ is defined similarly by consider strategy $(0, 0)$. 
Therefore, for all $k \le \kdq$, both deviations of always reporting $h$ and always reporting $\ell$ cannot succeed.

\noindent\textbf{Step 2: Symmetric deviaton cannot succeed for $\kd \le \kdq.$}

We first introduce a function that will be widely applied in Step 2 and 3. Let $\func^h(\bph', (\bpl, \bph)): \mathbb{R} \times \mathbb{R}^2 \to \mathbb{R}$ be the expected reward of an agent $i$ who has signal $h$ and will report $h$ with probability $\bph'$, given that $i$'s peer $j$ plays the strategy $\stg = (\bpl, \bph)$. 

\begin{align*}
    \func^h(\bph', (\bpl, \bph)) = &\ \Ex_{\rp_i \sim \bph'} \Ex_{\sigi_j \sim \vpr_{h}, \rp_j \sim (\bpl,\bph)} \ps(\rp_j, \vpr_{\rp_i})\\
    =&\ \bph'((\pr(h\mid h)\cdot \bph + \pr(\ell\mid h)\cdot \beta_\ell)\cdot \ps(h, \vpr_h)\\
    &\ + (1 -  \pr(h\mid h)\cdot \bph - \pr(\ell\mid h)\cdot \beta_\ell)\cdot \ps(\ell,\vpr_h))\\
     + &\ (1 - \bph')\cdot ((\pr(h\mid h)\cdot \bph + \pr(\ell\mid h)\cdot \beta_\ell)\cdot \ps(h, \vpr_\ell)\\
    &\ + (1 -  \pr(h\mid h)\cdot \bph - \pr(\ell\mid h)\cdot \beta_\ell)\cdot \ps(\ell,\vpr_\ell)). 
\end{align*}

Similarly, we define $\func^{\ell}$. 
\begin{align*}
    \func^\ell(\bpl', (\bpl, \bph)) = &\ \Ex_{\rp_i \sim \bpl'} \Ex_{\sigi_j \sim \vpr_{\ell}, \rp_j \sim (\bpl,\bph)} \ps(\rp_j, \vpr_{\rp_i})\\
    =&\ \bpl'((\pr(h\mid \ell)\cdot \bph + \pr(\ell\mid \ell)\cdot \beta_\ell)\cdot \ps(h, \vpr_h)\\
     + &\ (1 -  \pr(h\mid \ell)\cdot \bph - \pr(\ell\mid \ell)\cdot \beta_\ell)\cdot \ps(\ell,\vpr_h))\\
    &\ + (1 - \bpl')\cdot ((\pr(h\mid \ell)\cdot \bph + \pr(\ell\mid \ell)\cdot \beta_\ell)\cdot \ps(h, \vpr_\ell)\\
    &\ + (1 -  \pr(h\mid \ell)\cdot \bph - \pr(\ell\mid \ell)\cdot \beta_\ell)\cdot \ps(\ell,\vpr_\ell)). 
\end{align*}

Moreover, let $\gunc^h(\bpl, \bph) = \func^h(\bph, (\bpl, \bph))$, and $\gunc^\ell(\bpl, \bph) = \func^\ell(\bpl, (\bpl, \bph))$. $\gunc^h$ and $\gunc^\ell$ cover the special case where $i$ and $j$ play the same strategy and will be largely applied in Step 2. 

\begin{claim}
We claim that $\func^h$ and $\gunc^h$ has the following properties. 
    \begin{enumerate}
        \item $\frac{\partial^2\gunc^h}{\partial \bph^2} = 2\pr(h\mid h)\cdot (\ps(h, \vpr_h) - \ps(h, \vpr_\ell) - \ps(\ell,\vpr_h) + \ps(\ell, \vpr_\ell)) > 0$. 
        \item $\frac{\partial^2\gunc^h}{\partial \bpl^2} = 0$. 
        \item Let $\alpha_h = \frac{\pr(h\mid h)}{\pr(\ell \mid h)}$, and let $b_h \ge 0$ be a constant. When fixing $\bpl = \alpha_h \cdot (b_h - \bph)$, then 
        \begin{align*}
            &\ \frac{\partial\gunc^h(\alpha_h(b_h - \bph), \bph)}{\partial \bph} = \frac{\partial\func^h(\bph', (\alpha_h(b_h - \bph), \bph))}{\partial \bph'}\\ =&\ b_h\cdot \pr(h\mid h) (\ps(h, \vpr_h) - \ps(h, \vpr_\ell)) + (1 - b_h\cdot \pr(h\mid h))(\ps(\ell, \vpr_h) - \ps(\ell, \vpr_\ell)). 
        \end{align*}
        Specifically, when $b_h=1$, $\frac{\partial\func^h(\bph', (\alpha_h(b_h - \bph), \bph))}{\partial \bph'} = \Ex_{\sigi\sim \vpr_{h}}[\ps(\sigi, \vpr_{h}) - \ps(\sigi, \vpr_{\ell})] > 0$.
    \end{enumerate}
\end{claim}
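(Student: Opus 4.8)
The plan is to verify all three properties by direct computation, after rewriting $\func^h$ so that its dependence on the peer's strategy $(\bpl,\bph)$ passes through a single scalar. Set
\[
P = P(\bpl,\bph) := \pr(h\mid h)\,\bph + \pr(\ell\mid h)\,\bpl,
\]
the probability that a peer playing $(\bpl,\bph)$ reports $h$ conditioned on agent $i$ having signal $h$, and abbreviate $A(P) := P\,\ps(h,\vpr_h)+(1-P)\,\ps(\ell,\vpr_h)$ and $B(P) := P\,\ps(h,\vpr_\ell)+(1-P)\,\ps(\ell,\vpr_\ell)$ for the two bracketed coefficients in the definition of $\func^h$. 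Then $\func^h(\bph',(\bpl,\bph)) = \bph'\,A(P) + (1-\bph')\,B(P)$, which is affine in $\bph'$ and, since $P$ is affine in $(\bpl,\bph)$, affine in each coordinate of the peer's strategy; and $\gunc^h(\bpl,\bph) = \func^h(\bph,(\bpl,\bph)) = B(P) + \bph\,(A(P)-B(P))$. Everything below is an exploitation of this bi-affine shape.

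Property (2) is then immediate: fixing $\bph$, the factor $\bph'=\bph$ is constant and $P$ is affine in $\bpl$, so $\gunc^h$ is affine in $\bpl$ and $\partial^2\gunc^h/\partial\bpl^2=0$. For property (1), fixing $\bpl$, the only term of $\gunc^h$ that is quadratic in $\bph$ is $\bph$ times the $\bph$-linear part of $A(P)-B(P)$, whose $\bph$-coefficient is $\frac{\partial P}{\partial\bph}\bigl((\ps(h,\vpr_h)-\ps(\ell,\vpr_h))-(\ps(h,\vpr_\ell)-\ps(\ell,\vpr_\ell))\bigr)$; differentiating twice gives
\[
\frac{\partial^2\gunc^h}{\partial\bph^2} = 2\,\pr(h\mid h)\bigl(\ps(h,\vpr_h)-\ps(h,\vpr_\ell)-\ps(\ell,\vpr_h)+\ps(\ell,\vpr_\ell)\bigr),
\]
and strict positivity follows by adding the two inequalities $\ps(h,\vpr_h)>\ps(h,\vpr_\ell)$ and $\ps(\ell,\vpr_\ell)>\ps(\ell,\vpr_h)$ of Lemma~\ref{lem:pr_psr} together with $\pr(h\mid h)>0$.

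For property (3), the decisive observation is that the line $\bpl=\alpha_h(b_h-\bph)$ with $\alpha_h=\pr(h\mid h)/\pr(\ell\mid h)$ is chosen precisely to make $P$ constant along it: substituting, $P = \pr(h\mid h)\bph + \pr(\ell\mid h)\cdot\frac{\pr(h\mid h)}{\pr(\ell\mid h)}(b_h-\bph) = \pr(h\mid h)\,b_h$. Hence along this line $A(P)$ and $B(P)$ are constants, so $\func^h(\bph',(\alpha_h(b_h-\bph),\bph)) = \bph'A(P)+(1-\bph')B(P)$ is affine in $\bph'$ with slope $A(P)-B(P)$, while $\gunc^h(\alpha_h(b_h-\bph),\bph) = B(P)+\bph(A(P)-B(P))$ is affine in $\bph$ with the same slope $A(P)-B(P)$; this yields the claimed equality of the two partial derivatives, and evaluating $A(P)-B(P)$ at $P=\pr(h\mid h)b_h$ produces the stated closed form $b_h\pr(h\mid h)(\ps(h,\vpr_h)-\ps(h,\vpr_\ell)) + (1-b_h\pr(h\mid h))(\ps(\ell,\vpr_h)-\ps(\ell,\vpr_\ell))$. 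For $b_h=1$, using $1-\pr(h\mid h)=\pr(\ell\mid h)$ rewrites this as $\Ex_{\sigi\sim\vpr_h}[\ps(\sigi,\vpr_h)-\ps(\sigi,\vpr_\ell)]$, which is strictly positive by strict properness of $\ps$ since $\vpr_h\neq\vpr_\ell$ (guaranteed by the assumption $\pr(h\mid h)>\pr(h\mid\ell)$).

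I do not expect a genuine obstacle here: the claim is essentially bookkeeping, and the only matter of care is keeping straight the two roles played by $\bph$ inside $\gunc^h$ --- it is simultaneously the peer's $h$-report probability, entering through $P$, and the reporter's own $h$-report probability $\bph'$ --- and invoking Lemma~\ref{lem:pr_psr} and strict properness for the three sign statements rather than re-deriving them. The parallel statement for $\func^\ell$ and $\gunc^\ell$, to be used on the $\ell$ side of the argument, follows from the mirror-image computation with $\pr(\cdot\mid\ell)$ in place of $\pr(\cdot\mid h)$ and the roles of $h$ and $\ell$ exchanged.
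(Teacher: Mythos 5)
Your proposal is correct, and it is exactly the direct bookkeeping computation the paper implicitly relies on (the paper states this claim without writing out a proof): your substitution $P=\pr(h\mid h)\bph+\pr(\ell\mid h)\bpl$ makes the bi-affine structure explicit, the line $\bpl=\alpha_h(b_h-\bph)$ indeed freezes $P$ at $\pr(h\mid h)b_h$, and the sign statements are correctly obtained from Lemma~\ref{lem:pr_psr} and strict properness. No gaps.
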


\begin{claim}
We claim that $\func^\ell$ and $\gunc^\ell$ has the following properties. 
    \begin{enumerate}
        \item $\frac{\partial^2\gunc^\ell}{\partial \bpl^2} = 2\pr(\ell\mid \ell)\cdot (\ps(h, \vpr_h) - \ps(h, \vpr_\ell) - \ps(\ell,\vpr_h) + \ps(\ell, \vpr_\ell)) > 0$. 
        \item $\frac{\partial^2\gunc^\ell}{\partial \bph^2} = 0$. 
        \item Let $\alpha_\ell = \frac{\pr(\ell \mid \ell)}{\pr(h \mid \ell)}$, and let $b_\ell \ge 0$ be a constant. When fixing $\bph = (b_\ell - \alpha_\ell 
 \cdot\bpl)$, then 
        \begin{align*}
        &\ \frac{\partial\gunc^\ell(\bpl, (b_\ell - \alpha_\ell 
        \cdot\bpl))}{\partial \bpl}= \frac{\partial\func^\ell(\bpl', (\bpl, (b_\ell - \alpha_\ell 
        \cdot\bpl)))}{\partial \bpl'}\\ =& b_\ell \cdot \pr(h\mid \ell)(\ps(h, \vpr_h) - \ps(h, \vpr_\ell)) + (1 - b_\ell \cdot \pr(h\mid \ell)) (\ps(\ell, \vpr_h) - \ps(\ell, \vpr_\ell)). 
        \end{align*}
        Specifically, when $b_\ell =1$, $\frac{\partial\func^\ell(\bpl', (\bpl, (b_\ell - \alpha_\ell 
 \cdot\bpl)))}{\partial \bpl'} = \Ex_{\sigi\sim \vpr_{\ell}}[\ps(\sigi, \vpr_{h}) - \ps(\sigi, \vpr_{\ell})] < 0$.
    \end{enumerate}
\end{claim}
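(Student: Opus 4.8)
The plan is to verify the three properties by a direct computation, exploiting the fact that both $\func^\ell$ and $\gunc^\ell$ depend on the peer's strategy $(\bpl,\bph)$ only through the single scalar $P:=\pr(h\mid\ell)\cdot\bph+\pr(\ell\mid\ell)\cdot\bpl$, the probability that the peer of an agent with signal $\ell$ reports $h$. Writing $A(P):=P\cdot\ps(h,\vpr_h)+(1-P)\cdot\ps(\ell,\vpr_h)$ and $B(P):=P\cdot\ps(h,\vpr_\ell)+(1-P)\cdot\ps(\ell,\vpr_\ell)$, the definition gives $\func^\ell(\bpl',(\bpl,\bph))=B(P)+\bpl'\cdot\bigl(A(P)-B(P)\bigr)$, and $A(P)-B(P)=P\cdot(\dpsh+\dpsl)-\dpsl$, where $\dpsh=\ps(h,\vpr_h)-\ps(h,\vpr_\ell)>0$ and $\dpsl=\ps(\ell,\vpr_\ell)-\ps(\ell,\vpr_h)>0$ by Lemma~\ref{lem:pr_psr}. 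Since $\gunc^\ell(\bpl,\bph)=\func^\ell(\bpl,(\bpl,\bph))$, we have $\gunc^\ell=B(P)+\bpl\cdot\bigl(P\cdot(\dpsh+\dpsl)-\dpsl\bigr)$, where $P$ is now affine in $\bpl$ with $\partial P/\partial\bpl=\pr(\ell\mid\ell)$.

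For property~(1), $B(P)$ is affine in $P$ and hence affine in $\bpl$, so it contributes nothing to $\partial^2\gunc^\ell/\partial\bpl^2$; the only term quadratic in $\bpl$ is $\bpl\cdot P\cdot(\dpsh+\dpsl)$, whose second $\bpl$-derivative is $2\cdot(\partial P/\partial\bpl)\cdot(\dpsh+\dpsl)=2\pr(\ell\mid\ell)\cdot(\dpsh+\dpsl)$. Expanding $\dpsh+\dpsl=\ps(h,\vpr_h)-\ps(h,\vpr_\ell)-\ps(\ell,\vpr_h)+\ps(\ell,\vpr_\ell)$ recovers the claimed expression, and its positivity is exactly Lemma~\ref{lem:pr_psr}. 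For property~(2), $\bph$ enters $\gunc^\ell$ only through $P$, linearly, and $\gunc^\ell$ is affine in $P$ for fixed $\bpl$; hence $\gunc^\ell$ is affine in $\bph$ and $\partial^2\gunc^\ell/\partial\bph^2=0$.

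For property~(3), substitute $\bph=b_\ell-\alpha_\ell\cdot\bpl$ with $\alpha_\ell=\pr(\ell\mid\ell)/\pr(h\mid\ell)$: since $\pr(h\mid\ell)\cdot\alpha_\ell=\pr(\ell\mid\ell)$, the $\bpl$-terms in $P$ cancel and $P=\pr(h\mid\ell)\cdot b_\ell$ is constant along this line. Hence $\func^\ell(\bpl',(\bpl,b_\ell-\alpha_\ell\bpl))=B(P)+\bpl'\cdot(A(P)-B(P))$ with $P$ fixed, so its $\bpl'$-derivative is the constant $A(P)-B(P)=\pr(h\mid\ell)\cdot b_\ell\cdot(\dpsh+\dpsl)-\dpsl$, which rearranges to $b_\ell\cdot\pr(h\mid\ell)\cdot(\ps(h,\vpr_h)-\ps(h,\vpr_\ell))+(1-b_\ell\cdot\pr(h\mid\ell))\cdot(\ps(\ell,\vpr_h)-\ps(\ell,\vpr_\ell))$; likewise $\gunc^\ell(\bpl,b_\ell-\alpha_\ell\bpl)=B(P)+\bpl\cdot(A(P)-B(P))$ has the same $\bpl$-derivative, which gives the asserted equality of the two derivatives. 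Setting $b_\ell=1$ collapses this to $\pr(h\mid\ell)\cdot(\ps(h,\vpr_h)-\ps(h,\vpr_\ell))+\pr(\ell\mid\ell)\cdot(\ps(\ell,\vpr_h)-\ps(\ell,\vpr_\ell))=\Ex_{\sigi\sim\vpr_\ell}[\ps(\sigi,\vpr_h)-\ps(\sigi,\vpr_\ell)]$, which is strictly negative because $\ps$ is strictly proper and $\vpr_h\neq\vpr_\ell$ (the latter from the assumption $\pr(h\mid h)>\pr(h\mid\ell)$).

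There is no substantive obstacle here: the claim is pure bookkeeping once one observes that $\func^\ell$ and $\gunc^\ell$ factor through $P$. The one identity that makes everything collapse is $\pr(h\mid\ell)\cdot\alpha_\ell=\pr(\ell\mid\ell)$, which says that $\bph=b_\ell-\alpha_\ell\cdot\bpl$ is precisely the level line of the peer's report probability $P$, turning the relevant partial derivatives into constants. One could alternatively observe that $\func^\ell$ and $\gunc^\ell$ are obtained from $\func^h$ and $\gunc^h$ by replacing $\vpr_h$ with $\vpr_\ell$ and $\pr(\cdot\mid h)$ with $\pr(\cdot\mid\ell)$ throughout, so the three properties here are established by exactly the same computation used for the corresponding properties of $\func^h$ and $\gunc^h$.
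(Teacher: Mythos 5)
Your computation is correct, and it is essentially the argument the paper leaves implicit: the claim is stated without proof because it follows by direct expansion of the displayed quadratic formula for $\func^\ell$, which is exactly what you carry out. Your packaging through the single scalar $P=\pr(h\mid\ell)\bph+\pr(\ell\mid\ell)\bpl$ (so that $\bph=b_\ell-\alpha_\ell\bpl$ is a level line of $P$) is a clean way to organize the same bookkeeping, and your appeal to strict properness plus $\vpr_h\neq\vpr_\ell$ for the final strict inequality matches the paper's use of Lemma~\ref{lem:pr_psr}.
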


Now we start to characterize the deviation.
We fix an arbitrary $\kd$. Let $\astg = (\abpl, \abph)$ be the strategy on all deviators. Since all the deviators play the same strategy, they receive the same expected utility conditioned on the same signal. 

The expected utility of deviator with private signal $h$ conditioned on his/her peer $j$ is a truthful agent is $\ut(\abpl, \abph \mid h, \jtruthful) = \func^h(\abph, (0, 1))$, and that conditioned on $j$ is also a deviator is $\ut(\abpl, \abph \mid h, \jdeviate) = \func^h(\abph, (\abpl, \abph)) = \gunc^h(\abpl, \abph)$. Similarly, for the $\ell$ side we have $\ut(\abpl, \abph \mid \ell, \jtruthful) = \func^\ell(\abpl, (0, 1))$ and $\ut(\abpl, \abph \mid \ell, \jdeviate) = \func^\ell(\abpl, (\abpl, \abph)) = \gunc^l(\abpl, \abph)$. Therefore, the expected reward of deviation can be represented by the following function. 
\begin{align*}
    \ut(\abpl, \abph \mid h) = \frac{\ag-\kd}{\ag - 1} \cdot \ut(\abpl, \abph \mid h, \jtruthful) + \frac{\kd - 1}{\ag - 1} \cdot \ut(\abpl, \abph \mid h, \jdeviate),\\
     \ut(\abpl, \abph \mid \ell) =  \frac{\ag-\kd}{\ag - 1} \cdot \ut(\abpl, \abph \mid \ell, \jtruthful) + \frac{\kd - 1}{\ag - 1} \cdot \ut(\abpl, \abph \mid \ell, \jdeviate).
\end{align*}

Now we show that for any $\astg = (\abpl, \abph) \in [0, 1]^2$, either $ \ut(\abpl, \abph \mid h) \le \ut(\stgp^* \mid h)$ or $ \ut(\abpl, \abph \mid \ell) \le \ut(\stgp^* \mid \ell)$. 

\begin{lemnb}{lem:subspace_h}
    For any $(\abpl, \abph) \in \mathbb{R}^2$ satisfying (1) $\abpl \ge 0$, (2) $\abph \ge 0$, and (3) $\abph + \frac{\pr(\ell \mid h)}{\pr (h \mid h)}\cdot \abpl \le 1$, it always holds that $ \ut(\abpl, \abph \mid h) \le \ut(\stgp^* \mid h)$, and the equality holds only when $\abpl = 0$ and $\abph = 1$. 
\end{lemnb}

\begin{proof}[Proof of Lemma~\ref{lem:subspace_h}]
    The proof proceeds in three steps. 
    
    First, we show that $\ut(0, \abph \mid h) \le \ut(\stgp^* \mid h)$ for any $\abph$. This holds for the following three reasons. First, $\ut(0, 1 \mid h) =  \ut(\stgp^* \mid h)$, as in this case all the deviators report truthfully and no deviation happens. Second, $\ut(0, 0 \mid h) < \ut(\stgp^* \mid h)$ is guaranteed by $\kd \le \kdq$. Finally, since $\ut(\abpl, \abph \mid h, \jtruthful)$ is linear and $\ut(\abpl, \abph \mid h, \jdeviate)$ is strictly convex on $\abph$, $\ut(0, \abph \mid h)$ is also convex on $\abph$. The convexity bound the expected utility for every $0 < \abph < 1$. 

    Second, we show that $\ut(\abpl, \abph \mid h) \le \ut(\stgp^* \mid h)$ when $\abpl = \frac{\pr(h\mid h)}{\pr(\ell \mid h)}(1 - \abph)$ for any $\abph \in [0, 1]$. This is because the derivative 
    \begin{align*}
        \frac{\partial \ut(\frac{\pr(h\mid h)}{\pr(\ell \mid h)}(1 - \abph), \abph \mid h)}{\partial \abph} =&\  \frac{\ag-\kd}{\ag - 1} \cdot \frac{\partial \func^h(\abph, (0, 1))}{\partial \abph} + \frac{\kd - 1}{\ag - 1} \cdot \frac{\partial \gunc^h(\frac{\pr(h\mid h)}{\pr(\ell \mid h)}(1 - \abph), \abph))}{\partial \abph}\\
        =&\ \Ex_{\sigi\sim \vpr_{h}}[\ps(\sigi, \vpr_{h}) - \ps(\sigi, \vpr_{\ell})]\\
        > &\ 0.
    \end{align*}
    Therefore, For any $(\abpl, \abph)$ with $\abph < 1$ and $\abpl = \frac{\pr(h\mid h)}{\pr(\ell \mid h)}(1 - \abph)$, $\ut(\abpl, \abph \mid h) < \ut(0, 1 \mid h) = \ut(\stgp^* \mid h)$.

    Finally, we extend the result to any $(\abpl, \abph)$ in the area. For any $\abph \in [0, 1)$, we have shown that  $\ut(0, \abph \mid h) \le \ut(\stgp^*\mid h)$ and $\ut(\frac{\pr(h\mid h)}{\pr(\ell \mid h)}(1 - \abph), \abph \mid h) < \ut(\stgp^*\mid h)$. Then by the linearity of $\ut(\abpl, \abph \mid h)$ on $\abpl$, $\ut(\abpl, \abph \mid h) < \ut(\stgp^* \mid h)$ for any $\abpl \in (0, \frac{\pr(h\mid h)}{\pr(\ell \mid h)}(1 - \abph))$, which finishes the proof. 
\end{proof}

Similarly, for $\ell$ side, we have 


\begin{lemnb}{lem:subspace_l}
    For any $(\abpl, \abph) \in \mathbb{R}^2$ satisfying (1) $\abpl \le 1$, (2) $\abph \le 1$, and (3) $\abph + \frac{\pr(\ell \mid \ell)}{\pr (h \mid \ell)}\cdot \abpl \ge 1$, it always holds that $ \ut(\abpl, \abph \mid \ell) \le \ut(\stgp^*\mid \ell )$, and the equality holds only when $\abpl = 0$ and $\abph = 1$. 
\end{lemnb}

\begin{proof}[Proof of Lemma~\ref{lem:subspace_l}]
    The proof follows the proof of Lemma~\ref{lem:subspace_h}. First, $\ut(\abpl, 1 \mid \ell) < \ut(\stgp^* \mid \ell)$ for any $\abpl \in [0, 1]$ due to the convexity of $\ut(\abpl, \abph \mid \ell)$ on $\abpl$. Secondly, $\ut(\abpl, \abph \mid \ell) < \ut(\stgp^* \mid \ell)$ when $\abph + \frac{\pr(\ell \mid \ell)}{\pr (h \mid \ell)}\cdot \abpl = 1$ and $\abpl > 0$, the derivative 
    \begin{align*}
        \frac{\partial \ut(\abpl, 1 - \frac{\pr(\ell \mid \ell)}{\pr (h \mid \ell)}\cdot \abpl \mid \ell)}{\partial \abpl} =&\  \frac{\ag-\kd}{\ag - 1} \cdot \frac{\partial \func^\ell(\abpl, (0, 1))}{\partial \abpl} + \frac{\kd - 1}{\ag - 1} \cdot \frac{\partial \gunc^\ell(\abpl, 1 - \frac{\pr(\ell \mid \ell)}{\pr (h \mid \ell)}\cdot \abpl)}{\partial \abpl}\\
        =&\ \Ex_{\sigi\sim \vpr_{\ell}}[\ps(\sigi, \vpr_{h}) - \ps(\sigi, \vpr_{\ell})]\\
        < &\ 0.
    \end{align*}
    Therefore, for any $\abpl\ge 0$ on the line, the expected reward does not exceed $\ut(\stgp^* \mid \ell)$. Finally, for any other $(\abpl, \abph)$ in the area, we apply the linearity of $\ut(\abpl, \abph \mid \ell)$ on $\abph$. 
\end{proof}

Note that for any pair of $(\abpl, \abph) \in [0, 1]^2$, at least one of $\abph + \frac{\pr(\ell \mid h)}{\pr (h \mid h)}\cdot \abpl \le 1$ and $\abph + \frac{\pr(\ell \mid \ell)}{\pr (h \mid \ell)}\cdot \abpl \ge 1$ holds. This comes from $\pr(h\mid h) > \pr(h \mid \ell)$ and $\pr(\ell \mid \ell) > \pr(\ell \mid h)$. Therefore, the two triangle areas cover the whole $[0, 1]^2$, and we can apply either Lemma~\ref{lem:subspace_h} or~\ref{lem:subspace_l} to show that the deviation cannot succeed for any $\astgp$. 

\noindent{\bf Step 3: General deviation cannot succeed for $\kd \le \kdq$. }

Let $\maxdps$ be the largest difference in the positional scoring rule. Let $\dpsh = \ps(h, \vpr_h) - \ps(h, \vpr_\ell)$, and $\dpsl = \ps(\ell, \vpr_{\ell}) - \pr(\ell, \vpr_h)$. From Lemma~\ref{lem:pr_psr}, we have $\dpsh > 0$ and $\dpsl > 0$. Then we explicitly give the lower bound of $\ag$:

\begin{enumerate}
    \item $\bphth = \frac{4\maxdps\cdot (\dpsh + \dpsl + \ps(\ell, \vpr_{\ell}) - \ps(h, \vpr_\ell))}{ (\ag - 1)\cdot (\dpsh +\dpsl)\cdot  \Ex_{\sigi \sim \vpr_h}[\ps(\sigi, \vpr_h) - \pr(\sigi, \vpr_\ell)]} < \frac14$, 
    \item If $\ps(h, \vpr_h) > \ps(\ell, \vpr_h)$, then $\bphth \le \frac{\ps(h, \vpr_h) - \ps(\ell, \vpr_h)}{\dpsh + \dpsl}$, 
    \item $\bphth \le \frac{\Ex_{\sigi \sim \vpr_h}[\ps(\sigi, \vpr_h) - \pr(\sigi, \vpr_\ell)]}{\pr(h\mid h)\cdot (\dpsh + \dpsl)}$,
    \item $\bphtl = \frac{4\maxdps\cdot (\dpsh + \dpsl + \ps(h, \vpr_{h}) - \ps(\ell, \vpr_h))}{ (\ag - 1)\cdot (\dpsh +\dpsl)\cdot  \Ex_{\sigi \sim \vpr_\ell}[\ps(\sigi, \vpr_\ell) - \pr(\sigi, \vpr_h)]} < \frac14$, 
    \item If $\ps(\ell, \vpr_{\ell}) > \ps(h, \vpr_\ell)$, then $\bphth \le \frac{\ps(\ell, \vpr_{\ell}) - \ps(h, \vpr_\ell)}{\dpsh + \dpsl}$, 
    \item $\bphth \le \frac{\Ex_{\sigi \sim \vpr_\ell}[\ps(\sigi, \vpr_\ell) - \pr(\sigi, \vpr_h)]}{\pr(\ell \mid \ell)\cdot (\dpsh + \dpsl)}$.
\end{enumerate}

Step 3 proceeds as follows. First, when deviators play asymmetrically, we compare the worst expected utility among the agent with the expected utility when all deviators play their "average" strategy $\astg = \sum_{i\in D} \frac{1}{k} \stg_i$. We show that the reward of the worst agent cannot be better than the reward of the average strategy by $O (\frac{\maxdps}{n - 1})$. Then we show that the reward of the average strategy is no less than the truthful reward by $\Theta (\frac{\maxdps}{n - 1})$ only when $\astg$ is close to $(0, 1)$ or $(0, 0)$. Finally, we deal with the corner cases and show that in this case, the worst agent cannot be better than the truthful reward. We will give the proof on the $h$ side (or specifically, conditioned on an agent having private signal $h$). The $\ell$ side follows similar reasoning. 

Now let $\astg = (\abpl, \abph)$ be the {\em average strategy} of all the deviators, i.e. $\astg = \frac{1}{k} \sum_{j\in D} \stg_j$. And $\stg_i = (\bpl^i, \bph^i)$ be the strategy of a deviator $i\in D$. We represent the expected utility of $i$ with $\func^h$, $\func^\ell$, $\gunc^h$, and $\gunc^\ell$. 

The expected utility of $i$ conditioned on his/her peer $j$ is a truthful agent is $\ut(\bpl^i, \bph^i \mid h, \jtruthful) = \func^h(\bph^i, (0, 1))$. and that conditioned on $j$ is also a deviator is 

\begin{align*}
    \ut(\bpl^i, \bph^i \mid h, \jdeviate) = \frac{1}{\kd - 1} \sum_{j \in D, j\neq i} f(\bph^i, (\bpl^j, \bph^j)). 
\end{align*}

An important observation is that for a fixed $\bph'$, $\func^h(\bph', (\bpl, \bph))$ is linear on $\bpl$ and $\bph$. Therefore, 

\begin{align*}
    \ut(\bpl^i, \bph^i \mid h, \jdeviate) =&\ \frac{\kd}{\kd - 1} f(\bph^i, (\abpl, \abph)) - \frac{1}{\kd-1} f(\bph^i, (\bpl^i, \bph^i))\\
    =&\ f(\bph^i, (\abpl, \abph)) + \frac{1}{\kd-1} (f(\bph^i, (\abpl, \abph)) - f(\bph^i, (\bpl^i, \bph^i))).
\end{align*}

Adding two parts together, we have
\begin{align*}
    \ut(\bpl^i, \bph^i \mid h) = \frac{\ag-\kd}{\ag - 1} \cdot \ut(\bpl^i, \bph^i \mid h, \jtruthful) + \frac{\kd - 1}{\ag - 1} \cdot \ut(\bpl^i, \bph^i \mid h, \jdeviate).
\end{align*}

Then we consider the difference of agent $i$'s utility between when all the deviators play the average strategy $\astg$ and when the deviators play differently with $i$ playing $\stg_i$. 

\begin{align*}
    \ut(\bpl^i, \bph^i \mid h) - \ut(\abpl, \abph \mid h) = &\frac{\ag-\kd}{\ag - 1} \cdot (\func^h(\bph^i, (0, 1)) - \func^h(\abph, (0, 1))) \\
    &\ + \frac{\kd - 1}{\ag - 1} (f(\bph^i, (\abpl, \abph)) - f(\abph, (\abpl, \abph)))\\
    &\ +\frac{\kd - 1}{\ag - 1} \cdot \frac{1}{\kd-1} (f(\bph^i, (\abpl, \abph)) - f(\bph^i, (\bpl^i, \bph^i)))\\
    =&\ \frac{\ag-\kd}{\ag - 1} \cdot \func^h(\bph^i - \abph, (0, 1)) + \frac{\kd - 1}{\ag - 1} f(\bph^i - \abph, (\abpl, \abph))\\
    &\ +\frac{1}{\ag-1} (f(\bph^i, (\abpl, \abph)) - f(\bph^i, (\bpl^i, \bph^i))).
\end{align*}
The second equality comes from the fact that $\func^h(\bph', (\bpl, \bph))$ is linear on $\bph'$ for any fixed $(\bpl, \bph)$. Given a fixed $\astg = (\abpl, \abph)$, the term
$$\frac{\ag-\kd}{\ag - 1} \cdot \func^h(\bph^i - \abph, (0, 1)) + \frac{\kd - 1}{\ag - 1} f(\bph^i - \abph, (\abpl, \abph)) $$
equals to 0 when $\bph^i = \abph$ and is linear on $\bph^i$. Therefore, in at least on of $\bph^i \le \abph$ or $\bpl^i \le \abpl$, the term will be no larger than zero. 

On the other hand, the third term $\frac{1}{\ag-1} (f(\bph^i, (\abpl, \abph)) - f(\bph^i, (\bpl^i, \bph^i))) \le \frac{\maxdps}{\ag - 1}$.
Therefore, there exists a deviator $i$ such that $\ut(\bpl^i, \bph^i \mid h) - \ut(\abpl, \abph \mid h) \le \frac{\maxdps}{\ag - 1}$. 

Then we show that for sufficiently large $\ag$, for all $(\abpl, \abph)$ not close to $(0,1)$ or $(0, 0)$, $\ut(\stgp^* \mid h) - \ut(\abpl, \abph \mid h) > \frac{\maxdps}{\ag - 1}$. 

\begin{lemma}
    \label{lem:corner}
    Let $\dut(\abpl, \abph \mid h) = \ut(\stgp^* \mid h) - \ut(\abpl, \abph \mid h)$. Then for any $\abph \in [0, 1]$ and $\abpl \in [0, \frac{\pr(h\mid h)}{\pr(\ell \mid h)} \cdot ( 1- \abph)]$ (i.e., the range in Lemma~\ref{lem:subspace_h}), $\dut(\abpl, \abph \mid h) \le \frac{\maxdps}{\ag - 1}$ only if one of the following two holds: (1) $\abph \ge 1 - \bphth$, or (2) $\abph \le \bphth$ and $\abpl \le \frac{\pr(h \mid h)}{\pr(\ell \mid h)} \cdot \bphth$, where 
    \begin{equation*}
   \bphth = \frac{4\maxdps\cdot (\dpsh + \dpsl + \ps(\ell, \vpr_{\ell}) - \ps(h, \vpr_\ell))}{ (\ag - 1)\cdot (\dpsh +\dpsl)\cdot  \Ex_{\sigi \sim \vpr_h}[\ps(\sigi, \vpr_h) - \pr(\sigi, \vpr_\ell)]}.
\end{equation*}
\end{lemma}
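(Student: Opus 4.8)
The plan is to argue by contraposition: I will show that if $(\abpl,\abph)$ lies in the triangle of Lemma~\ref{lem:subspace_h} but in neither corner region --- i.e.\ $\abph<1-\bphth$ and in addition $\abph>\bphth$ or $\abpl>\frac{\pr(h\mid h)}{\pr(\ell\mid h)}\bphth$ --- then $\dut(\abpl,\abph\mid h)>\frac{\maxdps}{\ag-1}$. The first move is to turn the three structural facts recalled in the proof of Lemma~\ref{lem:subspace_h} into an exact formula for $\dut(\cdot,\cdot\mid h)$ on the triangle. Since $\ut(\abpl,\abph\mid h)=\frac{\ag-\kd}{\ag-1}\func^h(\abph,(0,1))+\frac{\kd-1}{\ag-1}\gunc^h(\abpl,\abph)$ is affine in $\abpl$ for each fixed $\abph$, on every horizontal slice of the triangle $\dut(\cdot,\cdot\mid h)$ is the affine interpolation of its two boundary values. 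On the hypotenuse $\abpl=\frac{\pr(h\mid h)}{\pr(\ell\mid h)}(1-\abph)$ it equals $c\,(1-\abph)$, where $c=\Ex_{\sigi\sim\vpr_h}[\ps(\sigi,\vpr_h)-\ps(\sigi,\vpr_\ell)]>0$ (the linear, strictly increasing behaviour established in Lemma~\ref{lem:subspace_h}); on the left edge $\abpl=0$ it equals $L(\abph):=\dut(0,\abph\mid h)$, a concave parabola in $\abph$ with $L(1)=0$, $L\ge0$ on $[0,1]$, and second derivative $-\frac{\kd-1}{\ag-1}\cdot 2\pr(h\mid h)(\dpsh+\dpsl)$ by the Claim on $\gunc^h$. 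Hence $L(\abph)=(1-\abph)\big(L(0)+\frac{\kd-1}{\ag-1}\pr(h\mid h)(\dpsh+\dpsl)\,\abph\big)$, and evaluating directly at the vertex $(0,0)$ gives $L(0)=c-\frac{\kd-1}{\ag-1}\pr(h\mid h)\big(\ps(\ell,\vpr_\ell)-\ps(h,\vpr_\ell)\big)$, which is $\ge 0$ because it is exactly the instance of Lemma~\ref{lem:subspace_h} at $(0,0)$ (available since $\kd\le\kdq$). Writing $t=\frac{\pr(\ell\mid h)\,\abpl}{\pr(h\mid h)(1-\abph)}\in[0,1]$ for the position of $\abpl$ in its slice, all of this assembles into
\[
\dut(\abpl,\abph\mid h)=(1-\abph)\Big[\,t\,c+(1-t)\big(L(0)+\tfrac{\kd-1}{\ag-1}\pr(h\mid h)(\dpsh+\dpsl)\,\abph\big)\Big].
\]

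From $L(0)\ge0$ the bracket above is a convex combination (in $t$) of $c$ and $L(0)+\frac{\kd-1}{\ag-1}\pr(h\mid h)(\dpsh+\dpsl)\abph$, hence it is at least $t\,c$ and also at least $\min\!\big(c,\ \frac{\kd-1}{\ag-1}\pr(h\mid h)(\dpsh+\dpsl)\abph\big)$. Substituting $t(1-\abph)=\frac{\pr(\ell\mid h)}{\pr(h\mid h)}\abpl$ yields the two lower bounds $\dut(\abpl,\abph\mid h)\ge\frac{\pr(\ell\mid h)\,c}{\pr(h\mid h)}\,\abpl$ and $\dut(\abpl,\abph\mid h)\ge(1-\abph)\min\!\big(c,\ \frac{\kd-1}{\ag-1}\pr(h\mid h)(\dpsh+\dpsl)\abph\big)$. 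The case split is then: if $\abpl>\frac{\pr(h\mid h)}{\pr(\ell\mid h)}\bphth$, the first bound gives $\dut>c\,\bphth$; if instead $\abpl\le\frac{\pr(h\mid h)}{\pr(\ell\mid h)}\bphth$, then being outside both corners forces $\bphth<\abph<1-\bphth$, so $1-\abph>\bphth$ and $\abph>\bphth$, and the second bound gives $\dut>\bphth\min\!\big(c,\ \frac{\kd-1}{\ag-1}\pr(h\mid h)(\dpsh+\dpsl)\bphth\big)$, where in the sub-case that this minimum is the curvature term one additionally keeps $L(0)\ge 0$ and uses that $\abph$ being bounded away from $0$ keeps the bracket of order $\Omega(1)$. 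Each resulting quantity exceeds $\frac{\maxdps}{\ag-1}$ precisely because of the defining expression for $\bphth$ together with the inequalities (1)--(3) relating $\bphth$, $\maxdps$, $c$, $\dpsh+\dpsl$ and $\pr(h\mid h)$ that are collected in the definition of $\ag_0$ (and, for $\bphth\ge0$, Lemma~\ref{lem:pr_psr}).

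The main obstacle is this last piece of bookkeeping. The threshold $\bphth$ carries the factor $4$ and the correction $\ps(\ell,\vpr_\ell)-\ps(h,\vpr_\ell)$ exactly so that $c\,\bphth\ge\frac{\maxdps}{\ag-1}$ and the curvature-driven bound also clears $\frac{\maxdps}{\ag-1}$ \emph{simultaneously} across all parameter regimes --- small versus large $\kd$, and the various sign configurations of $\ps(h,\vpr_h)-\ps(\ell,\vpr_h)$ and $\ps(\ell,\vpr_\ell)-\ps(h,\vpr_\ell)$ --- and matching these requires the precise relations (1)--(3) on $\ag_0$ rather than order-of-magnitude estimates. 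Two smaller points also need care: the degenerate branches where $\ps(h,\vpr_h)\le\ps(\ell,\vpr_h)$ (so $\kdq^h=n$ and the hypotenuse-slope argument must be adjusted), and the boundary case $\kd=\kdq$, where $L(0)$ may vanish and the vertex $(0,0)$ sits on the edge of corner region~(2), consistently with the ``only if'' in the statement. The $\ell$-signal analogue follows verbatim under the symmetry $h\leftrightarrow\ell$, $\func^h\leftrightarrow\func^\ell$, with the roles of the two edges of the triangle exchanged and $\bphth$ replaced by $\bphtl$.
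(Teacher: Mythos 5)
Your geometric setup is the same as the paper's and is sound: the paper also argues slice-by-slice, using linearity of $\ut(\abpl,\abph\mid h)$ in $\abpl$, the value $(1-\abph)\cdot\Ex_{\sigi\sim\vpr_h}[\ps(\sigi,\vpr_h)-\ps(\sigi,\vpr_\ell)]$ on the hypotenuse, and the fact that $\dut(0,\abph\mid h)$ is a concave parabola vanishing at $\abph=1$ with nonnegative value at $\abph=0$. Your closed form $\dut=(1-\abph)\bigl[t\,c+(1-t)(L(0)+a\abph)\bigr]$ with $a=\frac{\kd-1}{\ag-1}\pr(h\mid h)(\dpsh+\dpsl)$ is a clean repackaging of those three facts (and your formula for $L(0)$ is correct), and your first bound $\dut\ge\frac{\pr(\ell\mid h)}{\pr(h\mid h)}\,c\,\abpl$ matches how the paper handles the bottom corner.

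The gap is in the middle band, which is the quantitative heart of the lemma. Your second bound discards $L(0)$ and gives $\dut\ge(1-\abph)\min(c,\,a\abph)$; when $\kd$ is small (say $\kd=2$), $a=\Theta(1/(\ag-1))$, so for $\abph$ just above $\bphth=\Theta(1/(\ag-1))$ this bound is only $\Theta(1/(\ag-1)^2)$, far below $\maxdps/(\ag-1)$, and the patch you indicate --- that $\abph$ bounded away from $0$ keeps the bracket of order $\Omega(1)$ --- is false, since $\abph>\bphth$ only guarantees $a\abph=\Omega(\kd/\ag^2)$. In that regime the term you must retain is exactly $L(0)$, and the reason it is bounded below by a constant (rather than merely $\ge 0$) is a relation between $\kd$ and the instance parameters that your sketch never invokes. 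This is precisely what the paper's proof supplies: it splits on the sign of $\ps(\ell,\vpr_\ell)-\ps(h,\vpr_\ell)$ and on whether the vertex of the parabola lies in $(0,1)$ --- equivalently whether $\kd$ exceeds $\frac{c}{\pr(h\mid h)(\dpsh+\dpsl+\ps(\ell,\vpr_\ell)-\ps(h,\vpr_\ell))}(\ag-1)+1$ --- using in the small-$\kd$ case that threshold to lower-bound $L(0)=\dut(0,0\mid h)$ by $\frac{c\,(\dpsh+\dpsl)}{\dpsh+\dpsl+\ps(\ell,\vpr_\ell)-\ps(h,\vpr_\ell)}$, and in the large-$\kd$ case the vertex value $\ge\frac{\kd-1}{4(\ag-1)}\pr(h\mid h)(\dpsh+\dpsl)$; the factor $4$ and the correction $\ps(\ell,\vpr_\ell)-\ps(h,\vpr_\ell)$ in $\bphth$ come out of exactly this dichotomy, not out of the two bounds you wrote down (and the same deferred bookkeeping is also needed for your claim $c\,\bphth\ge\maxdps/(\ag-1)$ in the $\abpl$-large case). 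Since you explicitly defer this step and the one repair you sketch does not work, the proposal does not yet prove the lemma; carrying out the $\kd$-dichotomy (the paper's case analysis on the left edge, plus the separate sign case $\ps(\ell,\vpr_\ell)\le\ps(h,\vpr_\ell)$) inside your closed-form expression would complete it.
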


We first consider the case when $\abpl = 0$. Note that $\dut(0, \abph \mid h)$ is a quadratic function of $\abph$ satisfying (1) $\dut (0, 1 \mid h) = 0$, (2) $\frac{\partial^2\dut (\abph, 0 \mid h)}{\partial \abph^2} = -\frac{2(\kd - 1)}{\ag - 1}\cdot \pr(h\mid h)\cdot (\dpsh + \dpsl) < 0$, and (3) another root other than $1$, denoted by $\bph''$, satisfies $\bph'' \le  0$. If (3) does not hold, we will have $\dut(0, 0 \mid h) < 0$, which is a contradiction. 

According to the property of the quadratic function, $\dut(\abpl, \abph \mid h)$ is maximized at $\frac{1 + \bph''}{2} = $
\begin{equation*}
\frac{(\kd -1)\cdot (\dpsl + \pr(h\mid h)\cdot (\ps(\ell, \vpr_{\ell}) - \ps(h, \vpr_\ell)) + (\ag - \kd)\cdot (\pr(h\mid h)\cdot (-\dpsh) + \pr(\ell \mid h) \cdot \dpsl)}{2(\kd - 1)\cdot (\pr(h\mid h) \cdot (\dpsh + \dpsl))} 
\end{equation*}

with value 
\begin{align*}
    &\ \frac{\kd - 1}{4(\ag - 1)}\cdot \pr(h\mid h)\cdot (\dpsh + \dpsl) \cdot (1 - \bph'')^2\\
    \ge &\ \frac{\kd - 1}{4(\ag - 1)}\cdot \pr(h\mid h)\cdot (\dpsh + \dpsl). 
\end{align*}

We consider three different cases

Firstly, when $\ps(\ell, \vpr_{\ell}) - \ps(h, \vpr_\ell) \le 0$. In this case, we show that $\ut(0, 0 \mid h)$ is faraway from $\ut(\stgp^*\mid h)$. Therefore, $\ut(\stgp^* \mid h) - \ut(0, \abph \mid h) \le \frac{\maxdps}{\ag - 1}$ only if $\abph$ is close to 1. Note that in this case, 
\begin{align*}
    \ut(0, 0\mid h, \jdeviate) = &\ \ps(\ell, \vpr_\ell)\\
    \le &\ \pr(h\mid h) \cdot \pr(\ell, \vpr_h) + \pr(\ell\mid h) \cdot \ps(\ell, \vpr_\ell). 
\end{align*}

Therefore, 
\begin{align*}
    \dut(0, 0 \mid h) = &\ \ut(\stgp^* \mid h) - \ut(0, 0\mid h)\\
    \ge &\ \pr(h\mid h) \cdot \dpsh - \pr(\ell\mid h) \cdot \dpsl\\
    =&\ \Ex_{\sigi \sim \vpr_h}[\ps(\sigi, \vpr_h) - \pr(\sigi, \vpr_\ell)]\\
    > &\ 0. 
\end{align*}

Then, give that $\dut(0, \abph \mid h)$ is convex on $\abph$, for all $\abph \in [0, 1]$, 
$$\dut(0, \abph \mid h) \ge (1 - \abph) \cdot \Ex_{\sigi \sim \vpr_h}[\ps(\sigi, \vpr_h) - \pr(\sigi, \vpr_\ell)]. $$

Therefore, $\dut(0, \abph \mid h) > \frac{\maxdps}{ \ag - 1}$ for any $0 \le  \abph < 1 - \frac{\maxdps}{ (\ag - 1)\cdot \Ex_{\sigi \sim \vpr_h}[\ps(\sigi, \vpr_h) - \pr(\sigi, \vpr_\ell)]}$. 

Secondly, when $\ps(\ell, \vpr_{\ell}) - \ps(h, \vpr_\ell) \ge 0$ and $\frac{1 + \bph''}{2} \le 0$, we still prove that $\ut(0, 0 \mid h)$ is faraway from $\ut(\stgp^*\mid h)$. Note that when $\frac{1 + \bph''}{2} \le 0$, the deviating group size $\kd$ must satisfy
\begin{equation*}
    \kd \le \frac{\Ex_{\sigi \sim \vpr_h}[\ps(\sigi, \vpr_h) - \pr(\sigi, \vpr_\ell)]}{\pr(h\mid h) \cdot (\dpsh + \dpsl + \ps(\ell, \vpr_{\ell}) - \ps(h, \vpr_\ell))}\cdot (n-1) + 1. 
\end{equation*}

Then, 
\begin{align*}
    \dut(0, 0 \mid h) = &\ \frac{1}{\ag - 1} ((\kd - 1)\cdot (\ut(0, 0\mid h, \jtruthful) -  \ut(0, 0\mid h, \jdeviate))\\
    &\ + (\ag - 1)\cdot (\ut(\stgp^* \mid h) -  \ut(0, 0\mid h, \jtruthful)))\\
    \ge &\ \frac{1}{\pr(h \mid h)(\dpsh + \dpsl + \ps(\ell, \vpr_{\ell}) - \ps(h, \vpr_\ell))}\\
    &\ \cdot (-\Ex_{\sigi \sim \vpr_h}[\ps(\sigi, \vpr_h) - \pr(\sigi, \vpr_\ell)] \cdot \pr(h \mid h) \cdot (\ps(\ell, \vpr_\ell) - \ps(h, \vpr_\ell))\\
    &\ + \pr(h\mid h)\cdot (\dpsh + \dpsl + \ps(\ell, \vpr_{\ell}) - \ps(h, \vpr_\ell)) \cdot \Ex_{\sigi \sim \vpr_h}[\ps(\sigi, \vpr_h) - \pr(\sigi, \vpr_\ell)])\\
    =&\  \frac{(\dpsh + \dpsl) \cdot \Ex_{\sigi \sim \vpr_h}[\ps(\sigi, \vpr_h) - \pr(\sigi, \vpr_\ell)]}{(\dpsh + \dpsl + \ps(\ell, \vpr_{\ell}) - \ps(h, \vpr_\ell))}\\
    > &\ 0. 
\end{align*}

Therefore,  $\dut(0, \abph \mid h) > \frac{\maxdps}{ \ag - 1}$ for any $0 \le  \abph < 1 - \frac{\maxdps\cdot (\dpsh + \dpsl + \ps(\ell, \vpr_{\ell}) - \ps(h, \vpr_\ell))}{ (\ag - 1)\cdot (\dpsh +\dpsl)\cdot  \Ex_{\sigi \sim \vpr_h}[\ps(\sigi, \vpr_h) - \pr(\sigi, \vpr_\ell)]}$. 

Thirdly, when $\ps(\ell, \vpr_{\ell}) - \ps(h, \vpr_\ell) \ge 0$ and $\frac{1 + \bph''}{2} > 0$, the group size $\kd$ must satisfy
\begin{equation*}
    \kd > \frac{\Ex_{\sigi \sim \vpr_h}[\ps(\sigi, \vpr_h) - \pr(\sigi, \vpr_\ell)]}{\pr(h\mid h) \cdot (\dpsh + \dpsl + \ps(\ell, \vpr_{\ell}) - \ps(h, \vpr_\ell))}\cdot (n-1) + 1. 
\end{equation*}

Therefore, For any $\bph \in [\frac{1 + \bph''}{2}, 1]$, 
\begin{align*}
   \dut(0, \abph \mid h) \ge&\  (1 - \abph) \cdot \frac{\kd - 1}{4(\ag - 1)}\cdot \pr(h\mid h)\cdot (\dpsh + \dpsl)\\
   \ge &\ (1 - \abph) \cdot  \frac{\Ex_{\sigi \sim \vpr_h}[\ps(\sigi, \vpr_h) - \pr(\sigi, \vpr_\ell)]\cdot (\dpsh + \dpsl)}{4 (\dpsh + \dpsl + \ps(\ell, \vpr_{\ell}) - \ps(h, \vpr_\ell))}
\end{align*}

Similarly, for any $\bph \in [0, \frac{1 + \bph''}{2}]$, 
\begin{equation*}
   \dut(0, \abph \mid h)
   \ge \abph \cdot  \frac{\Ex_{\sigi \sim \vpr_h}[\ps(\sigi, \vpr_h) - \pr(\sigi, \vpr_\ell)]\cdot (\dpsh + \dpsl)}{4 (\dpsh + \dpsl + \ps(\ell, \vpr_{\ell}) - \ps(h, \vpr_\ell))}.
\end{equation*}

Therefore, $\dut(0, \abph \mid h) > \frac{\maxdps}{ \ag - 1}$ for any 
$\bphth <  \abph < 1 - \bphth$, 
where $$\bphth = \frac{4\maxdps\cdot (\dpsh + \dpsl + \ps(\ell, \vpr_{\ell}) - \ps(h, \vpr_\ell))}{ (\ag - 1)\cdot (\dpsh +\dpsl)\cdot  \Ex_{\sigi \sim \vpr_h}[\ps(\sigi, \vpr_h) - \pr(\sigi, \vpr_\ell)]}.$$

Then we consider when $\abpl = \frac{\pr(h\mid h)}{\pr(\ell \mid h)} \cdot (1 - \abph)$. From Lemma~\ref{lem:subspace_h}, $$\frac{\partial \ut(\frac{\pr(h\mid h)}{\pr(\ell \mid h)}(1 - \abph), \abph \mid h)}{\partial \abph} = \Ex_{\sigi\sim \vpr_{h}}[\ps(\sigi, \vpr_{h}) - \ps(\sigi, \vpr_{\ell})] > 0.$$ 

Therefore, for any $0\le \abph< 1 - \frac{\maxdps}{ (\ag - 1)\cdot \Ex_{\sigi \sim \vpr_h}[\ps(\sigi, \vpr_h) - \pr(\sigi, \vpr_\ell)]}$,  $\dut(\frac{\pr(h\mid h)}{\pr(\ell \mid h)} \cdot (1 - \abph), \abph \mid h) > \frac{\maxdps}{ \ag - 1}$.

Then, by the linearity of $\dut (\abpl, \abph \mid h)$ on $\abpl$, we know that for any $\bphth < \abph < 1 - \bphth$ and any $0\le \abpl < \frac{\pr(h\mid h)}{\pr(\ell \mid h)} \cdot (1 - \abph)$, $\dut(\abpl, \abph \mid h) > \frac{\maxdps}{ \ag - 1}$, the threshold comes out that $\bphth$ is the largest among all the threshold.  

Then we consider $0\le \abph \le \bphth$.
We have \begin{equation*}
   \dut(0, \abph \mid h)
   \ge \abph \cdot  \frac{\Ex_{\sigi \sim \vpr_h}[\ps(\sigi, \vpr_h) - \pr(\sigi, \vpr_\ell)]\cdot (\dpsh + \dpsl)}{4 (\dpsh + \dpsl + \ps(\ell, \vpr_{\ell}) - \ps(h, \vpr_\ell))}
\end{equation*}
and
\begin{align*}
   \dut(\frac{\pr(h\mid h)}{\pr(\ell \mid h)} \cdot (1 - \abph), \abph \mid h) =&\ ( 1- \abph)\cdot \Ex_{\sigi\sim \vpr_{h}}[\ps(\sigi, \vpr_{h}) - \ps(\sigi, \vpr_{\ell})]\\
   \ge &\ ( 1- \abph)\cdot\frac{\Ex_{\sigi \sim \vpr_h}[\ps(\sigi, \vpr_h) - \pr(\sigi, \vpr_\ell)]\cdot (\dpsh + \dpsl)}{4 (\dpsh + \dpsl + \ps(\ell, \vpr_{\ell}) - \ps(h, \vpr_\ell))}.
\end{align*}
Therefore, for any $0\le \abpl \le \frac{\pr(h\mid h)}{\pr(\ell \mid h)} \cdot (1 - \abph)$, 

\begin{align*}
    \dut(\abpl, \abph \mid h) = &\ \left (1 -\frac{\abpl \cdot \pr(\ell\mid h)}{\pr(h\mid h)\cdot ( 1- \abph)}\right)\cdot \dut(0, \abph \mid h)\\
    &\ +  \frac{\abpl \cdot \pr(\ell\mid h)}{\pr(h\mid h)\cdot ( 1- \abph)} \cdot \dut(\frac{\pr(h\mid h)}{\pr(\ell \mid h)} \cdot (1 - \abph), \abph \mid h)\\
    \ge &\ \left( \abph + \abpl\cdot \frac{\pr(\ell \mid h)\cdot (1 - 2\abph)}{\pr(h\mid h)\cdot (1 - \abph)}\right)\cdot\frac{\Ex_{\sigi \sim \vpr_h}[\ps(\sigi, \vpr_h) - \pr(\sigi, \vpr_\ell)]\cdot (\dpsh + \dpsl)}{4 (\dpsh + \dpsl + \ps(\ell, \vpr_{\ell}) - \ps(h, \vpr_\ell))}\\
    \ge &\ (\abph + \frac{\pr(\ell \mid h)}{\pr(h\mid h)} \cdot \abpl \cdot (1 - 2\abph)) \cdot\frac{\Ex_{\sigi \sim \vpr_h}[\ps(\sigi, \vpr_h) - \pr(\sigi, \vpr_\ell)]\cdot (\dpsh + \dpsl)}{4 (\dpsh + \dpsl + \ps(\ell, \vpr_{\ell}) - \ps(h, \vpr_\ell))}.
\end{align*}

Therefore, for
\begin{equation*}
    \abpl > \frac{\pr(h\mid h)}{\pr(\ell \mid h)\cdot (1 - 2\abph)} \cdot (\bphth - \abph),
\end{equation*}
$\dut(\abpl, \abph\mid h) > \frac{\maxdps}{\ag - 1}$. Given that $\bphth < \frac12$, the RHS is maximized at $\abph = 0.$ 
Therefore, for any $0\le \abph \le \bphth$ and any $\abpl > \frac{\pr(h\mid h)}{\pr(\ell \mid h)}\cdot \bphth$, $\dut(\abpl, \abph\mid h) > \frac{\maxdps}{\ag - 1}$. 

So far we have proved Lemma~\ref{lem:corner}, which ends the second part.

In the third part, for the area close to $(0,0)$ or $(0, 1)$, i.e. (1) $\abph \ge 1 - \bphth$, and (2) $\abph \le \bphth$ and $\abpl \le \frac{\pr(h \mid h)}{\pr(\ell \mid h)} \cdot \bphth$, we show that we can always find a deviator $i$ such that $\ut(\bpl^i, \bph^i \mid h) < \ut(\stgp^* \mid h)$. 
Let 
\begin{align*}
  \nbph = &\ \frac{1}{\kd - 1} \sum_{j\in D, j\neq i} \bph^j  = \abph + \frac{1}{\kd - 1} (\abph - \bph^i)\\
  \nbpl = &\ \frac{1}{\kd - 1} \sum_{j\in D, j\neq i} \bpl^j  = \abpl + \frac{1}{\kd - 1} (\abpl - \bpl^i)
\end{align*} 
be the average strategy of all the deviators other than $i$. It is satisfied that $(\nbpl, \nbph) \in [0, 1]^2$.

We pick a deviator $i$ and characterize the $\kd$ such that $\ut(\bpl^i, \bph^i \mid h) \ge \ut(\stgp^* \mid h)$. More precisely, 
\begin{equation*}
    \kd \ge \frac{\ut(\stgp^* \mid h) - \ut(\bpl^i, \bph^i \mid h, \jtruthful)}{(\ut(\stgp^* \mid h) - \ut(\bpl^i, \bph^i \mid h, \jtruthful)) - (\ut(\stgp^* \mid h) - \ut(\bpl^i, \bph^i \mid h, \jdeviate))}\cdot (n - 1) + 1
\end{equation*}
when the denominator $(\ut(\stgp^* \mid h) - \ut(\bpl^i, \bph^i \mid h, \jtruthful)) - (\ut(\stgp^* \mid h) - \ut(\bpl^i, \bph^i \mid h, \jdeviate)) > 0$ or $\kd$ does not exist when the denominator equals to or is less than $0$. 
We will show that this $\kd > \kdq$ in both corner cases. 

The numerator of RHS is  
\begin{align*}
    \ut(\stgp^* \mid h) - \ut(\bpl^i, \bph^i \mid h, \jtruthful) = (1 - \bph^i)\cdot \Ex_{\sigi\sim \vpr_{h}}[\ps(\sigi, \vpr_{h}) - \ps(\sigi, \vpr_{\ell})]. 
\end{align*}

The denominator is 
\begin{align*}
    &\ \ut(\bpl^i, \bph^i \mid h, \jdeviate) - \ut(\bpl^i, \bph^i \mid h, \jtruthful)\\
    =&\ \bph^i \cdot ( (\pr(h\mid h)\cdot (\nbph  - 1) + \pr(\ell \mid h) \cdot \nbpl)\cdot \ps(h, \vpr_h)\\
    &\ + (\pr(h\mid h)\cdot (1 - \nbph) - \pr(\ell \mid h) \cdot \nbpl)\cdot \ps(\ell, \vpr_h))\\
    =&\ ( 1 - \bph^i ) \cdot ( (\pr(h\mid h)\cdot (\nbph  - 1) + \pr(\ell \mid h) \cdot \nbpl)\cdot \ps(h, \vpr_\ell)\\
    &\ + (\pr(h\mid h)\cdot (1 - \nbph) - \pr(\ell \mid h) \cdot \nbpl)\cdot \ps(\ell, \vpr_\ell))\\
    =&\ (\pr(h\mid h)\cdot (1 - \nbph) - \pr(\ell \mid h) \cdot \nbpl)\\
    &\ \cdot (\bph^i  \cdot (\ps(\ell, \vpr_h) - \ps(h, \vpr_h)) + ( 1- \bph^i) \cdot (\ps(\ell, \vpr_\ell) - \ps(h, \vpr_\ell))). 
\end{align*} 

We first consider $\abph \le \bphth$ and $\abpl \le \frac{\pr(h \mid h)}{\pr(\ell \mid h)} \cdot \bphth$. In this case, we pick a deviator $i$ such that $\bph^i \le \abph$. 

Firstly, there must be $(\pr(h\mid h)\cdot (1 - \nbph) - \pr(\ell \mid h) \cdot \nbpl) > 0$ for all sufficiently large $\ag$. This is because $\abph \le \bphth = \Theta(\frac{\maxdps}{\ag - 1})$ and $\abpl \le \frac{\pr(h\mid h)}{\pr(\ell \mid h)}\cdot \bphth$. Moreover, $\nbph \le 2 \abph$ and $\nbpl \le 2\abpl$ by the property of the average. Therefore, for sufficiently large $\ag$ such that $\bphth < \frac14$, $(\pr(h\mid h)\cdot (1 - \nbph) - \pr(\ell \mid h) \cdot \nbpl) > 0$. 

If $\ps(\ell, \vpr_\ell) - \ps(h, \vpr_\ell) \le 0$, there must be $(\ps(\ell, \vpr_h) - \ps(h, \vpr_h)) < 0$. In this case $\ut(\bpl^i, \bph^i \mid h, \jdeviate) - \ut(\bpl^i, \bph^i \mid h, \jtruthful) < 0$, and for any $\kd \ge 2$, $i$'s reward will be strictly lower than the truthful reward. 

Suppose $\ps(\ell, \vpr_\ell) - \ps(h, \vpr_\ell) > 0$. In this case, the condition for  $\ut(\bpl^i, \bph^i \mid h) \ge \ut(\stgp^* \mid h)$ is equivalent to $\kd - 1 \ge$
\begin{smallblock}
    \begin{align*}
    \frac{(1 - \bph^i)\cdot \Ex_{\sigi\sim \vpr_{h}}[\ps(\sigi, \vpr_{h}) - \ps(\sigi, \vpr_{\ell})]}{(\pr(h\mid h)\cdot (1 - \nbph) - \pr(\ell \mid h) \cdot \nbpl)\cdot (\bph^i  \cdot (\ps(\ell, \vpr_h) - \ps(h, \vpr_h)) + ( 1- \bph^i) \cdot (\ps(\ell, \vpr_\ell) - \ps(h, \vpr_\ell)))}\cdot (n - 1).
\end{align*}
\end{smallblock}

We show this lower bound is larger than $\kdq$. 
Recall that 
\begin{equation*}
    \kdq \le \frac{\Ex_{\sigi\sim \vpr_{h}}[\ps(\sigi, \vpr_{h}) - \ps(\sigi, \vpr_{\ell})]}{\pr(h\mid h)\cdot (\ps(\ell, \vpr_\ell) - \ps(h, \vpr_\ell))}\cdot (n - 1) + 1
\end{equation*}

Therefore, it is sufficient to show that 
\begin{smallblock}
   \begin{align*}
    \frac{(1 - \bph^i)\cdot \pr(h\mid h)\cdot (\ps(\ell, \vpr_\ell) - \ps(h, \vpr_\ell))}{(\pr(h\mid h)\cdot (1 - \nbph) - \pr(\ell \mid h) \cdot \nbpl)\cdot (\bph^i  \cdot (\ps(\ell, \vpr_h) - \ps(h, \vpr_h)) + ( 1- \bph^i) \cdot (\ps(\ell, \vpr_\ell) - \ps(h, \vpr_\ell)))} > 1.
\end{align*}
\end{smallblock}

Firstly, given that $\bph^i \le \abph$, there is $\bph^i \le \nbph$. Therefore, 
\begin{align*}
    (\pr(h\mid h)\cdot (1 - \nbph) - \pr(\ell \mid h) \cdot \nbpl) \le &\ \pr(h\mid h)\cdot (1 - \nbph) \\
    \le &\ (1 - \bph^i)\cdot \pr(h\mid h). 
\end{align*}

Secondly, by Lemma~\ref{lem:pr_psr} we have $\ps(\ell, \vpr_h) - \ps(h, \vpr_h) < \ps(\ell, \vpr_\ell) - \ps(h, \vpr_\ell)$. Therefore, $\bph^i\cdot (\ps(\ell, \vpr_h) - \ps(h, \vpr_h)) + ( 1- \bph^i) \cdot (\ps(\ell, \vpr_\ell) - \ps(h, \vpr_\ell)) \le \ps(\ell, \vpr_\ell) - \ps(h, \vpr_\ell)$. 

By combining two parts, we show that every part in the denominator is smaller than the corresponding part in the nominator. Therefore, the threshold for $i$'s reward exceeds the truthful reward $\kd \ge \kdq$, and the equality holds only when $\bph^i = \nbph = \nbpl = 0$. When the equality holds, all other deviators play $(0, 0)$. If $\bpl^i = 0$, the case is covered by Step 2. Otherwise, we consider a different deviator $i$. Then the threshold for the new $i$ will be strictly larger than $\kdq$. Therefore, for all $\kd \le \kdq$, $i$'s reward is strictly lower than the truthful reward. 

We then consider the second area $\abph \ge 1 - \bphth$.
When $\ps(h, \vpr_h) \le \ps(\ell, \vpr_h)$, we we pick an $i$ such that $\bph^i \le \abph$ and compare $i$'s reward with the truthful reward. In this case, $ 0\le \ps(\ell, \vpr_h) - \ps(h, \vpr_h) < \ps(\ell, \vpr_\ell) - \ps(h, \vpr_\ell)$. If $\pr(h\mid h)\cdot (1 - \nbph) - \pr(\ell \mid h) \cdot \nbpl \le 0$, the denominator is non-positive, and for any $\kd \ge 2$, $i$'s reward cannot exceed the truthful reward. If $\pr(h\mid h)\cdot (1 - \nbph) - \pr(\ell \mid h) \cdot \nbpl > 0$, the denominator is positive. Following similar reasoning for $(\abpl, \abph)$ close to $(0, 0)$ shows that the threshold $\kd > \kdq$. 


Otherwise, when $\ps(h, \vpr_h) > \ps(\ell, \vpr_h)$, we compare $i$'s reward with the reward of average strategy $(\abpl, \abph)$. Recall that 
\begin{align*}
    &\ \ut(\bpl^i, \bph^i \mid h) - \ut(\abpl, \abph \mid h)\\ 
    =&\ \frac{\ag-\kd}{\ag - 1} \cdot \func^h(\bph^i - \abph, (0, 1)) + \frac{\kd - 1}{\ag - 1} f(\bph^i - \abph, (\abpl, \abph))\\
    &\ +\frac{1}{\ag-1} (f(\bph^i, (\abpl, \abph)) - f(\bph^i, (\bpl^i, \bph^i)))\\
    =&\ (\bph^i - \abph)\cdot (\Ex_{\sigi\sim \vpr_{h}}[\ps(\sigi, \vpr_{h}) - \ps(\sigi, \vpr_{\ell})] \\
    &\ -\frac{\kd - 1}{\ag - 1} \cdot (\pr(h \mid h)\cdot ( 1- \abph) - \pr(\ell \mid h) \cdot \abpl) \cdot (\dpsh + \dpsl))\\
    &\ + \frac{1}{\ag - 1} \cdot (\pr(h \mid h) \cdot (\abph - \bph^i) + \pr(\ell \mid h) \cdot (\abpl - \bpl^i))\\
    &\ \cdot (\bph^i \cdot (\ps(h, \vpr_h) - \ps(\ell, \vpr_h)) + ( 1- \bph^i) \cdot (\ps(h, \vpr_\ell) - \ps(\ell, \vpr_\ell))). 
\end{align*}

We will assume that $\ag$ is sufficiently large so that $((1 - \bphth) \cdot (\ps(h, \vpr_h) - \ps(\ell, \vpr_h)) + \bphth \cdot (\ps(h, \vpr_\ell) - \ps(\ell, \vpr_\ell)) > 0$. (Recall that $\bphth = \Theta(\frac{1}{\ag - 1})$). 

Note that for the second line, $(\pr(h \mid h)\cdot ( 1- \abph) - \pr(\ell \mid h) \cdot \abpl) \cdot (\dpsh + \dpsl)\le  \pr(h \mid h)\cdot \bphth \cdot (\dpsh + \dpsl).$ Therefore, for sufficiently large $\ag$, 
\begin{align*}
    &\ (\bph^i - \abph)\cdot (\Ex_{\sigi\sim \vpr_{h}}[\ps(\sigi, \vpr_{h}) - \ps(\sigi, \vpr_{\ell})] \\&\ -\frac{\kd - 1}{\ag - 1} \cdot (\pr(h \mid h)\cdot ( 1- \abph) - \pr(\ell \mid h) \cdot \abpl) \cdot (\dpsh + \dpsl)) \ge 0. 
\end{align*}

Let 
\begin{align*}
    \munc_1 = &\ (\Ex_{\sigi\sim \vpr_{h}}[\ps(\sigi, \vpr_{h}) - \ps(\sigi, \vpr_{\ell})] \\
    &\ -\frac{\kd - 1}{\ag - 1} \cdot (\pr(h \mid h)\cdot ( 1- \abph) - \pr(\ell \mid h) \cdot \abpl) \cdot (\dpsh + \dpsl))\\
    \munc_2(\bph^i) = &\ (\bph^i \cdot (\ps(h, \vpr_h) - \ps(\ell, \vpr_h)) + ( 1- \bph^i) \cdot (\ps(h, \vpr_\ell) - \ps(\ell, \vpr_\ell))). 
\end{align*}

Then
\begin{align*}
    &\ \ut(\bpl^i, \bph^i \mid h) - \ut(\abpl, \abph \mid h)\\ =&\ (\bph^i - \abph)\cdot \munc_1 + \frac{1}{\ag - 1} \cdot (\pr(h \mid h) \cdot (\abph - \bph^i) + \pr(\ell \mid h) \cdot (\abpl - \bpl^i)) \cdot \munc_2(\bph^i)\\
    =&\ (\bph^i - \abph)\cdot(\munc_1 - \frac{1}{\ag - 1} \cdot \pr(h \mid h)\cdot \munc_2(\bph^i)) + \frac{1}{\ag - 1} \cdot \pr(\ell \mid h) \cdot (\abpl - \bpl^i) \cdot \munc_2(\bph^i).
\end{align*}

Note that $\munc_1 > 0$ and $\frac{\partial\munc_2}{\partial \bph^i} = \dpsh + \dpsl > 0$. 

If there exists a deviator $i$ such that $\bph^i \le \abph$ and $\ut(\bpl^i, \bph^i \mid h) - \ut(\abpl, \abph \mid h) < 0$, we just pick this $i$. Otherwise, if all deviators $j$ with $\bph^j \le \abph$ has $\ut(\bpl^i, \bph^i \mid h) - \ut(\abpl, \abph \mid h) \ge 0$, then the range of $j$'s strategy $(\bpl^j, \bph^j)$ satisfies
\begin{equation*}
    (\bph^j - \abph)\cdot(\munc_1 - \frac{1}{\ag - 1} \cdot \pr(h \mid h)\cdot \munc_2(\bph^j)) + \frac{1}{\ag - 1} \cdot \pr(\ell \mid h) \cdot (\abpl - \bpl^j) \cdot \munc_2(\bph^j) \ge 0
\end{equation*}

This directly implies that $\bpl^j < \abpl$ for any $j$ with $\bph^j \le \abph$. 

Now we pick another deviator $i$ such that (1) $\bph^i \ge \abph$ and (2) for some deviator $j$ with $\bph^j < \abph$, $(\bph^i - \abph)(\abpl - \bpl^j) \le (\bpl^i - \abpl)(\abph - \bph^j).$ If such $i$ does not exist, then for any $i$ with $\bph^i > \abph$ and any $j$ with $\bph^j \le \abph$, there is $(\bph^i - \abph)(\abpl - \bpl^j) > (\bpl^i - \abpl)(\abph - \bph^j).$ Then, 
\begin{align*}
    0 = &\ \sum_{i\in D, \bph^i > \abph} (\bph^i - \abph) + \sum_{j\in D, \bph^i \le \abph} (\bph^j - \abph)\\
    > &\ \sum_{i\in D, \bph^i > \abph} \frac{\bpl^i - \abpl}{\sum_{j\in D, \bph^i \le \abph} (\bpl^j - \abpl)}\cdot\sum_{j\in D, \bph^i \le \abph} (\bph^j - \abph) + \sum_{j\in D, \bph^i \le \abph} (\bph^j - \abph)\\
    = &\ \frac{\sum_{j\in D, \bph^i \le \abph} (\bph^j - \abph)}{\sum_{j\in D, \bph^i \le \abph} (\bpl^j - \abpl)} \cdot \left(\sum_{i\in D, \bph^i > \abph}(\bpl^i - \abpl) +  \sum_{j\in D, \bph^i \le \abph} (\bpl^j - \abpl) \right)\\
    =&\ 0,
\end{align*}
which is a contradiction. 
Therefore, the deviator $i$ we pick always exists. 

Now we compare $i$'s reward with the reward of the average strategy. Note that since $\bph^i > \abph \ge \bph^j$, $\munc_2(\bph^i) > \munc_2(\bph^j)$. 

\begin{align*}
&\ \ut(\bpl^i, \bph^i \mid h) - \ut(\abpl, \abph \mid h)\\
    =&\ (\bph^i - \abph)\cdot(\munc_1 - \frac{1}{\ag - 1} \cdot \pr(h \mid h)\cdot \munc_2(\bph^i)) + \frac{1}{\ag - 1} \cdot \pr(\ell \mid h) \cdot (\abpl - \bpl^i) \cdot \munc_2(\bph^i)\\
    < &\ (\bph^i - \abph)\cdot(\munc_1 - \frac{1}{\ag - 1} \cdot \pr(h \mid h)\cdot \munc_2(\bph^j)) + \frac{1}{\ag - 1} \cdot \pr(\ell \mid h) \cdot (\abpl - \bpl^i) \cdot \munc_2(\bph^j)\\
    \le &\ \frac{\bpl^i - \abpl}{\abpl - \bpl^j} (\abph - \bph^j) \cdot (\munc_1 - \frac{1}{\ag - 1} \cdot \pr(h \mid h)\cdot \munc_2(\bph^j)) + \frac{1}{\ag - 1} \cdot \pr(\ell \mid h) \cdot (\abpl - \bpl^i) \cdot \munc_2(\bph^j)\\
    =&\ - \frac{\bpl^i - \abpl}{\abpl - \bpl^j} \left(  (\bph^j - \abph)(\munc_1 - \frac{1}{\ag - 1} \pr(h \mid h)\cdot \munc_2(\bph^j)) + \frac{1}{\ag - 1}  \pr(\ell \mid h) \cdot (\abpl - \bpl^j) \cdot \munc_2(\bph^j)\right)\\
    \le &\ 0. 
\end{align*}

Therefore, we find an $i$ such that $\ut(\bpl^i, \bph^i \mid h) < \ut(\abpl, \abph \mid h) \le \ut(\stgp^* \mid h)$.

Consequently, for any $\ag$ satisfying:
\begin{enumerate}
    \item $\bphth = \frac{4\maxdps\cdot (\dpsh + \dpsl + \ps(\ell, \vpr_{\ell}) - \ps(h, \vpr_\ell))}{ (\ag - 1)\cdot (\dpsh +\dpsl)\cdot  \Ex_{\sigi \sim \vpr_h}[\ps(\sigi, \vpr_h) - \pr(\sigi, \vpr_\ell)]} < \frac14$, 
    \item If $\ps(h, \vpr_h) > \ps(\ell, \vpr_h)$, then $\bphth \le \frac{\ps(h, \vpr_h) - \ps(\ell, \vpr_h)}{\dpsh + \dpsl}$, 
    \item $\bphth \le \frac{\Ex_{\sigi \sim \vpr_h}[\ps(\sigi, \vpr_h) - \pr(\sigi, \vpr_\ell)]}{\pr(h\mid h)\cdot (\dpsh + \dpsl)}$,
\end{enumerate}
for any deviation with no more than $\kdq$ deviators and the average strategy in the area of Lemma~\ref{lem:subspace_h}, there exists a deviator $i$ with private signal $h$ whose reward is strictly worse than the truthful reward. Therefore, such deviation cannot succeed. 

Similarly, for the $\ell$ side, for any $\ag$ such that
\begin{enumerate}
    \item $\bphtl = \frac{4\maxdps\cdot (\dpsh + \dpsl + \ps(h, \vpr_{h}) - \ps(\ell, \vpr_h))}{ (\ag - 1)\cdot (\dpsh +\dpsl)\cdot  \Ex_{\sigi \sim \vpr_\ell}[\ps(\sigi, \vpr_\ell) - \pr(\sigi, \vpr_h)]} < \frac14$, 
    \item If $\ps(\ell, \vpr_{\ell}) > \ps(h, \vpr_\ell)$, then $\bphth \le \frac{\ps(\ell, \vpr_{\ell}) - \ps(h, \vpr_\ell)}{\dpsh + \dpsl}$, 
    \item $\bphth \le \frac{\Ex_{\sigi \sim \vpr_\ell}[\ps(\sigi, \vpr_\ell) - \pr(\sigi, \vpr_h)]}{\pr(\ell \mid \ell)\cdot (\dpsh + \dpsl)}$,
\end{enumerate}
for any deviation with no more than $\kdq$ deviators and the average strategy in the area of Lemma~\ref{lem:subspace_l}, there exists a deviator $i$ with private signal $\ell$ whose reward is strictly worse than the truthful reward. Therefore, such deviation cannot succeed. 

Therefore, truthful reporting is an Bayesian $\kdq$-strong equilibrium.

\section{Truthful Reporting is not a coalitional interim equilibrium}
\label{apx:guo}

In this section, we introduce the coalitional interim equilibrium in \citep{guo2022robust}.

\begin{definition}
\label{def:interim}
    Given the set of all admissible deviating groups $\mathcal{D}$, a strategy profile $\stgp$ is an  interim $\mathcal{D}$ equilibrium if there does not exist a group of agent $D \in \mathcal{D}$, a set of types $\sigi_D = (\sigi_i)_{i \in D}$, and a different strategy profile $\stgp' = (\stg'_{\sag})$ such that 
    \begin{enumerate}
    \item for all agent $i \not \in D$, $\stg'_{\sag} = \stg_{\sag}$; 
    \item for all $\sag\in D$, $ \ut_i(\stgp' \mid \sigi_D) > \ut_i(\stgp \mid \sigi_D)$,
\end{enumerate}
where $\ut_i(\stgp \mid \sigi_D)$ is $i$'s expected utility conditioned on he/she knows the types of all the deviators in $D$. 
\end{definition}

When $\mathcal{D} = \{\{i\}\mid i \in [n]\}$ contains only singletons, interim $\mathcal{D}$ equilibrium is exactly the Bayesian Nash equilibrium. On the other hand, we show that for any $\mathcal{D}$ containing a group of at least two agents, truthful reporting fails to be an interim $\mathcal{D}$ equilibrium. 

\begin{prop}
    In the peer prediction mechanism, assume $\ps(h, \vpr_h) > \ps(\ell, \vpr_h)$ and $\ps(\ell, \vpr_\ell) > \ps(h, \vpr_\ell)$. Then for any constant $d \ge 2$ any $\mathcal{D}$ such that there exists a $D\in \mathcal{D}$ with $|D|=d$, and for all sufficiently large $\ag$, truthful reporting is NOT an interim $\mathcal{D}$ equilibrium.
\end{prop}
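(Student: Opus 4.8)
The plan is to exhibit, for the size-$d$ group $D\in\mathcal D$ guaranteed by the hypothesis, a type realization $\sigi_D=(\sigi_i)_{i\in D}$ and a strategy profile $\stgp'$ that differs from the truthful profile $\stgp^*$ only on the members of $D$ and under which \emph{every} member of $D$ has strictly larger interim expected utility conditioned on $\sigi_D$; by Definition~\ref{def:interim} this already shows $\stgp^*$ is not an interim $\mathcal D$ equilibrium. The feature I will exploit is that the members of $D$ condition on the \emph{whole} type profile $\sigi_D$: if one member $i$ keeps reporting truthfully while another member $j$ bends its report toward $i$'s report, then $i$'s comparison reward from $j$ strictly rises --- from $\ps(\ell,\vpr_h)$ to $\ps(h,\vpr_h)$ when $i$ has signal $h$ and $j$ switches from $\ell$ to $h$, a gain by the first hypothesis $\ps(h,\vpr_h)>\ps(\ell,\vpr_h)$; or from $\ps(h,\vpr_\ell)$ to $\ps(\ell,\vpr_\ell)$ when $i$ has signal $\ell$ and $j$ switches from $h$ to $\ell$, a gain by the second hypothesis $\ps(\ell,\vpr_\ell)>\ps(h,\vpr_\ell)$ --- at no other cost to $i$. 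I will use two mirror-image candidate deviations and argue that at least one succeeds.

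\emph{Candidate A.} Take $\sigi_D$ with one member of $D$ having signal $\ell$ and the other $d-1$ having signal $h$; let the $\ell$-member report $h$ and all other members report truthfully (so all of $D$ reports $h$). For each of the $d-1$ truthful $h$-members, the report, the comparison rewards with all non-deviators, and the comparison rewards with the other $h$-members are unchanged, while the reward from the lying $\ell$-member rises from $\ps(\ell,\vpr_h)$ to $\ps(h,\vpr_h)$, so the net interim gain conditioned on $\sigi_D$ is $\frac{1}{n-1}\big(\ps(h,\vpr_h)-\ps(\ell,\vpr_h)\big)>0$. For the lying $\ell$-member, each of its $n-d$ comparisons with non-deviators changes by $\Delta:=\Ex\!\big[\ps(\Sigrv_j,\vpr_h)-\ps(\Sigrv_j,\vpr_\ell)\mid\sigi_D\big]$ (it now predicts with $\vpr_h$ rather than $\vpr_\ell$), and each of its $d-1$ comparisons with other deviators changes by $\ps(h,\vpr_h)-\ps(h,\vpr_\ell)>0$ (Lemma~\ref{lem:pr_psr}), so its net interim gain is $\frac{n-d}{n-1}\Delta+\frac{d-1}{n-1}\big(\ps(h,\vpr_h)-\ps(h,\vpr_\ell)\big)$, which is positive whenever $\Delta\ge 0$. \emph{Candidate B} is the mirror image: $\sigi_D$ has one member with signal $h$ and $d-1$ with signal $\ell$, the $h$-member reports $\ell$ and everyone else reports truthfully; each $\ell$-member then gains $\frac{1}{n-1}\big(\ps(\ell,\vpr_\ell)-\ps(h,\vpr_\ell)\big)>0$, and the lying $h$-member gains $-\frac{n-d}{n-1}\Delta'+\frac{d-1}{n-1}\big(\ps(\ell,\vpr_\ell)-\ps(\ell,\vpr_h)\big)$, where $\Delta'$ is the same conditional expectation for the type profile of B; this is positive whenever $\Delta'\le 0$.

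It remains to show that $\Delta\ge 0$ or $\Delta'\le 0$ always holds, after which one of the two candidates yields a successful deviation for every $n\ge d$, in particular for all sufficiently large $n$. Writing $p$ for the posterior $\Pr(\Sigrv_j=h\mid\sigi_D)$ of a non-deviator's signal, both $\Delta$ and $\Delta'$ equal $p\big(\ps(h,\vpr_h)-\ps(h,\vpr_\ell)\big)+(1-p)\big(\ps(\ell,\vpr_h)-\ps(\ell,\vpr_\ell)\big)$, whose coefficient on $p$ is positive and whose value at $p=0$ is negative (both by Lemma~\ref{lem:pr_psr}); hence it is strictly increasing in $p$, with a unique threshold $p^*$ such that $\Delta\ge 0\iff p\ge p^*$. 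Let $p_A$ be the value of $p$ under Candidate A's profile ($d-1$ high signals, one low) and $p_B$ under Candidate B's (one high, $d-1$ low). If $p_B\le p^*$ then $\Delta'\le 0$ and B succeeds; otherwise $p_B>p^*$, and since conditioning on strictly more high signals does not decrease the posterior of a high signal, $p_A\ge p_B>p^*$, so $\Delta\ge 0$ and A succeeds. For $d=2$ the two profiles coincide, so $p_A=p_B$ and the dichotomy is immediate and needs no largeness of $n$. The substantive step --- and the main obstacle --- is the case $d\ge 3$: one must upgrade the stated pairwise assumption $\pr(h\mid h)>\pr(h\mid\ell)$ to the monotonicity of $\Pr(\Sigrv_j=h\mid\cdot)$ in the number of high signals conditioned on, i.e.\ a positive-association property of the symmetric prior $\prQ$. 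Granting this, everything else is the elementary reward bookkeeping above, which uses only the two displayed scoring-rule inequalities together with Lemma~\ref{lem:pr_psr}.
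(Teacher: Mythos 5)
Your reward bookkeeping for both candidates is correct, and the overall strategy (pick a type realization for $D$ with both signals present, have the deviators pool on one report, and check that everyone in $D$ strictly gains) is the same as the paper's. But the step you yourself flag as the "main obstacle" is a genuine gap, not a routine omission. For $d\ge 3$ your dichotomy needs $p_A\ge p_B$, i.e.\ that $\Pr(\Sigrv_j=h\mid \sigi_D)$ is monotone in the number of $h$'s in $\sigi_D$. The model only assumes a symmetric prior with the \emph{pairwise} condition $\pr(h\mid h)>\pr(h\mid \ell)$ and non-degeneracy; this does not imply positive association at higher orders (a finitely exchangeable prior is a mixture over urn compositions, and posteriors conditioned on $d-1$ signals need not be monotone in the count of $h$'s). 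Since $\Delta$ and $\Delta'$ are computed under two \emph{different} type realizations, without that monotonicity it is possible that $\Delta<0$ and $\Delta'>0$ simultaneously, and then neither candidate's lying member is guaranteed a gain. So as written the proof is complete only for $d=2$, where the two realizations coincide.

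The paper's proof avoids the cross-realization comparison entirely: it fixes a \emph{single} realization $\sigi_D$ with $d_1>0$ high and $d_2>0$ low signals, and lets \emph{all} deviators pool on whichever report maximizes the expected score against the $n-d$ truthful outsiders under that one posterior (with a tie-breaking rule when the two are equal). The trichotomy is then over the sign of one quantity and is automatically exhaustive; the two hypotheses $\ps(h,\vpr_h)>\ps(\ell,\vpr_h)$ and $\ps(\ell,\vpr_\ell)>\ps(h,\vpr_\ell)$ guarantee the majority-report agents strictly gain from the other deviators, while the agents who flip their report lose at most $O(d/\ag)$ from the other deviators against a constant gain from the outsiders, which is where "sufficiently large $\ag$" enters. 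You could repair your argument the same way: keep one mixed realization, offer both pooled reports as candidates under that \emph{same} realization, and note that $\Delta\ge 0$ or $\Delta<0$ trivially holds for a single $\Delta$ — but then the minority-signal deviators' gain from the outsiders is only guaranteed when they flip (your current candidates have them gain from the coalition instead), so you inherit the paper's need for large $\ag$. Either close the monotonicity claim under an explicit extra assumption on $\prQ$, or restructure as above.
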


\begin{proof}
    Let $D\in \mathcal{D}$ such that $|D| = d$ be a deviate group. Suppose there are $d_1 > 0$ agents with signal $h$ and $d_2 > 0$ agents with signal $\ell$. $d_1 + d_2 = |D|$. 

    Now consider the expected utility when every agent reports truthfully. For agent $i$ with signal $h$, $i$'s reward from other deviators is $\frac{d_1 -1}{\ag -1}\cdot \ps(h, \vpr_h) + \frac{d_2}{\ag - 1}\cdot \ps(\ell, \vpr_h)$. And $i$'s reward from truthful reporter is $\frac{\ag - |D|}{\ag - 1} \cdot (\pr(h\mid \sigi_D)\cdot \ps(h, \vpr_h) + \pr(\ell \mid \sigi_D)\cdot \ps(\ell, \vpr_h))$. Similarly, with agent $i$ with signal $\ell$, $i$'s reward from other deviators is $\frac{d_1}{\ag -1}\cdot \ps(h, \vpr_\ell) + \frac{d_2 - 1}{\ag - 1}\cdot \ps(\ell, \vpr_\ell)$. And $i$'s reward from truthful reporter is $\frac{\ag - |D|}{\ag - 1} \cdot (\pr(h\mid \sigi_D)\cdot \ps(h, \vpr_\ell) + \pr(\ell \mid \sigi_D)\cdot \ps(\ell, \vpr_\ell))$.

    Now we consider the deviating strategy. If $\pr(h\mid \sigi_D)\cdot \ps(h, \vpr_h) + \pr(\ell \mid \sigi_D)\cdot \ps(\ell, \vpr_h) > \pr(h\mid \sigi_D)\cdot \ps(h, \vpr_\ell) + \pr(\ell \mid \sigi_D)\cdot \ps(\ell, \vpr_\ell)$, then all the deviators report $h$. If $\pr(h\mid \sigi_D)\cdot \ps(h, \vpr_h) + \pr(\ell \mid \sigi_D)\cdot \ps(\ell, \vpr_h) < \pr(h\mid \sigi_D)\cdot \ps(h, \vpr_\ell) + \pr(\ell \mid \sigi_D)\cdot \ps(\ell, \vpr_\ell)$, then all the deviators report $\ell$. If $\pr(h\mid \sigi_D)\cdot \ps(h, \vpr_h) + \pr(\ell \mid \sigi_D)\cdot \ps(\ell, \vpr_h) = \pr(h\mid \sigi_D)\cdot \ps(h, \vpr_\ell) + \pr(\ell \mid \sigi_D)\cdot \ps(\ell, \vpr_\ell)$, all the deviators report $h$ if $\ps(h, \vpr_h)\ge \ps(\ell, \vpr_\ell)$ and report $\ell$ otherwise. 
    We show that in this case, the deviation succeeds. 

    \noindent\textbf{Case 1.} Suppose $\pr(h\mid \sigi_D)\cdot \ps(h, \vpr_h) + \pr(\ell \mid \sigi_D)\cdot \ps(\ell, \vpr_h) > \pr(h\mid \sigi_D)\cdot \ps(h, \vpr_\ell) + \pr(\ell \mid \sigi_D)\cdot \ps(\ell, \vpr_\ell)$.
    We consider the changes on the expected utility after the deviators switch from truthful reporting to the deviating strategy. 
    Then for agents with signal $h$, the expected utility from the truthful reporters is unchanged, and the expected utility from other deviators becomes $\frac{d - 1}{\ag - 1}\ps(h, \vpr_h)$, which has been strictly increased. For agents with signal $\ell$, the expected utility from the truthful reporters strictly increases by a constant factor, while the changes in expected utility from other deviators is $\Theta(\frac{d}{\ag}) = \frac{1}{\ag})$. Therefore, the sufficiently large $\ag$, the expected utility for agents with signal $\ell$ also strictly increases. 

    \noindent\textbf{Case 2} follows similar reasoning to Case 1.

    \noindent\textbf{Case 3}. Suppose $\pr(h\mid \sigi_D)\cdot \ps(h, \vpr_h) + \pr(\ell \mid \sigi_D)\cdot \ps(\ell, \vpr_h) = \pr(h\mid \sigi_D)\cdot \ps(h, \vpr_\ell) + \pr(\ell \mid \sigi_D)\cdot \ps(\ell, \vpr_\ell)$ In this case, for both type of agents, the expected utility from truthful reporters is unchanged, and the expected utility from other deviators strictly increases. 

    Therefore, we show that in all cases, there exists a group of agents in $\mathcal{D}$ wish to deviate. Therefore, truthful reporting is not an interim $\mathcal{D}$ equilibrium.  
\end{proof}
}
\end{document}